\begin{document}
	
\title{Vertex Downgrading to Minimize Connectivity \thanks{This material is based upon research supported in part by the U. S. Office of Naval Research under award number N00014-18-1-2099}
}
\institute{}
	
	\author{\textsc{
			Hassene Aissi\thanks{Paris Dauphine University.  {\tt aissi@lamsade.dauphine.edu}},
			Da Qi Chen\thanks{Carnegie Mellon University.  {\tt daqic@andrew.cmu.edu}},
			R. Ravi\thanks{Tepper School of Business, Carnegie Mellon University. {\tt ravi@cmu.edu}}}}
	
	\maketitle
	
\begin{abstract}
We study the problem of interdicting a directed graph by deleting nodes with the goal of minimizing the local edge connectivity of the remaining graph from a given source to a sink.
We show hardness of obtaining strictly unicriterion approximations for this basic vertex interdiction problem.
We introduce and study a general downgrading variant of the interdiction problem where the capacity of an arc is a function of the subset of its endpoints that are downgraded, and the goal is to minimize the downgraded capacity of a minimum source-sink cut subject to a node downgrading budget.
This models the case when both ends of an arc must be downgraded to remove it, for example.
For this generalization, we provide a bicriteria $(4,4)$-approximation that downgrades nodes with total weight at most 4 times the budget and provides a solution where the downgraded connectivity from the source to the sink is at most 4 times that in an optimal solution. WE accomplish this with an LP relaxation and round using a ball-growing algorithm based on the LP values.
We further generalize the downgrading problem to one where each vertex can be downgraded to one of $k$ levels, and the arc capacities are functions of the pairs of levels to which its ends are downgraded. We generalize our LP rounding to get $(4k,4k)$-approximation for this case.
\end{abstract}
	

\section{Introduction}

Interdiction problems arise in evaluating the robustness of infrastructure and networks. For an optimization problem on a graph, the interdiction problem can be formulated as a game consisting of two players: an attacker and a defender. Every edge/vertex of the graph has an associated interdiction cost and the attacker interdicts the network by modifying the edges/vertices subject to a budget constraint. The defender solves the problem on the modified graph. The goal of the attacker is to hamper the defender as much as possible.
Ford and Fulkerson initiated the study of interdiction problems with the maximum flow/minimum cut theorem~\cite{chestnut2017hardness,harris55,schrijver2002history}.
Other examples of interdiction objectives include matchings~\cite{Zen10}, minimum spanning trees~\cite{LS17,Zen15}, shortest paths~\cite{golden1978problem,israeli2002shortest}, $st$-flows~\cite{Phillips:1993,Wood93,Zen10DAM} and global minimum cuts~\cite{Zen14,chestnut2016interdicting}.

Most of the interdiction literature today involves the interdiction of edges while the study of interdicting vertices has received less attention (e.g.\cite{Zen10,Zen10DAM}).
The various applications for these interdiction problems, including drug interdiction, hospital infection control, and protecting electrical grids or other military installations against terrorist attacks, all naturally motivate the study of the vertex interdiction variant.
In this paper, we focus on vertex interdiction problems related to the minimum $st$-cut (which is equal to the maximum $st$-flow and hence also termed network flow interdiction or network interdiction in the literature).


For $st$-cut vertex interdiction problems, the set up is as follows. Consider a directed graph $G=(V(G),A(G))$ with $n$ vertices and $m$ arcs, an arc cost function $c:A(G) \rightarrow \mathbb{N}$, and an interdiction cost function $r:V(G) \setminus \{s,t\} \rightarrow \mathbb{N}$ defined on the set of vertices $V(G) \setminus \{s,t\} $. A set of arcs $F\subseteq A(G)$ is an $st$-cut if $G\backslash F$ no longer contains a directed path from $s$ to $t$. Define the cost of $F$ as $c(F)= \Sigma_{e\in F} c(e)$.
For any subset of vertices $X \subseteq V(G) \setminus\{s,t\}$, we denote its interdiction cost by $r(X) = \sum_{v \in X} r(v)$. Let $\lambda_{st}(G\backslash X)$ denote the cost of a minimum $st$ cut in the graph $G\backslash X$.

\begin{problem}
	\textbf{Weighted Network Vertex Interdiction Problem (WNVIP) and its special cases.} Given two specific vertices $s$ (source) and $t$ (sink) in $V(G)$ and interdiction budget $b \in \mathbb{N}$, the Weighted Network Vertex Interdiction Problem {\bf (WNVIP)} asks to find an interdicting vertex set $X^* \subseteq V(G)\setminus \{s,t\}$ such that $\sum_{v \in X^*} r(v)\leq b$ and $\lambda_{st}(G\backslash X^*)$ is minimum. The special case of WNVIP where all the interdiction costs are unit will be termed {\bf NVIP}, while the further special case when even the arc costs are unit will be termed {\bf NVIP with unit costs}.

\end{problem}

In this paper, we define and study a generalization of the network flow interdiction problem in digraphs called vertex downgrading.
Since interdicting vertices can be viewed as attacking a network at a vertex, it is natural to consider a variant where attacking a node does not destroy it completely but partially weakens its structural integrity.
In terms of minimum $st$-cuts, one interpretation could be whenever a vertex is interdicted, instead of removing it from the network we partially reduce the cost of its incident arcs. In this context, we say that a vertex is \textit{downgraded}.
Specifically, consider a directed graph $G=(V(G), A(G))$ and a downgrading cost $r:V(G) \setminus \{s,t\} \rightarrow \mathbb{N}$. For every arc $e=uv\in A(G)$, there exist four associated nonegative costs $c_e, c_{eu}, c_{ev}, c_{euv}$, respectively representing the cost of arc $e$ if 1) neither $\{u, v\}$ are downgraded, 2) only $u$ is downgraded, 3) only $v$ is downgraded, and 4) both $\{u, v\}$ are downgraded. Note that these cost functions are independent of each other so downgrading vertex $v$ might affect each of its incident arcs differently. However, we do impose the following conditions: $c_e\ge c_{eu}\ge c_{euv}$ and  $c_e\ge c_{ev}\ge c_{euv}$. These inequalities are not restrictive and are natural to impose since the more endpoints of an arc are downgraded, the lower the resulting arc should cost. Given a downgrading set $Y\subseteq V(G) \setminus\{s,t\}$, define $c^Y:A(G)\to \mathbb{R}_+$ to be the arc cost function representing  the cost of cutting $e$ after downgrading $Y$.

\begin{center}
\begin{tabular}{|c||c|c|c|c|}
	\hline
	& $u, v\notin Y$ & $u\in Y, v\notin Y$ & $u\notin Y, v\in Y$ & $u, v\in Y$\\ \hline
	$c^Y(e)= $ & $c_e$ & $c_{eu}$ & $c_{ev}$ & $c_{euv}$\\ \hline
\end{tabular}
\end{center}

Given a set of arcs $F\subseteq A(G)$, we define $c^Y(F)=\Sigma_{e\in F} c^Y(e)$.


\begin{problem}
	\textbf{Network Vertex Downgrading Problem (NVDP).} Let $G=(V(G), A(G))$ be a directed graph with a source $s$ and a sink $t$. For every arc $e=uv$, we are given  non-negative costs $c_e, c_{eu}, c_{ev}, c_{euv}$ as defined above. Given a (downgrading) budget $b$, find a set $Y\subset V(G) \setminus\{s,t\}$ and an $st$-cut $F\subseteq A(G)$ such that $\Sigma_{v\in Y} r(v)\le b$ and minimizes $c^Y(F)$.
\end{problem}

While it is not immediately obvious as it is for WNVIP,  we can still show that detecting a zero solution for NVDP is easy. The proof of the following theorem is in Appendix \ref{app:nvdp}.

\begin{theorem}
	Given an instance of NVDP on graph $G$ with budget $b$, there exists a polynomial time algorithm to determine if there exists $Y\subseteq V(G)$ and an $st$-cut $F\subseteq A(G)$ such that $\Sigma_{v\in Y} r(v)\le b$ and $c^Y(F)=0$.
	\label{NVDPdetect}
\end{theorem}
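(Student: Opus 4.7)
The plan is to reduce the decision question to a minimum $st$-cut in an auxiliary directed graph $H$ built from $G$; since min $st$-cut is solvable in polynomial time via max-flow, this yields the claimed polynomial algorithm. As a preprocessing step, I use the monotonicity $c^Y(e)\ge c_{euv}$, which implies that a zero-cost cut cannot exist unless the $st$-cut of $G$ with capacities $c_{euv}$ is already zero; I check this with a single min-cut computation and output ``no'' if its value is positive.

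Assuming the preprocessing passes, I construct $H$ by the standard vertex-splitting trick. Each internal vertex $v \in V(G)\setminus\{s,t\}$ is replaced by a pair $v^-,v^+$ joined by an internal arc of capacity $r(v)$, and each arc $e=(u,v)$ of $G$ is represented by one or more $H$-arcs of capacity $0$ or $\infty$ between these split vertices, with placement depending on which of $c_e,c_{eu},c_{ev},c_{euv}$ vanish. Arcs with $c_e=0$ get a $0$-capacity shortcut; arcs needing only a single endpoint in $Y$ (i.e.~$c_{eu}=0<c_e$ or $c_{ev}=0<c_e$) get the standard $(u^+,v^-)$ of capacity $\infty$, since cutting the required endpoint's internal arc places it on the side that removes the infinity constraint; arcs needing both endpoints ($c_{euv}=0<c_{eu},c_{ev}$) require a small gadget through an auxiliary vertex; and arcs with $c_{euv}>0$ get $(u^+,v^-)$ of capacity $\infty$ without any gadget, which contributes $+\infty$ to the cut exactly in the case already ruled out by the preprocessing. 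The intended correspondence is that a valid downgrading $Y$ producing a zero-cost cut in $G$ yields an $st$-cut of $H$ of capacity exactly $r(Y)$, and conversely.

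The correctness argument is two-sided. In one direction, given $Y$ with a zero-cost $st$-cut, I assign $v^-\in S$ iff $v$ is reachable from $s$ in the live subgraph of arcs with $c^Y(e)>0$ or $v\in Y$, and $v^+\in S$ iff $v$ is reachable and $v\notin Y$; one then verifies case by case that no $\infty$-arc crosses from $S$ to $T$ under this assignment, so the $H$-cut has capacity exactly $r(Y)$. In the other direction, given any $H$-cut of finite capacity, I take $Y$ to be the vertices whose internal arc lies in the cut and argue inductively along any candidate $st$-path of $G$ that finiteness of the cut forces some arc on that path to be dead under this $Y$. Once the equivalence is established, computing the minimum $st$-cut of $H$ by a single max-flow and comparing its value to $b$ completes the decision procedure.

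The hard part will be engineering the Case-$4$ gadget so that the correspondence is exact: a naive representation of ``both endpoints needed'' by $\infty$-arcs in series or in parallel lets the min cut cheat by paying for only one endpoint's internal arc; a careful construction with an auxiliary vertex and properly directed zero- and infinity-capacity arcs is needed to force both internal arcs at $u$ and $v$ to be cut simultaneously whenever the arc's $H$-representation must be broken. All other cases reduce cleanly to standard min-cut/max-flow reasoning.
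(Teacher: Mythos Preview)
Your reduction-to-min-cut idea is reasonable in spirit, but the per-arc gadgets you sketch are incorrect, and the gap is not confined to the ``hard'' $A_{l0r}$ case you flag.

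Consider an arc $e=(u,v)$ with $c_{eu}=0<c_e,\,c_{ev}>0$ (your ``single endpoint'' case, needing $u$). You give it the standard $(u^+,v^-)$ arc of capacity $\infty$. Now take the path $s\to u\to v\to w\to t$ where $su$ and $wt$ are of the ``never'' type ($c_{euv}>0$), $uv$ needs $u$, and $vw$ needs $w$. The only zero-cost cuts require $u\in Y$ or $w\in Y$, so the answer is $\min(r(u),r(w))$; putting $v\in Y$ kills neither $uv$ nor $vw$. But in your $H$ all four arcs become $\infty$-arcs $(x^+,y^-)$, so the min $st$-cut is $\min(r(u),r(v),r(w))$, which is wrong whenever $r(v)$ is smallest. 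The problem is that the standard split allows blocking at \emph{either} endpoint, whereas these arc types insist on a specific one. The same issue bites $A_{0r}$ symmetrically. And for $A_{l0r}$ you correctly sense trouble but never produce a gadget; a single auxiliary vertex with $0/\infty$ arcs cannot locally encode ``both $u$ and $v$ must be in $Y$'' while composing correctly with neighbouring arcs of other types.

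The paper avoids per-arc gadgets entirely. Its key idea, first isolated in the special case where every arc needs both endpoints, is that a downgrading set $Y$ yielding a zero-cost cut is exactly a vertex $st$-cut in the square graph $G^2$: every $st$-path must contain two consecutive vertices of $Y$. For general NVDP the paper subdivides ``never'' arcs with an $\infty$-cost dummy vertex, then for every length-two path $uvw$ adds a shortcut arc $uw$ precisely when downgrading $v$ alone kills neither $uv$ nor $vw$ (this depends on the pair of arc types, not on each arc separately). A min-weight vertex $st$-cut in this augmented graph then equals the cheapest $Y$ admitting a zero-cost cut. So the missing insight is that the right object to shortcut over is a \emph{pair} of consecutive arcs, not a single arc; once you have that, everything reduces to one vertex-cut computation.
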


First we present some useful reductions between the above  problems.

\begin{enumerate}
\item  In the NFI (Network Flow Interdiction) problem defined in \cite{chestnut2017hardness}, the given graph is undirected instead of directed and the adversary interdicts edges instead of vertices. The goal is to minimize the cost of the minimum $st$-cut after interdiction. NFI can be reduced to the undirected version of WNVIP (where the underlying graph is undirected). Simply subdivide every undirected edge $e=uv$ with a vertex $v_e$. The interdiction cost of $v_e$ remains the same as the interdiction cost of $e$ while all original vertices have an interdiction cost of $\infty$ (or a very large number). The cut cost of the edges $uv_e, v_ev$ are equal to the original cost of cutting the edge $e$.
\item The undirected version of WNVIP can be reduced to the (directed) WNVIP by replacing every edge with two parallel arcs going in opposite directions. Each new arc has the same cut cost as the original edge.
\item WNVIP is a special case of NVDP with costs $c_{eu}=c_{ev}=c_{euv}=0$ for all $e=uv$.
\end{enumerate}
The first two observations above imply that any hardness result for NFI in \cite{chestnut2017hardness} also applies to WNVIP.
Based on the second observation, we prove our hardness results for the (more specific) undirected version of WNVIP. As a consequence of the third observation, all of these hardness results also carry over to the more general NVDP.

Our work also studies the following further generalization of NVDP. Every vertex has $k$ possible levels that it can be downgraded to by paying different downgrading costs. Every arc has a cutting cost depending on what level its endpoints were downgraded to. More precisely, for each level $0\le i, j\le k$, let $r_i(v)$ be the interdiction cost to downgrade $v$ to level $i$ and let $c_{i, j}(e)$ be the cost of cutting arc $e=uv$ if $u, v$ were downgraded to levels $i, j$ respectively. We assume that $0=r_0(v)\le r_1(v)\le ...\le r_k(v)$ since higher levels of downgrading should cost more and $c_{i, j}(e)\ge c_{i', j'}(e)$ if $i\le i', j\le j'$ since the more one downgrades, the easier it is to cut the incident arcs. Then, given a map $L:V(G)\to \{0, ..., k\}$, representing which level to downgrade each vertex to, one can talk about the cost of performing this downgrading: $r^L:=\Sigma_{v\in V(G)} r_{L(v)}(v)$, and the cost of a cut $F$ after downgrading according to $L$: $c^L(F):= \Sigma_{uv\in F} c_{L(u), L(v)}(uv)$. Now, we can formally define the most general problem we address.
\begin{problem}
	\textbf{Network Vertex Leveling Downgrading Problem (NVLDP).} Let $G=(V(G), A(G))$ be a directed graph with a source $s$ and a sink $t$. For every vertex $v$ and $0\le i \le k$, we have non-negative downgrading costs $r_i(v)$. For every arc $e=uv$ and levels $0\le i, j \le k$, we are given  non-negative cut costs $c_{i, j}(e)$. Given a (downgrading) budget $b$, find a map $L:V(G)\to \{0, ..., k\}$ and an $st$-cut $F\subseteq A(G)$ such that $r^L\le b$ and minimizes $c^L(F)$.
\end{problem}

Note that when $k=1$ we have NVDP.





\subsection*{Related Works.}

\begin{definition}
An \textbf{$(\alpha, \beta)$ bicriteria approximation} for the interdiction (or downgrading) problem returns a solution that violates the interdiction budget $b$ by a factor of at most $\beta$ and provides a final cut (in the interdicted graph) with cost at most $\alpha$ times the optimal cost of a minimum cut in a solution of interdiction budget at most $b$.
\end{definition}


Chestnut and Zenklusen~\cite{chestnut2017hardness} study the network flow interdiction problem (NFI), which is the undirected and edge interdiction version of WNVIP.
NFI is also known to be essentially equivalent to the Budgeted minimum $st$ cut problem~\cite{papadimitriou2000approximability}.
NFI is also a recasting of the $k$-route $st$-cut problem~\cite{chuzhoy2016,guru2015}, where a minimum cost set of edges must be deleted to reduce the node or edge connectivity between $s$ and $t$ to be $k$.
The results of Chestnut and Zenklusen, and Chuzhoy et al.~\cite{chuzhoy2016} show that an $(\alpha, 1)$-approximation for WNVIP implies a $2(\alpha)^2$-approximation for the notorious Densest k-Subgraph (DkS) problem.
The results of Chuzhoy et al.~\cite{chuzhoy2016} (Theorem 1.9 and Appendix section B) also imply such a hardness for NVIP even with unit edge costs.
Furthermore, Chuzhoy et al.~\cite{chuzhoy2016} also show that there is no $(C,1 + \gamma_C)$-bi-criteria approximation for WNVIP assuming Feige’s Random $\kappa$-AND Hypothesis (for every $C$ and sufficiently small constant $\gamma_C$ ). For example, under this hypothesis, they show hardness of $(\frac{11}{10} - \epsilon,\frac{25}{24} - \epsilon)$ approximation for WNVIP.

Chestnut and Zenklusen give a $2(n-1)$ approximation algorithm for NFI for any graph with $n$ vertices.
In the special case where the graph is planar, Philips~\cite{Phillips:1993} gave an FPTAS and Zenklusen~\cite{Zen10DAM} extended it  to handle the vertex interdiction case.

Burch \textit{et al.}~\cite{burch03} give a $(1+\varepsilon,1), (1, 1+\frac{1}{\varepsilon})$ pseudo-approximation algorithm for NFI. Given any $\varepsilon >0$, this algorithm returns either a $(1+\epsilon)$-approximation, or a solution violating the budget by a factor of
$1+\frac{1}{\epsilon}$ but has a cut no more expensive than the optimal cost. However, we do not know which case occurs \textit{a priori}.
In this line of work, Chestnut and Zenklusen~\cite{chestnut2016interdicting} have extended the technique of Burch et al. to derive pseudo-approximation algorithms for a larger class of NFI problems that have good LP descriptions (such as duals that are box-TDI).
Chuzhoy et al.~\cite{chuzhoy2016} provide an alternate proof of this result by subdividing edges with nodes of appropriate costs.


\subsection*{Our Contributions.}

\begin{enumerate}
\item We define and initiate the study of multi-level node downgrading problems by defining the Network Vertex Leveling Downgrading Problem (NVLDP) and provide the first results for it. This problem extends the study in~\cite{Zen10DAM} of the vertex interdiction problem in order to consider a richer set of interdiction functions.

\item Chuzhoy et al.~\cite{chuzhoy2016}, and Chestnut and Zenklusen \cite{chestnut2017hardness} showed that a $(\alpha(n), 1)$-approximation for NFI leads to a $2(\alpha(n^2))^2$-approximation for the Densest k-Subgraph problem, implying a similar conclusion for WNVIP.
    The results of Chuzhoy et al.~\cite{chuzhoy2016} also imply such a hardness for NVIP even with unit costs.
    Using similar techniques but more directly, we show a $(1, \beta(n))$-approximation for WNVIP also leads to a $2(\beta(n^2))^2$-approximation for DkS.
    These results show that designing unicriterion approximations for WNVIP are at least ``DkS hard".
    Along with the harness of $(\frac{11}{10} - \epsilon,\frac{25}{24} - \epsilon)$ approximation for WNVIP due to Chuzhoy et al.~\cite{chuzhoy2016} mentioned above, these suggest that we focus on bicriteria approximation results.

    We sharpen our $(1, \beta)$ hardness result further and show that it is also ``DkS hard" to obtain a $(1, \beta)$-approximation for NVIP and NVIP with unit costs. Note that this is in sharp contrast to the edge interdiction case. NFI with unitary interdiction cost and unitary cut cost can be solved by first finding a minimum cut and then interdicting $b$ edges in that cut~\cite{Zen14}. (Details are in Appendix~\ref{sec:hardness})

\item  Burch \textit{et al.}~\cite{burch03} gave a polynomial time algorithm that finds a $(1+1/\epsilon, 1)$ or $(1, 1+\epsilon)$-approximation for any $\epsilon >0$ for WNVIP in digraphs. This was reproved more directly by Chuzhoy et al~\cite{chuzhoy2016} by converting both interdiction and arc costs into costs on nodes. We show that this strategy can also be extended to give a simple $(4,4(1+\epsilon))$-bicriteria approximation for the multiway cut generalization in directed graphs and a $(2(1+\epsilon)\ln k, 2(1+\epsilon)\ln k)$-approximation for the multicut vertex interdiction problem in undirected graphs, for any $\epsilon >0$. (Details are in Appendix~\ref{sec:vip})

\item	For the downgrading variant NVDP, we show that the problem of detecting whether there exists a downgrading set that gives a zero cost cut can be solved in polynomial time. We then design a new LP rounding approximation algorithm that provides a $(4, 4)$-approximation to NVDP. We use a carefully constructed auxiliary graph so that the level-cut algorithm based on ball growing for showing integrality of $st$-cuts in digraphs (See. e.g.~\cite{Chuzoy-icam}) can be adapted to choose nodes to downgrade and arcs to cut based on the LP solution. (Section~\ref{sec:nvdp}, Appendix~\ref{app:nvdp})

\item For the most general version NVLDP with $k$ levels of downgrading each vertex and $k^2$ possible costs of cutting an edge, we generalize the LP rounding method for NVDP to give a $(4k,4k)$-approximation (Section~\ref{sec:nvldp}).

\end{enumerate}

\section{Network Vertex Downgrading Problem (NVDP)}
\label{sec:nvdp}


As an introduction and motivation to the LP model and techniques used to solve NVLDP, in this section, we focus on the special case NVDP, where there is only one other level to downgrade each vertex to. Our main goal is to show the following theorem.
\begin{theorem}
	There exists a polynomial time algorithm that provides a $(4, 4)$-approximation to NVDP on an $n$-node digraph.
	\label{NVDPapprox}
\end{theorem}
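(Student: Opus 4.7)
The plan is an LP relaxation followed by randomized ball-growing on a carefully constructed auxiliary digraph $G'$. Introduce $y_v \in [0,1]$ for each internal vertex, four mode variables $x_e^{00}, x_e^{10}, x_e^{01}, x_e^{11} \ge 0$ per arc $e = uv$, and distance variables $d_v \ge 0$ with $d_s = 0$ and $d_t \ge 1$. The LP constraints are $d_v - d_u \le x_e^{00} + x_e^{10} + x_e^{01} + x_e^{11}$ for each arc $e = uv$, the linking $x_e^{10} + x_e^{11} \le y_u$ and $x_e^{01} + x_e^{11} \le y_v$, and the budget $\sum_v r(v) y_v \le b$; the objective minimizes $\sum_e (c_e x_e^{00} + c_{eu} x_e^{10} + c_{ev} x_e^{01} + c_{euv} x_e^{11})$. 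Any integer NVDP optimum $(Y^*, F^*)$ yields a feasible LP solution of value $\mathrm{OPT}$ by setting $y_v = \mathbf{1}[v \in Y^*]$ and, for each $e \in F^*$, activating only the single mode arc matching $(Y^*(u), Y^*(v))$; hence $\mathrm{LP} \le \mathrm{OPT}$.

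Build $G'$ by splitting each internal $v$ into $v^N$ (normal) and $v^D$ (downgraded) joined by an internal arc $v^N \to v^D$ of length $y_v$, and replacing each $e = uv$ by four parallel mode arcs $e^{ij}$ from $u^i$ to $v^j$ of lengths $x_e^{ij}$ and costs $c_{ij}(e) \in \{c_e, c_{eu}, c_{ev}, c_{euv}\}$. Extend the LP distances canonically into $G'$ so that $d'_s = 0$, $d'_t \ge 1$, and every per-arc length constraint holds. Sample $\rho$ uniformly in $[0, 1/2]$ (to be derandomized over the distance breakpoints), let $B_\rho = \{w : d'_w \le \rho\}$, and extract
\[
Y = \{v : v^N \in B_\rho,\ v^D \notin B_\rho\}, \quad F = \{e \in A(G) : e^{ij} \in \delta^+(B_\rho) \text{ for some } i, j\}.
\]
Lifting any $s$-$t$ path of $G$ canonically (using only $N$-copies and mode-$00$ arcs) shows $F$ is an $st$-cut of $G$, and the ball-growing bound $\Pr[v \in Y] \le 2(d'_{v^D} - d'_{v^N}) \le 2 y_v$ gives $\mathbb{E}[r(Y)] \le 2b$.

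The heart of the analysis is the deterministic per-arc inequality $c^Y(e) \le \sum_{ij : e^{ij}\in\delta^+(B_\rho)} c_{ij}(e)$ for every $e \in F$, proved by splitting on $(Y(u), Y(v))$ and using the monotonicity $c_e \ge c_{eu}, c_{ev} \ge c_{euv}$. In each case the split-vertex structure of $G'$ forces at least one mode arc with cost at least $c^Y(e)$ onto the boundary; for instance, if $u \in Y$ and $v \notin Y$ then $u^N \in B_\rho$, $u^D \notin B_\rho$, and both copies of $v$ lie on a common side, so in the relevant subcase $v^N, v^D \notin B_\rho$ the mode-$00$ arc $u^N \to v^N$ is cut at cost $c_e \ge c_{eu} = c^Y(e)$. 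Summing over $e \in F$ and applying $\Pr[a \in \delta^+(B_\rho)] \le 2 \ell(a)$ yields $\mathbb{E}[c^Y(F)] \le 2 \cdot \mathrm{LP}$. Adding the two expectations gives $\mathbb{E}[c^Y(F)/\mathrm{LP} + r(Y)/b] \le 4$, so some breakpoint $\rho$ achieves both $c^Y(F) \le 4 \cdot \mathrm{LP} \le 4 \cdot \mathrm{OPT}$ and $r(Y) \le 4b$ simultaneously, delivering the $(4, 4)$-approximation. The main obstacle is this per-arc comparison; it is precisely the split-vertex construction of $G'$ together with the monotonicity of the four-cost structure that makes it work, by forcing an appropriately priced mode arc onto the boundary cut in every configuration of $Y$.
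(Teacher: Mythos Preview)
Your construction of $G'$ has a genuine gap: the claimed ``canonical extension'' of the LP potentials to $G'$ with $d'_s=0$, $d'_t\ge 1$, and all per-arc length constraints satisfied \emph{does not exist in general}. The reason is that your four mode arcs are placed \emph{in parallel}, so the shortest-path distance from $u^N$ to $v^N$ in $G'$ is $x_e^{00}$ alone, whereas your LP only guarantees the \emph{sum} $x_e^{00}+x_e^{10}+x_e^{01}+x_e^{11}\ge d_v-d_u$. Concretely, take a path $s\to u\to v\to t$ with $r(u)=r(v)=1$, $b=2$, and costs making $c_{euv}$ the cheapest on $uv$. An optimal LP solution sets $y_u=y_v=1$, $x_{uv}^{11}=1$, and all other $x$-variables to $0$; then in $G'$ the all-mode-$00$ path $s^N\to u^N\to v^N\to t^N$ has total length $0$, so $t\in B_\rho$ for every $\rho\ge 0$ and $F$ need not be an $st$-cut. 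If instead you define $d'_{v^N}=d_v$ directly (so $d'_t\ge1$ does hold), then the per-arc bound $\Pr[e^{00}\in\delta^+(B_\rho)]\le 2x_e^{00}$ fails: that probability is governed by $d_v-d_u$, which can equal $\sum_{ij}x_e^{ij}$, and the expected cut cost is no longer bounded by $2\cdot\mathrm{LP}$.

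The paper avoids exactly this pitfall by putting the four cost-copies of each arc \emph{in series} (a length-four path in $H$), so that the $x$-values add up and the triangle inequalities in the auxiliary graph coincide with the LP constraint. The price is that bounding $\Pr[v\text{ downgraded}]$ now requires controlling a window that straddles an incoming and an outgoing arc at $v$; this is why the paper needs the stronger cross-arc linking constraint $x_{[uv]_2}+x_{[uv]_3}+x_{[vw]_1}+x_{[vw]_2}\le y_v$ (your per-arc links would not suffice here) together with the aided/unaided contraction that produces $H'$ and keeps the $s$--$t$ distance at least $1/2$. Your split-vertex idea is appealing, but as stated it does not deliver the distance guarantee that ball growing needs.
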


\paragraph{LP Model for NVDP.}
To formulate the NVDP as a LP, we begin with the following standard formulation for minimum $st$-cuts~\cite{guenin2014gentle}.
\begin{align}
	\min \qquad &\sum_{e\in A(G)}c(e)x_e\notag \\
	\text{s.t.} \qquad &d_v\le d_u +x_{uv} & \forall uv \in A(G) \label{LPtri}\\
	&d_s=0, d_t\ge 1&\notag\\
    & x_{uv} \geq 0 & \forall uv \in A(G)
\end{align}
An integer solution for $st$-cut can be interpreted as setting $d$ to be 0 for nodes in the $s$ shore and 1 for nodes in the $t$ shore of the cut. Constraint (\ref{LPtri}) then insist that the $x$-value for arcs crossing the cut to be set to 1.
The potential $d_v$ at node $v$ can also be interpreted as a distance label starting from $s$ and using the nonnegative values $x_{uv}$ as distances on the arcs.
Any optimal solution to the above LP can be rounded to an optimal integer solution of no greater value by using the $x$-values on the arcs as lengths, growing a ball around $s$, and cutting it at a random threshold between 0 and the distance to $t$ (which is 1 in this case). The expected cost of the random cut can be shown to be the LP value (See e.g.,~\cite{Chuzoy-icam}), and the minimum such ball can be found efficiently using Dijkstra's algorithm. Our goal in this section is to generalize this formulation and ball-growing method to NVDP.

One difficulty in NVDP comes from the fact that every arc has four associated costs and we need to write an objective function that correctly captures the final cost of a chosen cut. One way to overcome this issue is to have a distinct arc associated with each cost. In other words, for every original arc $uv\in A(G)$, we create four new arcs $[uv]_0, [uv]_1, [uv]_2, [uv]_3$ with cost $c_e, c_{eu}, c_{euv}, c_{ev}$ respectively. Then, every arc has its unique cost and it is now easier to characterize the final cost of a cut. We consider the following auxiliary graph. See Figure \ref{auxH}.

\begin{figure}[h]
	\centering
	\includegraphics[scale=0.9]{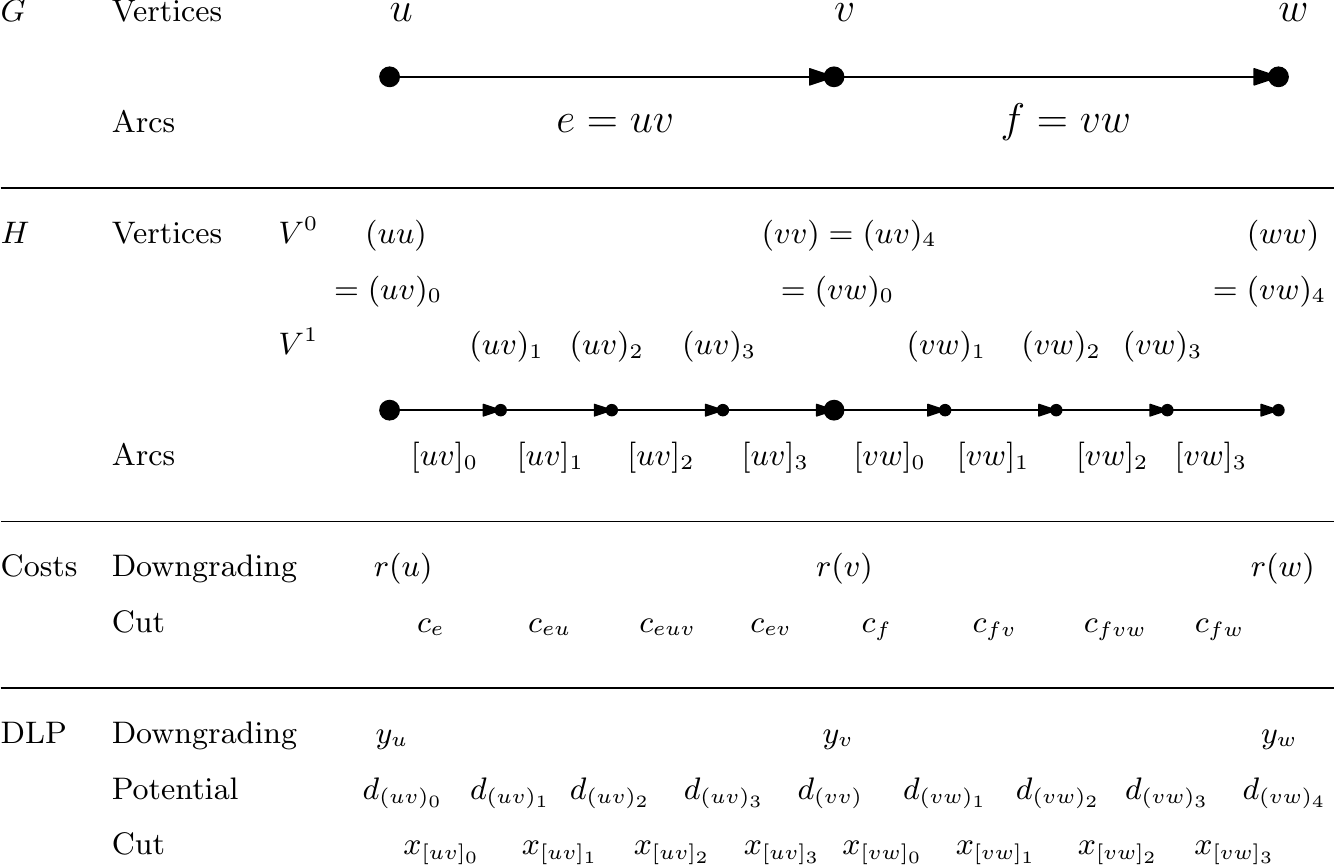}
	\caption{Construction of Auxiliary graph $H$ \label{auxH}}
\end{figure}

\paragraph{Constructing Auxiliary Graph $H$}
Let $V(H)=V^0(H)\cup V^1(H)$ where $V^0(H)=\{(vv): v\in V(G)\}$ and $V^1(H)=\{(uv)_i: uv\in A(G), i=1, 2, 3\}$. Define $A(H)=\{[uv]_0=(uu)(uv)_1, [uv]_1=(uv)_1(uv)_2, [uv]_2=(uv)_2(uv)_3, [uv]_3=(uv)_3(vv): uv\in A(G)\}$. Essentially, the vertices $(uu)\in V^0(H)$ corresponds to the original vertices $u\in V(G)$ and for every arc $uv\in A(G)$, we replace it with a path $(uu)(uv)_1(uv)_2(uv)_3(vv)$ where the four arcs on the path are $[uv]_0, [uv]_1, [uv]_2, [uv]_3$. For convenience and consistency in notation, we define $(uv)_0 := (uu), (uv)_4:=(vv)$. Note that the vertices of $H$ will always be denoted as two lowercase letters in parenthesis while arcs in $H$ will be two lowercase letters in square brackets with subscript $i=0, 1, 2, 3$. The cost function $c:A(H)\to \mathbb{R}_{\ge 0}$ is as follows: $c([uv]_0)=c_e, c([uv]_1)=c_{eu}, c([uv]_2)=c_{euv}, c([uv]_3)=c_{ev}$. Since we can only downgrade vertices in $V^0$, to simplify the notation, we retain $r(v)$ as the cost to downgrade vertex $(vv)\in V^0$. Note that $|V(H)|=3|A(G)|+|V(G)| = O(n+m)$.

\paragraph{Downgrading LP on $H$.}
	Given the auxiliary graph $H$, we can now construct an LP similar to the one for $st$-cuts. For vertices $(vv)\in V^0(H)$ corresponding to original vertices of $G$, we define a downgrading variable $y_v$ representing whether vertex $v$ is downgraded or not in $G$. For every arc $[uv]_i\in A(H)$, we have a cut variable $x_{[uv]_i}$ to indicate if the arc belongs in the final cut of the graph. Lastly for all vertices $(uv)_i\in V(H)$, we have a potential variable $d_{(uv)_i}$ representing its distance from the source $(ss)$.
	
	The idea is to construct an LP that forces $s, t$ to be at least distance $1$ apart from each other as before. This distance can only be contributed from the arc variables $x_{[uv]_i}$. The downgrading variables $y_v$ imposes limits on how large these distances $x_{[uv]_i}$ of some of its incident arcs can be. The motivation is that the larger $y_u$ and $y_v$ are, the more we should allow arc $[uv]_2$ to appear in the final cut over the other arcs $[uv]_0, [uv]_1, [uv]_3$ in order to incur the cheaper cost of $c_{euv}$. Now consider the following downgrading LP henceforth called DLP.

	\begin{align}
	\qquad \min \qquad &\sum_{[uv]_i\in A(H)}c([uv]_i)x_{[uv]_i}\notag \\
	\text{s.t.} \qquad &\sum_{(vv)\in V^0(H)}r(v)y_v\le b &  \label{DLPb}\\
	&d_{(uv)_{i+1}}\le d_{(uv)_i} +x_{[uv]_i} & \forall \text{arc } [uv]_i, 0\le i\le3  \label{DLPtri}\\
	&x_{[uv]_2} + x_{[uv]_3} +x_{[vw]_1}+ x_{[vw]_2} \le y_v & \forall \text{ path } (uv)_3(vv)(vw)_1 \label{DLPint}\\
	&d_{(ss)}=0, d_{(tt)}= 1 , y_s=0,  y_t=0	\notag
	\end{align}

Figure \ref{auxH} also includes the list of variables associated with $H$.  Our objective is to minimize the cost of the final cut. Constraint (\ref{DLPb}) corresponds to the budget constraint for the downgrading variables. Constraint (\ref{DLPtri}) corresponds to the triangle inequality for each arc in $H$, analogous to the triangle inequality Constraint (\ref{LPtri}) in the LP for min-cuts.
	
	Constraint (\ref{DLPint}) relates cut and downgrade variables. If we do not consider any constraints related to downgrading variables for a moment, the LP will naturally always want to choose the cheapest arc $[uv]_2$ over $[uv]_0, [uv]_1, [uv]_3$ when cutting somewhere between $(uu)$ and $(vv)$. However, the cut should not be allowed to go through $[uv]_2$ if one of $u, v$ is not downgraded. In other words $x_{[uv]_2}$ should be at most $y_u, y_v$. This reasoning gives the constraint $x_{[uv]_2}, x_{[uv]_3}, x_{[vw]_1}$ and, $x_{[vw]_2}$ all needs to be $\le y_v$ for in-arcs $uv$ and out-arcs $vw$. Now consider an arc $f=vw\in E(G)$. In an integral solution, if $v$ is downgraded, the arc $vw$ incurs a cost of either $c_{fv}$ or $c_{fvw}$ but not both, since $v$ must lie on one side of the cut. This translates to a LP solution where only one of the arcs $[vw]_2, [vw]_3$ is in the final cut. Then, the $x$ variables corresponding to $[vw]_1+ [vw]_2$ is at most $1$. Thus, a better constraint to impose is $x_{[vw]_1}+x_{[vw]_2}\le y_v$. To push it further, consider a path $uvw$ in $G$. In an integral solution, at most one of the arcs $uv, vw$ appears in the final cut. This implies if $v$ is downgraded, only one of the costs $c_{ev}, c_{euv}, c_{fv}, c_{fvw}$ is incurred. This corresponds to the tighter constraint (\ref{DLPint}). Note that for every vertex $v\in V(G)$, for every pair of incoming and outgoing arcs of $v$, we need to add one such constraint. Then for every vertex in $G$, we potentially have to add up to $n^2$ many constraints. In total, the number of constraints would still only be $O(n^3)$. The last few constraints in DLP make sure $s$ and $t$ are $1$ distance apart and cannot themselves be downgraded.

	
The validity of this LP model is shown by proving that a solution of NVDP corresponds to a feasible solution of DLP and an integral solution of DLP can be translated to a feasible solution of NVDP with the same or even better cost. These proofs are included in Appendix \ref{app:nvdp}.
	
\paragraph{Bicriteria Approximation for NVDP.}
We now prove Theorem~\ref{NVDPapprox}.
We will work with an optimal solution from DLP on the auxiliary graph $H$.
The idea is to use a ball-growing algorithm that greedily find cuts until one with the promised guarantee is produced. The reason this algorithm is successful is proved by analyzing a randomized algorithm that picks a number $0\le \alpha\le 1$ uniformly at random and chooses a cut at distance $\alpha$ from the source $s$. Then we choose vertices to downgrade and arcs to cut based on arcs in this cut at distance $\alpha$. By computing the expected downgrading cost and the expected cost of the cut arcs, the analysis will show the existence of a level cut that satisfies our approximation guarantee.

To achieve the desired result, we cannot work with the graph $H$ directly. This is because the ball-growing algorithm works if the probability of cutting some arc can be bounded within some range. This bound exists for the final cut arcs but not for the final downgraded vertices. Consider a vertex $v$; it is downgraded if any arcs of the form $[uv]_2, [uv]_3, [vw]_1, [vw]_2$ is cut in $H$. Thus it has potential of being cut anywhere between $(uv)_2$ and $(vw)_3$. We would like to use Constraint (\ref{DLPint}) to bound this range but we cannot since we do not know how long the arc $[vw]_0$ is. Thus we will contract some arcs first in order to properly use Constraint (\ref{DLPint}).

Let $(x^*, y^*, d^*)$ be an optimal solution to DLP where the optimal cost is $c^*$. It follows from the validity of our model that $c^*$ is at most the cost of an optimal integral solution.

\paragraph{Constructing Graph $H'$}
	For every arc $uv\in A(G)$, we compare the value $x^*_{[uv]_0}$ and $x^*_{[uv]_1}+x^*_{[uv]_2}+x^*_{[uv]_3}$. The reason we separate this way is because the variables in the second term are influenced by the downgrading values on $u, v$. Thus the more we downgrade $u$ and $v$, the larger we are allowed to increase the second sum, the more distance we can place between $u$ and $v$. For an arc $uv\in A(G)$, if $x^*_{[uv]_0} < x^*_{[uv]_1}+x^*_{[uv]_2}+x^*_{[uv]_3}$, we say $uv$ is an \textit{aided} arc since most of its distance is contributed by the downgrading values on the $u, v$ and thus the downgrading values help to reduce its cost. For all other arcs, we say $uv$ is an \textit{unaided} arc since most of the distance would be contributed by the arc $[uv]_0$, corresponding to simply paying for the original cost of deletion $c_e$ without the help from downgrading. To construct $H'$, if $uv$ is an aided arc, then contract $[uv]_0$. Otherwise, contract $[uv]_1, [uv]_2, [uv]_3$.

	Consider a path $P=(uv)_0(uv)_1(uv)_2(uv)_3(uv)_4$ in $H$. Note that the length of this path is shortened in $H'$ depending on whether $uv$ is an aided/unaided arc. However, since we always retain the larger of $x^*_{[uv]_0}$ and $x^*_{[uv]_1}+x^*_{[uv]_2}+x^*_{[uv]_3}$ in $H'$,  the path's length is reduced by at most one half. Then it follows that the distance between any two vertices in $H'$ is reduced by at most one half. In particular, it follows that the shortest path between the source and the sink is at least $1/2$.
This property will be crucial in arguing that the solution chosen by our algorithm has low cost relative to the LP optimum.


\begin{algorithm}[h]	
	\caption{Ball-Growing Algorithm for NVDP}\label{alg:sp}
	\begin{algorithmic}[1]
		\REQUIRE a graph $G$ and its axillary graph $H'$ with non-negative arc-weights $x^*_{[uv]_i}$, source $(ss)$, sink $(tt)$, arc cut costs $c([uv]_i)$ and vertex downgrading costs $r(v)$,
		\ENSURE a vertex set $V'$ and an arc cut $E'$ of $G$ such that $\Sigma_{v\in V'} r(v)\le 4b, c^{V'}(E')\le 4c^*$
		\STATE initialization $V=\{(ss)\}, D((uv)_i)=0$ for all $(uv)_i\in V(H')$,
		\REPEAT
		\STATE let $X'\subseteq A(H')$ be the cut induced by $V$,
		\STATE find $[uv]_i=(uv)_i(uv)_{i+1} \in X'$ minimizing $D((uv)_i)+x^*_{[uv]_i}$,
		\STATE update by adding $(uv)_{i+1}$ to $V$, update $D{(uv)_{i+1}}=D((uv)_i)+x^*_{[uv]_i}$,
		\STATE let $E'=\{uv\in A(G): \{[uv]_0, [uv]_1, [uv]_2, [uv]_3\}\cap X'\neq \emptyset\}$ and $V'=\{v\in V(G): \{[uv]_2, [uv]_3, [vw]_1, [vw]_2\}\cap X'_k\neq\emptyset \text{ for some } u, w\in V(G)\}$
		\UNTIL $\Sigma_{v\in V'} r(v)\le 4b$ and $c^{V'}(E')\le 4c^*$
		\STATE output the set $V', E'$.
	\end{algorithmic}
\end{algorithm}

In Algorithm~\ref{alg:sp}, it follows the general ball-growing technique and looks at cuts $X'$ at various distances from the source. Note that the algorithm adds at least one vertex to a node set $V$ at each iteration so it runs for at most $|V(H')|=O(m)$ steps when applied to the graph $H'$, where $m$ denotes the number of arcs in the original graph $G$.

At each iteration, it associates $E'$, the set of original arcs corresponding to those in $X'$ and a vertex subset $V'$, representing the set of vertices we should downgrade based on the arcs in $X$. For example, if $[uv]_2\in X'_k$, then we should downgrade both $u, v$. Note that since $X'$ is a cut in $H'$, it follows that $E'$ is also a cut in $G$.

First, we show that there exists a cut $X'$ at some distance $\alpha$ from the source such that the associated sets $V', E'$ provides the approximation guarantee.

\begin{lemma}
\label{lem:nvdp}
	There exists $X', V', E'$ such that $\Sigma_{v\in V'} r(v)\le 4b, c^{V'}(E')\le 4c^*$
\end{lemma}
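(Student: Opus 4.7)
My plan is to analyze a randomized variant of Algorithm~\ref{alg:sp}: pick a single threshold $\alpha\in[0,1/2)$ uniformly at random, let $d^*$ be the shortest-path distance from $(ss)$ in $H'$ with arc weights $x^*$, set $B_\alpha=\{z:d^*_z\le\alpha\}$, take $X'=\delta^+(B_\alpha)$, and define $E',V'$ from $X'$ exactly as in the algorithm. Since the construction of $H'$ ensures $d^*_{(tt)}\ge 1/2$, the ball $B_\alpha$ always separates $(ss)$ from $(tt)$. The triangle inequality (\ref{DLPtri}) together with $|[0,1/2)|=1/2$ gives $\Pr[[uv]_i\in X']\le 2x^*_{[uv]_i}$ for every non-contracted arc. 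It will suffice to bound $\mathbb{E}[c^{V'}(E')]\le 2c^*$ and $\mathbb{E}[r(V')]\le 2b$, as the deterministic algorithm sweeps over the polynomially many candidate balls.

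For the cut cost, I verify that $c^{V'}(uv)\le c([uv]_i)$ whenever $[uv]_i\in X'$ by a short case analysis. If $uv$ is unaided only $[uv]_0\in X'$ is possible and contributes at most $c_e=c([uv]_0)$. If $uv$ is aided, then $[uv]_1\in X'$ forces $u\in V'$ (witnessed by taking $w=v$ in the defining condition of $V'$), so the contribution is at most $c_{eu}=c([uv]_1)$; $[uv]_2\in X'$ forces both $u,v\in V'$ with contribution $c_{euv}=c([uv]_2)$; and $[uv]_3\in X'$ forces $v\in V'$ with contribution at most $c_{ev}=c([uv]_3)$. Summing yields $c^{V'}(E')\le c(X')$, and
\[\mathbb{E}[c(X')]\le\sum_{[uv]_i\in A(H')}2c([uv]_i)x^*_{[uv]_i}\le 2\sum_{[uv]_i\in A(H)}c([uv]_i)x^*_{[uv]_i}=2c^*.\]

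The main step is the bound on $\Pr[v\in V']$. Unpacking the definition of $V'$, a vertex $v$ belongs to $V'$ exactly when $\alpha$ lies in the single interval $\bigl[\min_u d^*_{(uv)_2},\max_w d^*_{(vw)_3}\bigr)$, where the extremes range over in-arcs $uv$ and out-arcs $vw$ of $v$ in $G$ (with the convention that for unaided $uv$ or $vw$ the contracted vertices coincide with $(vv)$, so these entries collapse to $d^*_{(vv)}$ and do not widen the interval). Letting $u^*,w^*$ attain the extremes, two applications of (\ref{DLPtri}) around $(vv)$ give
\[d^*_{(vw^*)_3}-d^*_{(u^*v)_2}\le x^*_{[u^*v]_2}+x^*_{[u^*v]_3}+x^*_{[vw^*]_1}+x^*_{[vw^*]_2}\le y^*_v,\]
the last inequality being (\ref{DLPint}) for the pair $(u^*v,vw^*)$. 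Thus $\Pr[v\in V']\le 2y^*_v$ and $\mathbb{E}[r(V')]\le 2\sum_v r(v)y^*_v\le 2b$ by (\ref{DLPb}). Combining, $\mathbb{E}\bigl[r(V')/(2b)+c^{V'}(E')/(2c^*)\bigr]\le 2$ (the $c^*=0$ case is detected separately by Theorem~\ref{NVDPdetect}), so some $\alpha$ achieves this sum at most $2$; since both summands are nonnegative, each is individually at most $2$, simultaneously yielding $r(V')\le 4b$ and $c^{V'}(E')\le 4c^*$. The trickiest part is the interval characterization of $\{v\in V'\}$ and its length bound via (\ref{DLPint}); degenerate vertices with only incoming or only outgoing arcs are handled by interpreting the missing side of the interval as $d^*_{(vv)}$ and invoking only the half of the triangle inequality supported by the remaining pair constraints.
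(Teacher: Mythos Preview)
Your proof is correct and follows essentially the same approach as the paper: choose a random threshold, bound the probability an arc is cut by $2x^*$, bound the probability a vertex is downgraded via the interval argument using Constraint~(\ref{DLPint}), and conclude existence of a good threshold. The only minor difference is in the final step: the paper applies Markov's inequality to $\mathcal{V}$ and $\mathcal{E}$ separately and takes a union bound, whereas you combine the two normalized quantities into a single expectation $\mathbb{E}[r(V')/(2b)+c^{V'}(E')/(2c^*)]\le 2$ and pick an $\alpha$ realizing this---your version is slightly cleaner and avoids the borderline union-bound issue.
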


The main idea of the proof is to pick a distance uniformly at random and study the cut at that distance. Then, argue that the extent to which an arc is cut (chosen in $E'$ above) in the random cut is at most twice its $x^*$-value, using the property of the model.
For this, we crucially use the fact that the random cut distance is chosen between zero and at least half since even after the arc contractions, the distance of $t$ from $s$ is still at least half. When nodes are chosen in the random cut (in $V'$ above) to be downgraded, we use constraint (\ref{DLPint}) in the LP along with the properties of the algorithm to argue similarly that the probability of downgrading a node in the process is at most twice its $y^*$-value.
To obtain a cut where we simultaneously do not exceed both bounds, we use Markov's inequality to show a probability of at least half of being within twice these respective expectations, hence giving us a single cut with both bounds within four times their respective LP values. Details of the proof of the lemma  are in Appendix~\ref{app:ipco}.

Lastly, in order to prove the validity of Algorithm \ref{alg:sp}, we need to show it can eventually find the correct cut at distance $\alpha$.

\begin{lemma}
	Let $\alpha$ be the smallest value such that the associated cut $X', V', E'$ at distance $\alpha$ provides the promised apporximation guarantee. Then Algorithm \ref{alg:sp} can find it by checking all cuts at distance $\alpha'\le\alpha$.
	\label{alg-lemma}	
\end{lemma}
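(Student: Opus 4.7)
}
The plan is to identify Algorithm~\ref{alg:sp} as a Dijkstra-style ball-growing procedure on $H'$ with arc weights $x^*$, and then argue that the family of cuts it enumerates contains every ``threshold cut'' at every distance that is attained by some vertex. First I would verify that the values $D(\cdot)$ maintained by the algorithm coincide with shortest-path distances from $(ss)$ in $H'$ with weights $x^*$: at each step the algorithm adds the boundary vertex $(uv)_{i+1}$ that minimises $D((uv)_i) + x^*_{[uv]_i}$, which is exactly the standard Dijkstra update, so by induction $D((uv)_{i+1})$ equals the shortest-path distance from $(ss)$ to $(uv)_{i+1}$. In particular, the sequence of distances $D$ at which vertices are absorbed into $V$ is non-decreasing.

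Next I would describe precisely the cut $X'$ inspected at an iteration. If $V_\rho := \{z \in V(H') : D(z) \le \rho\}$ denotes the ball of radius $\rho$ (for $\rho$ equal to the current maximum $D$-value in $V$), then the boundary cut $X'$ at that iteration is exactly $\delta^+(V_\rho)$, the set of arcs leaving $V_\rho$. As $\rho$ sweeps through the (finitely many) distances attained by vertices of $H'$, each cut $\delta^+(V_\rho)$ is examined in turn; these are exactly the cuts used in the randomised analysis behind Lemma~\ref{lem:nvdp}, because the random threshold $\alpha$ only changes the selected cut when it crosses the $D$-value of some vertex. In other words, the countably many cuts realised by the random process are precisely the cuts enumerated by Algorithm~\ref{alg:sp}.

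With this correspondence in hand, the lemma follows quickly. By hypothesis, $\alpha$ is the smallest threshold at which the induced $V'$ and $E'$ satisfy $\sum_{v \in V'} r(v) \le 4b$ and $c^{V'}(E') \le 4c^*$. Let $\rho^* := \max\{D(z) : z \in V(H'),\ D(z) < \alpha\}$; when the algorithm has grown $V$ to equal $V_{\rho^*}$, the next boundary cut it evaluates is exactly $\delta^+(V_{\rho^*})$, which is the cut at distance $\alpha$ promised by Lemma~\ref{lem:nvdp}. Since the algorithm terminates as soon as the two inequalities in its \textsc{Until}-clause are met, it must stop at some iteration with threshold $\alpha' \le \alpha$, as claimed.

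The main obstacle I anticipate is the bookkeeping around ties and the fact that multiple vertices may share the same $D$-value: one must check that the cut $\delta^+(V_\rho)$ is well defined independent of the order in which equidistant vertices are inserted, and that the $V'$, $E'$ associated to this cut do not depend on the tie-breaking rule. This is handled by observing that both $V'$ and $E'$ depend only on the arc set $X'$, and $X'$ depends only on $V_\rho$ (which is determined by $\rho$, not by the insertion order). Once this invariance is established, the correspondence between the algorithm's enumeration and the randomised threshold cuts is exact, and the lemma follows.
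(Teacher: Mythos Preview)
Your proposal is correct and follows essentially the same approach as the paper: both argue that Algorithm~\ref{alg:sp} is a Dijkstra-style sweep whose growing set $V$ passes through every ball $V_\rho=\{z:D(z)\le\rho\}$, so every threshold cut $\delta^+(V_\rho)$ is eventually examined and in particular the one at level $\alpha$. One small imprecision: the cuts enumerated by the algorithm are not \emph{precisely} the threshold cuts (when several vertices share the same $D$-value the algorithm also inspects intermediate cuts between tie-breaking insertions), but only the containment ``every threshold cut is eventually checked'' is needed, and both your tie-handling remark and the paper's inductive argument establish exactly that.
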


	The proof is included in Appendix \ref{app:ipco}. Then, Theorem \ref{NVDPapprox} is proved by simply running Algorithm \ref{alg:sp} on the auxiliary graph $H'$.

\bibliographystyle{abbrv}
\bibliography{biblio2}

\newpage
\appendix
    \section{Validity of Algorithm ~\ref{alg:sp}} 
\label{app:ipco}

\begin{proof}[Proof of Lemma \ref{lem:nvdp}]
Let$D((uv)_i)$ be the distance from the source $(ss)$ to any vertex $(uv)_i\in V(H')$ viewing the $x^*$ variables as distances. Note that $D((tt))\ge 1/2$ since the original distance is at least $1$ and $H'$ reduces the distance by at most $1/2$. Note that the triangle-inequality holds under this distance metric where $D((uv)_i)-D(('v')_{i'})$ is at most the distance between $(uv)_i$ and $(u'v')_{i'}$.

\paragraph{Defining the Random Variables}
Let $\alpha$ be chosen uniformly at random from the interval $[0, D((tt))]$. Consider $X_\alpha:= \{[uv]_i\in A(H'): D((uv)_i)\le \alpha <D((uv)_{i+1})\}$, the cut at distance $\alpha$ in $H'$. Let $E_\alpha=\{uv\in A(G): [uv]_i\in X_\alpha \text{ for some } i=0, 1, 2, 3\}$, representing the original arcs corresponding to those in $X_\alpha$. Let $V_\alpha =\{v\in V(G): \{[uv]_2, [uv]_3, [vw]_1, [vw]_2\}\cap X_\alpha \neq \emptyset \text{ for some } u, w\in V(G)\}$, representing the set of vertices we should downgrade so that the final cost of the arcs $E_\alpha$ matches the cost associated to $X_\alpha$. More precisely, we want $c^{V_\alpha}(E_\alpha)=\Sigma_{[uv]_i\in X_\alpha} c([uv]_i)$. Note that by construction $E_\alpha$ is a $st$-cut in $G$. Let $\mathcal{V}=\Sigma_{v\in V_\alpha} r(v), \mathcal{E}=c^{V_\alpha}(E_\alpha)$. Our goal is to show that these two random variables $\mathcal{V, E}$ have low expectations and obtain our approximation guarantee using Markov' inequality.
In particular, we will prove that $\mathbb{E}[\mathcal{V}] \leq 2b$, and that $ \mathbb{E}[\mathcal{E}] \leq 2c^*$ where $c^*$ is the optimal value of DLP.

To understand $\mathcal{E}$, for every arc $e=uv\in A(G)$, we introduce the indicator variables $\mathcal{E}_e$ to be $1$ if arc $e\in E_\alpha$ and $0$ otherwise. Then $\mathcal{E}=\Sigma_{e\in A(G)} \mathcal{E}_e c^{V_\alpha}(e)$. To study the value of $\mathcal{E}_ec^{V_\alpha}(e)$, we can break into several cases depending on which arc $[uv]_i\in X_\alpha$. Note that if $[uv]_i\notin X_\alpha$ for $i=0, 1, 2, 3$, then $e\notin E_\alpha$ and $\mathbb{E}_ec^{Y^*}(e)=0$.
Next, if we assume $[uv]_i\in X_\alpha$, then one can check that $c^{V_\alpha}(e)\le c([uv]_i)$. The proof of this claim is similar to the proof of Claim \ref{comparec} where we check the different cases depending on whether $u, v\in V_\alpha$.

Slightly abusing the notation, define the indicator variable $\mathcal{E}_{[uv]_i}$ for arc $[uv]_i\in A(H)$ to be $1$ if $[uv]_i\in X_\alpha$ and 0 otherwise. Then, we can upper-bound the expectation of $\mathcal{E}$ using conditional expectations of the events $\mathcal{E}_{[uv]_i}=1$ as follows.
\begin{align*}
\mathbb{E}[\mathcal{E}] =  &\Sigma_{e\in A(G)} \mathbb{E}[\mathcal{E}_e c^{V_\alpha} (e)]\\
 =& \Sigma_{e\in A(G)} \Sigma_{i=0}^3 \mathbb{E} [c^{v_\alpha}(e)| \mathcal{E}_{[uv]_i} =1 ] \cdot Pr[\mathcal{E}_{[uv]_i}=1] \\
 \le & \Sigma_{e\in A(G)}\Sigma_{i=0}^3 c([uv]_i) Pr[\mathcal{E}_{[uv]_i}=1]
\end{align*}
To understand the probability of $\mathcal{E}_{[uv]_i}=1$, note that an arc $[uv]_i\in X_\alpha$ if and only if $D((uv)_i)\le \alpha < D((uv)_{i+1})$. Then, $Pr[[uv]_i\in \chi(\alpha)]\le (D((uv)_{i+1})-D((uv)_i)/D((tt))\le 2x^*_{[uv]_i}$ since $D((tt))\ge 1/2$. Combining with the previous inequalities, we see that
\begin{align*}
 \mathbb{E}[\mathcal{E}] \le &\Sigma_{uv\in A(G)}\Sigma_{i=0}^3 c([uv]_i) Pr[\mathcal{E}_{[uv]_i}=1]\\
 \le & \Sigma_{uv\in A(G)}\Sigma_{i=0}^3 c([uv]_i)2 x^*_{[uv]_i}= 2 c^*.
\end{align*}

Next, we show similar result for $\mathcal{V}$. Note that $\mathbb{E}[\mathcal{V}]=\Sigma_{v\in V(G)}r(v)\cdot Pr[v\in V_\alpha]$. Recall that $v\in V_\alpha$ if and only if there exists a vertex $u$ or $w$ such that at least one of $[uv]_2, [uv]_3, [vw]_1, [vw]_2\in X_\alpha$. Note that if $uv$ is an unaided arc, then $[uv]_2, [uv]_3$ would have been contracted in $H'$ and would never be chosen in $X_\alpha$. Therefore, we only need to consider aided arcs. In order to upper-bound the probability of choosing $v$ into $V_\alpha$, we thus need to find the range of possible $\alpha$ that might affect $v$. For any vertex $v\in V(G)$, it follows that we only need to examine aided arcs incident to the vertex $v$. Let $u\in V(G)$ such that $uv\in A(G), uv$ is an aided arc and $D((uv)_2)$ is minimum. Let $w\in V(G)$ such that $vw$ is an aided arc and $D((vw)_3)$ is maximum. By definition, all arcs of the form $[zv]_2, [zv]_3, [vz']_1, [vz']_2$ lie between the range $D((uv)_2), D((vw)_3)$. Then, $v$ is chosen only if $\alpha$ is between $D((uv)_2)$ and $D((vw)_3)$. Note that the distance between $(uv)_2$ and $(vw)_3$ is upper-bounded by the length of a shortest path in $H'$. Note that since $vw$ is an aided arc, $[vw]_0$ is contracted in $H'$. Then $(uv)_2(uv)_3(vv)(vw)_1(vw)_2$ is a path in $H'$. Thus $D((vw)_3)-D((uv)_2)\le x^*_{[uv]_2}+x^*{[uv]_3}+x^*{[vw]_1}+x^*_{[vw]_2}\le y^*_v$ where the last inequality follows from Constraint (\ref{DLPint})\footnote{This is the main reason why we distinguish between aided and unaided arcs and contract the appropriate one to construct $H'$. Without the contraction, the distance between $(uv)_2$ and $(vw)_3$ includes the arc $[vw]_0$ and thus could be arbitrarily larger than $y^*_v$.}. Then, $Pr[D((uv)_2)\le \alpha < D((vw)_3)] \le y^*_v/D((tt)) \le 2y^*_v$. Therefore
\begin{align*}
	\mathbb{E}[\mathcal{V}]= & \Sigma_{v\in V(G)} r(v)\cdot Pr[v\in V_\alpha] \\
	\le & \Sigma_{v\in V(G)} r(v) 2y^*_v\le 2b.
\end{align*}

	Lastly, by Markov's inequality, $Pr[\mathcal{V}\le 4b]\ge 1/2, Pr[\mathcal{E}\le 4c^*]\ge 1/2$. Then it follows there exists $0\le \alpha\le D((tt))$ such that $\Sigma_{v\in V_\alpha} r(v)\le 4b$ and $c^{V_\alpha}(E_\alpha)\le 4c^*$, proving our lemma.

\end{proof}

\begin{proof} [Proof of Lemma~\ref{alg-lemma}]
	Let $\beta<\beta'\le \alpha$ and consider the cuts $X_\beta, X_{\beta'}$ at distance $\beta, \beta'$ respectively. Let $Y_\beta, Y_{\beta'}$ be the vertices of the connected component  of $H'\backslash X_\beta, H'\backslash X_{\beta'}$ containing $(ss)$ respectively. By definition, all vertices in $Y_\beta, Y_{\beta'}$ are at distance at most $\beta, \beta'$ from $(ss)$ respectively. Thus, $Y_\beta\subseteq Y_{\beta'}$. Then $X_\beta, X_{\beta'}$ are distinct cuts if and only if $Y_\beta$ is a proper subset of $Y_{\beta'}$. This implies that the cuts at distance at most $\alpha$ can be properly ordered and there are only polynomially many such cuts. Then, we will use induction to prove our claim that Algorithm \ref{alg:sp} checks all cuts in this order. 
	
	The base case where $\beta=0$ is obviously true.  Now suppose $X_\beta, X_{\beta'}$ are two consecutive cuts in the ordering and the algorithm just checked $X_\beta$. Let $Y'=Y_{\beta'}\backslash Y_\beta$ and let $F'$ be the set of all arcs between $Y_\beta$ and $Y'$. First, note that all arcs in $F'$ has the same length, otherwise there exists another cut at distance strictly between $\beta$ and $\beta'$ (by adding the other end of the shortest arc in $F'$ into the set $Y$ and looking at the cut it induces). Then, the algorithm first pick up any arc in $F'$ and extends the set $Y$ a bit. Then, it continues to to pick up all other arcs in $F'$ before choosing any other arcs due to Step 4 of Algorithm \ref{alg:sp}. Thus, it will eventually also reach and check $X_{\beta'}$. 
	
\end{proof}

    \newpage
	\section{Appendix for NVDP}
\label{app:nvdp}

\subsection{Detecting Zero in NVDP in Polynomial Time}
In this subsection, we provide an algorithm to detect, in a given instance of NVDP, whether there exists nodes to downgrade such that the downgrading cost is less than the budget and the min cut after downgrading is zero.

In order to demonstrate the main idea of the proof, we first work on a special case of NVDP. Suppose for every arc $e=uv$, $c_e=c_{eu}=c_{ev}=1$ and $c_{euv}=0$. In other words, every arc is unit cost and requires the downgrading of both ends in order to reduce the cost down to zero. For every vertex $v\in V(G)$, we assume the interdiction cost $r(v)=1$. We call this the \textbf{Double-Downgrading Network Problem} (DDNP). We first prove the following.

\begin{lemma}
	Given an instance of DDNP on graph $G$ with budget $b$, there exist a polynomial time algorithm to determine if there exists $Y\subseteq V(G)$ and a $st$-cut $F\subseteq A(G)$ such that $|Y|\le b$ and $c^Y(F)=0$.
\end{lemma}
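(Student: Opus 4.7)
The plan is to reduce the zero-detection problem for DDNP to a single minimum $st$-cut computation in an auxiliary digraph, via an intermediate $4$-level Lipschitz labeling formulation.

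First I would establish the following equivalence. Let $A(Y) := \{uv \in A(G) : u,v \in Y\}$ denote the set of arcs that become zero-cost under downgrading $Y$. The existence of $Y \subseteq V(G) \setminus \{s,t\}$ with $|Y|\le b$ and $A(Y)$ forming an $st$-cut is equivalent to the existence of a labeling $l : V(G) \to \{0,1,2,3\}$ with $l(s)=0$, $l(t)=3$, satisfying the $1$-Lipschitz condition $l(v) \le l(u)+1$ for every arc $uv \in A(G)$, and with at most $b$ vertices receiving label in $\{1,2\}$. For the $(\Leftarrow)$ direction, set $Y = l^{-1}(\{1,2\})$; on any $st$-path $v_0=s, v_1, \ldots, v_k=t$, if $j$ is the smallest index with $l(v_j) \ge 2$, then Lipschitzness forces $l(v_j)=2$ and $l(v_{j-1})=1$, yielding an arc of $A(Y)$ on the path. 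For the $(\Rightarrow)$ direction, let $S$ be the set of vertices reachable from $s$ in $G$ after removing $A(Y)$, and assign $l(v) \in \{0,1,2,3\}$ according to which of $S\setminus Y$, $S\cap Y$, $\bar S\cap Y$, $\bar S\setminus Y$ contains $v$. A short case analysis on the possible arc types verifies $1$-Lipschitzness, the critical case being an arc that crosses from $S$ to $\bar S$ in $G$: such an arc must lie in $A(Y)$ (otherwise its head would be reachable from $s$), so both endpoints are in $Y$ and get labels $1$ and $2$.

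Next I would encode the labeling problem via monotone Boolean level-indicators. For each $v$ and $k \in \{1,2,3\}$ define $x_v^k := \mathbf{1}[l(v) \ge k]$. The constraints become $x_v^1 \ge x_v^2 \ge x_v^3$ (monotonicity), together with $x_v^2 \le x_u^1$ and $x_v^3 \le x_u^2$ for each arc $uv \in A(G)$ (the $1$-Lipschitz condition, level by level), and the boundary conditions $x_s^1=0$, $x_t^3=1$. The number of vertices with label in $\{1,2\}$ is exactly $\sum_v (x_v^1 - x_v^3)$. This is a minimum-closure problem on the implication digraph whose nodes are pairs $(v,k)$ with $k \in \{1,2,3\}$, whose arcs encode the implications $(v,k{+}1) \to (v,k)$, $(v,2)\to(u,1)$, and $(v,3)\to(u,2)$ for each $uv \in A(G)$, with node weights $+1$ on $(v,1)$, $0$ on $(v,2)$, $-1$ on $(v,3)$, and the forced memberships $(t,3) \in T$ and $(s,1) \notin T$.

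Finally I would solve the min-closure instance by the textbook reduction to minimum $st$-cut: add a super-source $\sigma$ and super-sink $\tau$, add a unit-capacity arc $(v,1)\to\tau$ and $\sigma\to(v,3)$ for each $v$, use infinite capacity on every implication arc, and enforce the membership constraints with infinite-capacity arcs $\sigma\to(t,3)$ and $(s,1)\to\tau$. The minimum $\sigma\tau$-cut value equals the minimum $|Y|$ plus $|V(G)|$, so one subtracts $|V(G)|$ and compares with $b$. Infeasibility---precisely the presence of an $st$-path of length at most $2$ in $G$---is automatically detected by the minimum cut being infinite, since such a short path creates a source-to-sink route through infinite-capacity implication arcs. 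The main obstacle is establishing the equivalence between the downgrading problem and the $4$-level labeling; once that is in hand, the rest is a routine application of the level-cut / min-closure machinery already used elsewhere in the paper.
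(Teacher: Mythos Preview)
Your proposal is correct, but it takes a genuinely different route from the paper. The paper's argument is a one-line structural observation: a set $Y$ admits a zero-cost $st$-cut in the DDNP instance if and only if $Y$ is an $st$-vertex cut in the \emph{square graph} $G^2$ (obtained from $G$ by adding an arc $uw$ whenever $uv,vw\in A(G)$). The forward direction holds because any $st$-path in $G^2\setminus Y$ would yield an $st$-path in $G$ that never uses two consecutive $Y$-vertices; the backward direction holds because after deleting all arcs induced by $Y$, any surviving $st$-path in $G$ can be shortcut past its isolated $Y$-vertices to give an $st$-path in $G^2\setminus Y$. Hence one simply computes a minimum $st$-vertex cut in $G^2$ and compares it to $b$.

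Your $4$-level Lipschitz labeling is an equivalent but more elaborate encoding of the same phenomenon: your levels $\{0,1,2,3\}$ are exactly the four classes $S\setminus Y,\;S\cap Y,\;\bar S\cap Y,\;\bar S\setminus Y$ in the paper's argument, and your Lipschitz constraint is dual to the ``distance at least~$3$ in $G$ between the non-$Y$ shores'' condition that makes $Y$ a vertex cut in $G^2$. The paper's approach is shorter and more transparent; yours has the advantage that it plugs directly into the monotone level-cut / min-closure machinery, immediately accommodates non-unit downgrading costs $r(v)$ (just put $r(v)$ on the $(v,1)\!\to\!\tau$ and $\sigma\!\to\!(v,3)$ arcs), and foreshadows the layered LP used later for the general NVDP. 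Both reductions detect infeasibility (an $st$-path of length $\le 2$) automatically, via an infinite vertex cut in $G^2$ in the paper's version and via an infinite $\sigma\tau$-cut in yours.
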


\begin{proof}
	Let $X\subseteq V(G)$ be a minimum set of vertices to downgrade such that the resulting graph contains a cut of zero cost. Let $F$ be the set of arcs in the graph induced by $X$ (i.e., with both ends in $X$). Note that $F$ are the only arcs with cost zero and hence $F$ is an arc cut in $G$. Furthermore, since $X$ is optimal, $X$ is the set of vertices incident to $F$ (there are no isolated vertices in the graph induced by $X$). Let $V_s, V_t$ be the set of vertices in the component of $G\backslash F$ that contains $s, t$ respectively.
	
	Consider the graph $G^2$ where we add arc $uv$ to $G$ if there exists $w\in V(G)$ such that $uv, vw\in A(G)$. First we claim that $X$ is a vertex cut in $G^2$. Suppose there is an $st$ path in $G^2 \setminus X$ where the first arc crossing over from $V_s$ to $V_t$ is $uv$. Note that any such $u \in V_s \setminus X$ and $v \in V_t \setminus X$ are distance 3 apart and hence do not have an arc between them in $G^2$, a contradiction.

	Given any vertex cut $Y$ in $G^2$, we claim that downgrading $Y$ in $G$ creates a $st$-cut of zero cost, by deleting the arcs induced by $Y$ from $G$. Suppose for a contradiction there is an $st$-path after downgrading $Y$ and deleting the zero-cost arcs induced by $Y$. Then the path cannot have two consecutive nodes in $Y$. Let $y \in Y$ be a single node along the path with neighbors $y^-,y^+ \not\in Y$. Note that $(y^-,y^+) \in G^2$, and shortcutting over all such single node occurrences from $Y$ in the path gives us a $st$-path in $G^2 \setminus Y$, a contradiction.
	
	This proves that a minimum size downgrading vertex set $Y$ in $G$ whose downgrading produces a zero-cost $st$-cut is also a minimum vertex-cut in $G^2$. Then, one can check if a zero-cut solution exists with budget $b$ for DDNP by simply checking if the minimum vertex-cut in $G^2$ is at most $b$.
\end{proof}

Now, to prove Theorem \ref{NVDPdetect}, we have to slightly modify the graph $G$ and the construction of $G^2$ in order to adapt to the various costs. Our goal is still to look for a minimum vertex cut in an auxiliary graph using $r(v)$ as vertex cost.

\begin{proof}
	Given an instance of NVDP on $G$ with a budget $b$, vertex downgrading costs $r(v)$ and arc costs $c_e, c_{eu}, c_{ev}, c_{eu}$, consider the following auxiliary graph $H$. First, we delete any arc $e$ where $c_e=0$ since they are free to cut anyways. For every arc $e=uv$ where $c_{euv}>0$, subdivide $e$ with a vertex $t_e$ and let $r(t_e)=\infty$. In some sense, since $c_e, c_{eu}, c_{ev}\ge c_{euv}>0$, downgrading $u, v$ cannot reduce the cost of $e$ to zero. Then, we should never be allowed to touch the vertex $t_e$. Let $T$ be the set of all newly-added subdivided vertices. To finish constructing $H$, our next step is to properly simulate $H^2$.
	
	We classify arcs into five types based on which of its costs are zero. Note that we no longer have any arcs where $c_e=0$. Let $A_0:=\{ e=uv: c_{eu}=c_{ev}=c_{euv}=0\}$, the arcs where downgrading either ends reduce its cost to zero. Let $A_{l0}:= \{e=uv: c_{eu}=c_{euv}=0, c_{ev}>0\}, A_{0r}:= \{e=uv: c_{ev}=c_{euv}=0, c_{eu}>0\}, A_{l0r}:= \{e=uv: c_{euv}=0, c_{eu}, c_{ev}>0\}$ respectively represent arcs that require the downgrading of its left tail, its right head, or both in order to reduce its cost. Let $A_1$ be all remaining arcs, those incident to the newly subdivided vertex $t_e$. Now, for every path $uvw$ of length two, we consider adding the arc $uw$ based on the following rules (see Figure \ref{nvdp0detex} for example of newly added arcs):

	\begin{center}
		\begin{tabular}{|c||c|c|c|c|c|}
			\hline
			\multicolumn{6}{|c|}{If $v\notin T$}\\ \hline
			Add $uw$? & \multicolumn{5}{c|}{$vw\in$}\\ \hline \hline
			$uv\in$ & $A_0$ & $A_{l0}$ & $A_{0r}$ & $A_{l0r}$ & $A_1$\\ \hline \hline
			$A_0$ & No & No & No &No & No\\ \hline
			$A_{l0}$ & No & No & Yes & Yes & Yes\\ \hline
			$A_{0r}$ & No & No & No & No & No \\ \hline
			$A_{l0r}$ & No & No & Yes & Yes & Yes \\ \hline
			$A_1$ & No & No & Yes & Yes & Yes\\ \hline \hline
			\multicolumn{6}{|c|}{If $v=t_e\in T$, do not add $uw$}\\
			\hline
		\end{tabular}
	\end{center}
	
	\begin{figure}
		\centering
		\includegraphics{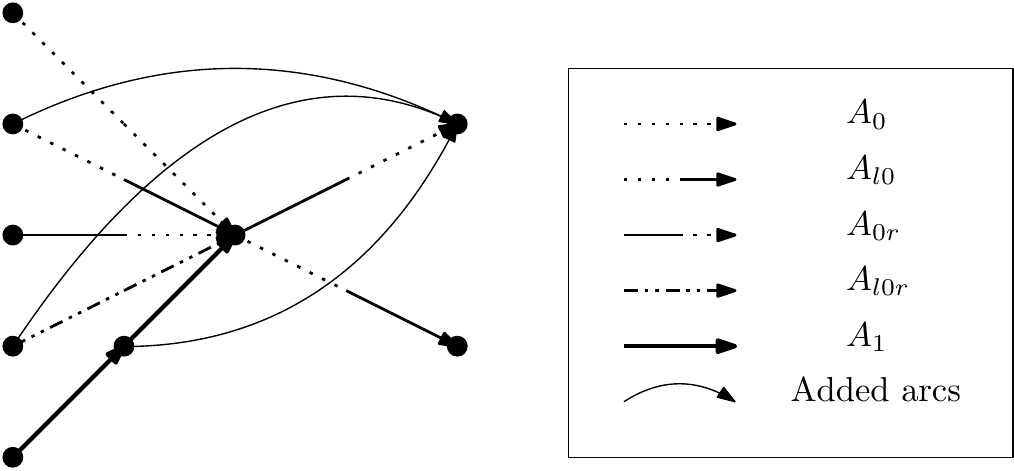}
		\caption{Example of Added Arcs in $H$ \label{nvdp0detex}}
	\end{figure}

	The idea is similar to the proof for DDNP. If $uv, vw\in A_{l0r}$, downgrading $v$ is not enough to cut $uv, vw$ for free. Thus we add arc $uw$ to keep the connectivity. If $uv\in A_{0r}$, then downgrading $v$ should reduce the cost of $uv$ to 0. Thus, we do not want to bypass $v$ by adding an arc $uw$. If $v=t_e\in T$, since $r(v)$ has high cost, we never cut it so we do not need to strengthen the connectivity by adding arcs $uw$.
	
	Let $(X, F)$ be a solution to NVDP where $\Sigma_{v\in X} r(v)$ is minimum, $F$ is a $st$-cut and $c^X(F)=0$. Let $V_s$ be all vertices connected to $s$ in $G\backslash F$. We claim that $X$ is a vertex cut in $H$. Suppose not and there exists a $st$-path in $H$ and let $uv$ be the first arc of the path leaves $V_s$. If $v=t_e\in T$, then arc $e\in F$, contradicting $c^X(F)=0$. If $uv\in A(G)$, then $uv\in F$. Since $u, v\notin X$, $c^X(uv)>0$, a contradiction. If $uv$ is a newly added arc, then there exist $v'\in V(G)$ such that $uv'v$ is a path in $G$. By definition, $V_s\cap T=\emptyset$ so $u, v\notin T$. Then, there are only four cases where we add arc $uw$ to create $H$. In all cases, downgrading $v'$ does not reduce the cost of $uv'', v'v$ to $0$. Since at least one of $uv', v'v\in F$, it contradicts $c^X(F)=0$.
	
	Given a minimum vertex cut $Y$ in $H$, we claim downgrading $Y$ in $G$ creates a $st$-cut of zero cost. Note that $Y\cap T=\emptyset$ since any vertex in $T$ is too expensive to cut. Suppose for a contradiction there is an $st$-path $P$ that does not cross an arc with cost $0$ after downgrading $Y$. Let $P'$ be the corresponding path in $H$. If $P$ contains two consecutive vertices $u, v\in Y$, then $c_{euv}>0$ and it would have been subdivided. This implies there are no consecutive vertices of $Y$ in $P'$. Let $uvw$ be a segment of $P'$ where $v\in Y$. Since downgrading $v$ does not reduce its incident arcs to a cost of $0$, it follows that $uv\in A_{l0}\cup A_{l0r}\cup A_1$ and $vw\in A_{0r}\cup A_{l0r}\cup A_1$. Then it follows that $uw\in A(H)$. Then, every vertex $v\in Y\cap V(P')$ can be bypassed, a contradiction.
	
	This implies a minimum vertex cut in $H$ is a downgrading set that creates a zero-cost cut in $G$. Then, by checking the min-vertex cut cost of $H$, we can determine whether a zero-solution exists for $G$ with budget $b$.
\end{proof}

\subsection{Validity of DLP}

The following lemmas shows the validity of our defined DLP for NVDP.

\begin{lemma}
	An optimal solution to NVDP provides a feasible integral solution to DLP with the same cost.
\end{lemma}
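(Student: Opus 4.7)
\medskip

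\noindent\textbf{Proof proposal.} The plan is constructive: given an optimal NVDP solution consisting of a downgrading set $Y^*\subseteq V(G)\setminus\{s,t\}$ with $\sum_{v\in Y^*}r(v)\le b$ and an $st$-cut $F^*\subseteq A(G)$ with $V(G)$ partitioned as $V_s\sqcup V_t$ (the sides of $F^*$, $s\in V_s,t\in V_t$), I will build integral values $(x,y,d)$ for the variables of DLP on the auxiliary graph $H$ and verify that each DLP constraint holds while the DLP objective equals $c^{Y^*}(F^*)$.

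\medskip

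\noindent For the downgrading variables set $y_v=1$ if $v\in Y^*$ and $y_v=0$ otherwise; this is integral and satisfies (\ref{DLPb}) because $\sum_v r(v)y_v=r(Y^*)\le b$. For each original arc $e=uv\in A(G)$ select a single subarc of the path $(uu)(uv)_1(uv)_2(uv)_3(vv)$ to play the role of the ``cut'' arc as follows. If $e\notin F^*$ set all four $x_{[uv]_i}=0$. If $e\in F^*$ set exactly one $x_{[uv]_{k(e)}}=1$ (the rest zero), where $k(e)\in\{0,1,2,3\}$ is chosen according to the table $k(e)=0$ if $u,v\notin Y^*$; $k(e)=1$ if $u\in Y^*, v\notin Y^*$; $k(e)=2$ if $u,v\in Y^*$; $k(e)=3$ if $u\notin Y^*, v\in Y^*$. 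By the definition of $c$ on $H$, this choice is exactly such that the DLP objective contribution of arc $e$ equals $c^{Y^*}(e)$, so summing over all $e\in F^*$ recovers $c^{Y^*}(F^*)$. Finally define the potentials $d$ by $d_{(vv)}=0$ if $v\in V_s$ and $d_{(vv)}=1$ if $v\in V_t$, and for each $e=uv$ set $d_{(uv)_i}=0$ for $i\le k(e)$ and $d_{(uv)_i}=1$ for $i>k(e)$ when $e\in F^*$, and $d_{(uv)_i}$ equal to the common endpoint label when $e\notin F^*$. In particular $d_{(ss)}=0$, $d_{(tt)}=1$, $y_s=y_t=0$ as required.

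\medskip

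\noindent It then remains to verify (\ref{DLPtri}) and (\ref{DLPint}). For (\ref{DLPtri}) on an arc $[uv]_i$: when $e\notin F^*$, both potentials along the path are equal and $x_{[uv]_i}=0$, so the inequality is $0\le 0$; when $e\in F^*$, the potentials along the subdivided path jump from $0$ to $1$ exactly at index $k(e)$ where $x_{[uv]_{k(e)}}=1$, and all other steps have equal potentials and zero $x$-value. For (\ref{DLPint}), consider a path $(uv)_3(vv)(vw)_1$ for some $u,v,w$. At most one of the arcs $uv,vw$ can lie in $F^*$ since $uv\in F^*$ forces $v\in V_t$ while $vw\in F^*$ forces $v\in V_s$. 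Hence at most one of the four terms on the left is nonzero, and that term equals $1$ only when $k(uv)\in\{2,3\}$ or $k(vw)\in\{1,2\}$; in every such case the downgrading rule that determined $k$ required $v\in Y^*$, i.e.\ $y_v=1$, so the constraint is satisfied. When no such term is $1$, the inequality is $0\le y_v$.

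\medskip

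\noindent The construction is routine and I do not anticipate a genuine obstacle; the only step that deserves care is the case analysis for constraint (\ref{DLPint}), where the key observation that must be made explicit is that the directedness of $F^*$ prevents both $uv$ and $vw$ from being cut simultaneously, so the sum of four $0/1$ terms is actually at most $1$ and is triggered only when $v$ has been downgraded. Once this is noted, feasibility follows immediately and the objective equality is by construction, completing the reduction.
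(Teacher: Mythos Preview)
Your approach is essentially the same as the paper's: identical $x$ and $y$ assignments by case analysis on membership of $u,v$ in $Y^*$, with potentials set to $0$/$1$ according to the sides of the cut; your explicit verification of (\ref{DLPint}) via the observation that $uv$ and $vw$ cannot both cross the cut is exactly the point the paper leaves implicit when it asserts feasibility ``by construction''.

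One small gap to patch: your definition of $d_{(uv)_i}$ for $e\notin F^*$ as ``the common endpoint label'' is undefined for backward arcs $uv$ with $u\in V_t$, $v\in V_s$, since there the endpoint labels differ ($d_{(uu)}=1$, $d_{(vv)}=0$). This is easily fixed---set the three intermediate potentials to $0$ so the path drops from $1$ to $0$ at the first step, which still satisfies (\ref{DLPtri}) with zero $x$---or, as the paper does, simply define $d$ via reachability from $(ss)$ in $H$ after deleting the arcs with $x=1$. Relatedly, you should state explicitly that you are taking $F^*=\delta^+(V_s)$ (valid WLOG since costs are nonnegative and an optimal cut can be taken inclusion-minimal), because your claim ``$uv\in F^*$ forces $v\in V_t$'' in the verification of (\ref{DLPint}) relies on this.
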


\begin{proof}
	Given a digraph $G$ with cost functions $c_e, c_{eu}, c_{ev}, c_{euv}$, a source $s$ and a sink $t$, let $Y\subseteq V(G), F\subseteq A(G)$ be an optimal solution to NVDP where $r(Y)\le b, F$ is a $st$-cut and $c^Y(F)$ is minimum. Then, a feasible solution $(x, y, d)$ to DLP on the graph $H$ can be constructed as follows:
	
	\begin{itemize}
		\item For the cut variables $x$, let
		\begin{itemize}
			\item $x_{[uv]_0}=1$ if $uv\in F$ and $u, v\notin Y$, 0 otherwise,
			\item $x_{[uv]_1}=1$ if $uv\in F$ and $u\in Y, v\notin Y$, 0 otherwise,
			\item  $x_{[uv]_2}=1$ if $uv\in F, u, v\in Y$ , 0 otherwise,
			\item $x_{[uv]_3}=1$ if $uv\in F, u\notin Y, v\in Y$, 0 otherwise.
		\end{itemize}
		\item For the downgrading variables $y$, let $y_u=1$ if $u\in Y$, 0 otherwise.
		\item For the potential variables $d$, let $d_{[uv]_i}=0$ if $[uv]_i\in S$ and $1$ otherwise,
	\end{itemize}
	where we define $S, T$ as follows.
	Let $F^*$ be the set of arcs in $H$ whose $x$ variable is $1$. We claim that $F^*$ is a $st$-cut in $H$. Note that every $st$-path $Q$ in $H$ corresponds to a $st$-path $P$ in $G$. Then, there is an arc $uv$ in $P$ that is also in $F$. Then, it follows from construction that the $x$ value for one of $[uv]_0, [uv]_1, [uv]_2, [uv]_3$ is 1 and thus there exists $i=0, 1, 2, 3$ such that $[uv]_i\in F^*$. Note that $[uv]_i$ is also in $Q$. Therefore $F^*$ is a $st$-cut in $H$. Then, let $S$ be the set of vertices in $H\backslash F^*$ that is connected to the source $s$ and let $T=V(H)\backslash S$.

	Note that by construction, $(x, y, d)$ is integral and is a feasible solution to DLP. The final objective value $\Sigma_{[uv]_i\in A(H)} c([uv]_i)x_{[uv]_i}=\Sigma_{[uv]_i\in F^*} c([uv]_i)$ and by construction, it matches the cost $c^Y(F^*)$.
\end{proof}

\begin{lemma}
	An integral solution $(x^*, y^*, d^*)$ to DLP with objective value $c^*$ corresponds to a feasible solution $(Y^*, E^*)$ to NVDP such that $c^{Y^*}(E^*)\le c^*$.
	\label{valid2}
\end{lemma}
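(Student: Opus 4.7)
The plan is to read off $Y^*\subseteq V(G)\setminus\{s,t\}$ and an $st$-cut $E^*\subseteq A(G)$ directly from the integral solution $(x^*,y^*,d^*)$, and then verify budget feasibility, the $st$-cut property, and the cost bound in order. First I would set $Y^*=\{v\in V(G):y^*_v\ge 1\}$; since $y^*_s=y^*_t=0$, we have $Y^*\subseteq V(G)\setminus\{s,t\}$, and constraint (\ref{DLPb}) immediately yields $\sum_{v\in Y^*}r(v)\le \sum_{v}r(v)y^*_v\le b$. Next, let $F^*=\{[uv]_i\in A(H):x^*_{[uv]_i}\ge 1\}$ and $E^*=\{uv\in A(G):\{[uv]_0,[uv]_1,[uv]_2,[uv]_3\}\cap F^*\ne\emptyset\}$. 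The set $F^*$ is an $(ss)$-$(tt)$-cut in $H$: any directed $(ss)$-$(tt)$-path $P$ avoiding $F^*$ carries $x^*\equiv 0$, and telescoping (\ref{DLPtri}) along $P$ gives $d^*_{(tt)}\le d^*_{(ss)}=0$, contradicting $d^*_{(tt)}=1$. Since every directed $st$-path in $G$ lifts through the subdivided arcs to a directed $(ss)$-$(tt)$-path in $H$, $E^*$ is an $st$-cut in $G$.

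The crux is the arc-wise inequality $c^{Y^*}(uv)\le \sum_{i=0}^{3}c([uv]_i)\,x^*_{[uv]_i}$ for every $uv\in E^*$; summing over $E^*$ gives $c^{Y^*}(E^*)\le c^*$. I would prove it by a four-way case split on $(y^*_u,y^*_v)$, using (\ref{DLPint}) to zero out the ``over-discounted'' variables whenever an endpoint is not downgraded. The path $(pu)_3(uu)(uv)_1$ instance of (\ref{DLPint}) with $y^*_u=0$ forces $x^*_{[uv]_1}=x^*_{[uv]_2}=0$, while the path $(uv)_3(vv)(vw)_1$ instance with $y^*_v=0$ forces $x^*_{[uv]_2}=x^*_{[uv]_3}=0$. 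Together with the monotonicity $c_e\ge c_{eu},c_{ev}\ge c_{euv}$, this gives: (i) $u,v\notin Y^*$: only $[uv]_0$ is active, matching $c^{Y^*}(uv)=c_e$; (ii) $u\in Y^*, v\notin Y^*$: only $[uv]_0,[uv]_1$ are active, both with cost $\ge c_{eu}=c^{Y^*}(uv)$; (iii) the symmetric case; (iv) $u,v\in Y^*$: all four are active, each with cost $\ge c_{euv}=c^{Y^*}(uv)$.

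The main technical obstacle is the boundary situation where $u$ has no in-arc (including $u=s$) or $v$ has no out-arc (including $v=t$): there the relevant instance of (\ref{DLPint}) does not exist, so the zeroing argument above is incomplete. This is precisely the gap between (\ref{DLPint}) and the strictly weaker per-arc inequalities $x_{[uv]_1}+x_{[uv]_2}\le y_u$ and $x_{[uv]_2}+x_{[uv]_3}\le y_v$ that the paper motivates before tightening to the path form. I would close the gap either by including those per-arc inequalities in the LP (they do not change $c^*$, being implied by (\ref{DLPint}) at every interior vertex) or, more cleanly, by preprocessing $G$ with a zero-cost dummy in-arc at $s$ and a zero-cost dummy out-arc at $t$ so that (\ref{DLPint}) covers every arc of $A(G)$. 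With this in place the case analysis applies uniformly, and summing over $uv\in E^*$ yields $c^{Y^*}(E^*)\le \sum_{[uv]_i\in A(H)}c([uv]_i)\,x^*_{[uv]_i}=c^*$, completing the proof.
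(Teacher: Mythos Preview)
Your proof follows essentially the same approach as the paper's: define $Y^*$ and $E^*$ from the integral LP values, verify the budget and cut properties, and then prove the arc-wise inequality $c^{Y^*}(uv)\le \sum_{i}c([uv]_i)x^*_{[uv]_i}$ by a case split on membership of $u,v$ in $Y^*$, using constraint~(\ref{DLPint}) to zero out the disallowed variables (this is exactly the paper's Claim~\ref{comparec}). You go slightly beyond the paper by flagging the boundary case where a vertex has no in-arc or no out-arc so that the relevant instance of~(\ref{DLPint}) is absent; the paper's proof silently assumes this does not occur, and your proposed fixes (dummy zero-cost arcs at $s$ and $t$, or including the per-arc inequalities $x_{[uv]_1}+x_{[uv]_2}\le y_u$ and $x_{[uv]_2}+x_{[uv]_3}\le y_v$) are both sound ways to close that gap.
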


\begin{proof}
	Given a directed graph $G$ and its auxiliary graph $H$, let $(x^*, y^*, d^*)$ be an optimal integral solution to DLP with an objective value of $c^*$. Let $F^*\subseteq A(H)$ be the set of arcs whose $x^*$ value is 1. Let $Y^*\subseteq V^0(H)$ whose $y$ value is 1. Let $E^*\subseteq A(G) =\{uv\in A(G): [uv]_i\in F^* \text{ for some } i=0, 1, 2, 3\}$ be the set of original arcs of those in $F^*$.
	
	Note that by construction, $Y^*$ does not violate the budget constraint. Every $st$-path in $G$ corresponds directly to a $st$-path in $H$. Since $F^*$ is a $st$-cut in $H$, it follows that $E^*$ is a $st$-cut in $G$. Then it remains to show that $c^*\ge c^{Y^*}(E^*)$.
	
	Note that
	$$c^*=\Sigma_{[uv]_i\in A(H)} c([uv]_i)x^*_{[uv]_i}= \Sigma_{e=uv\in A(G)}c_ex^*_{[uv]_0}+c_{eu}x^*_{[uv]_1}+ c_{euv}x^*_{[uv]_2}+c_{ev}x^*_{[uv]_3}.$$
	Meanwhile, note that $c^{Y^*}(E^*)=\Sigma_{e=uv\in A(G)} c^{Y^*}(e)$. Thus, it suffices to prove the following claim.
	
	\begin{claim}
		For every arc $e=uv\in A(G)$, $\Sigma_{i=0}^3 c([uv]_i) x^*_{[uv]_i} \ge c^{Y^*}(e)$.
		\label{comparec}
	\end{claim}
	
	First, note that if $e=uv\notin E^*$, then $c^{Y^*}(e)=0$ by definition fo $E^*$. Then, the inequality is trivially true. Thus, we may assume $uv\in E^*$ which implies there exist $i=0, 1, 2, 3$ such that $[uv]_i\in F^*$ and $x^*_{[uv]_i}=1$. We will now break into cases depending on whether $u, v\in Y^*$.
	
	Suppose $u, v\notin Y^*$. Then, $y^*_u=y^*_v=0$ and by constraint (\ref{DLPint}) in DLP, it follows that the $^*x$ value for $[uv]_1, [uv]_2, [uv]_3$ are all $0$. Then, $[uv]_0\in F^*$ and $\Sigma_{i=0}^3 c([uv]_i)x^*_{[uv]_i}=c_e=c^{Y^*}(e)$.
	
	Now, assume $u\in Y^*, v\notin Y^*$. By constraint (\ref{DLPint}), $x^*_{[uv]_2}+x^*_{[uv]_3}\le y^*_v=0$ and thus only the $x^*$ value for $[uv]_0, [uv]_1$ can be 1. Since we have an integral solution, it follows that $x^*_{[uv]_0}+x^*_{[uv]_1}\ge 1$, since $e\in E^*$. Note that $c_{e}\ge c_{eu}$. Then $\Sigma_{i=0}^3 c_{[uv]_i}x^*_{[uv]_i} = c_ex^*_{[uv]_0}+c_{eu}x^*_{[uv]_1} \ge c_{eu}(x^*_{[uv]_0}+x^*_{[uv]_1})\ge c_{eu}=c^{Y^*}(e)$. Note that a similar argument can be made for the case when $u\notin Y^*, v\in Y^*$.
	
	Lastly, assume both $u, v\in Y^*$. Then $c^{Y^*}(e)=c_{euv}$. Note that $c_e, c_{eu}, c_{ev}\ge c_{euv}$. Then, $\Sigma_{i=0}^3 c([uv]_i) x^*_{[uv]_i} \ge \Sigma_{i=0}^3 c_{euv} x^*_{[uv]_i}\ge c_{euv}$. The last inequality is due to the fact that there exists $i=0, 1, 2, 3$ such that $[uv]_i\in F^*$. This completes the proof of claim and thus also our lemma.

\end{proof}
	\section{NVDP with $k$ Levels (NVLDP)}
\label{sec:nvldp}

In this section, we prove the following theorem.

\begin{theorem}
	There exists a polytime algorithm that provides a $(4k, 4k)$-approximation to NVLDP.
\end{theorem}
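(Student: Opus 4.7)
The plan is to extend the LP-rounding scheme of Section~\ref{sec:nvdp} to $k$ levels and the $(k+1)^2$ per-arc costs.

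\textbf{Multi-level auxiliary graph and LP.} For each arc $e=uv \in A(G)$, replace $e$ by a gadget in $H$: a path whose $(k+1)^2$ arcs are indexed by level pairs $(i,j) \in \{0,\ldots,k\}^2$, the arc labelled $(i,j)$ carrying cost $c_{i,j}(e)$. Arrange them in a ``snake'' so that arcs sharing a $v$-level $j$ form a contiguous block, with high-$j$ blocks placed closer to the $(vv)$-end; mirror this for out-gadgets of $v$, so that on those the block with first index $j$ is nearest $(vv)$ when $j$ is large. Introduce cut variables $x$ on the gadget arcs, monotone level indicators $y_{v,1}\ge y_{v,2}\ge \cdots \ge y_{v,k}\ge 0$ for each $v\in V(G)\setminus\{s,t\}$ (interpreted as ``$v$ is downgraded to level at least $j$''), and distances $d$. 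The relaxation LDLP minimizes $\sum c_{i,j}(e)\,x_{[uv]_{i,j}}$ subject to $d_{(ss)}=0$, $d_{(tt)}=1$, $y_{s,j}=y_{t,j}=0$, the triangle inequality along each gadget arc, the budget
\[ \sum_v \sum_{j=1}^k \bigl(r_j(v)-r_{j-1}(v)\bigr)\, y_{v,j} \;\le\; b, \]
and, for every triple $(u,v,w)$ with $uv,vw \in A(G)$ and every $j\in\{1,\ldots,k\}$, the analogue of constraint~(\ref{DLPint}):
\[ \sum_{(i,j'):\, j'\ge j} x_{[uv]_{i,j'}} \;+\; \sum_{(i',j''):\, i'\ge j} x_{[vw]_{i',j''}} \;\le\; y_{v,j}. \]
Validity of LDLP (integer solutions correspond to NVLDP solutions of the same cost in both directions) follows from a level-indexed generalisation of the case analysis behind Lemma~\ref{valid2}.

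\textbf{Contraction, rounding, and Markov.} Build $H'$ by partitioning each arc $uv$'s gadget arcs into the $k+1$ classes grouped by $v$-level and contracting all but the class of maximum total $x^*$-weight. This keeps at least a $1/(k+1)\ge 1/(2k)$ fraction of each gadget's $x^*$-length, so the source-sink distance in $H'$ satisfies $D_{H'}((tt))\ge 1/(2k)$. Run Algorithm~\ref{alg:sp} on $H'$, extracting at each iteration an arc cut $E_\alpha\subseteq A(G)$ together with a leveling $L_\alpha:V(G)\to\{0,\ldots,k\}$ that assigns $v$ the largest level appearing on any $v$-incident cut arc. Then, with $\alpha$ uniform on $[0,D_{H'}((tt))]$: each preserved gadget arc lies in the level cut $X_\alpha$ with probability at most $2k\cdot x^*_{\text{arc}}$, and the monotonicity $c_{i,j}(e)\ge c_{i',j'}(e)$ for $i'\ge i,j'\ge j$ ensures that the $(i,j)$-label cost upper-bounds the actual $c_{L_\alpha(u),L_\alpha(v)}(e)$ cost, giving $\mathbb{E}[c^{L_\alpha}(E_\alpha)]\le 2k\,c^*$. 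For each $v$ and $j$, the snake layout together with the class-contraction forces the preserved arcs with $v$-level $\ge j$, across the pair $(uv,vw)$ of earliest entry and latest exit among such arcs, to form a single contiguous segment through $(vv)$; its $x^*$-length is bounded by $y^*_{v,j}$ via the LDLP constraint, so $\Pr[L_\alpha(v)\ge j]\le 2k\,y^*_{v,j}$, and telescoping with the marginals $r_j(v)-r_{j-1}(v)$ yields $\mathbb{E}[r^{L_\alpha}]\le 2k\,b$. Markov on each expectation simultaneously delivers a single $\alpha$ meeting both bounds within a factor of~$2$, i.e., a $(4k,4k)$-solution; the argument of Lemma~\ref{alg-lemma} shows the ball-growing sweep locates it in polynomial time.

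\textbf{Main obstacle.} The delicate point is engineering the gadget and the $(k+1)$-way contraction so that ``preserved arcs with $v$-level $\ge j$'' remain contiguous around $(vv)$ for every $v$ and \emph{every} $j$ simultaneously. Without this simultaneous contiguity, the range-of-$\alpha$ argument would have to union-bound separately over levels and incur an additional factor of $k$ in the downgrade-cost estimate, inflating the bound to $(4k^2,4k^2)$. The $v$-level partition achieves the desired contiguity: any retained snake-row with $v$-level $\ge j$ automatically contributes a contiguous block at the $v$-end of the relevant gadget, while rows with $v$-level $<j$, retained or not, are invisible to the level-$j$ range. Verifying LDLP validity in both directions is the most technical piece, generalising Lemma~\ref{valid2}'s case analysis across all $(k+1)^2$ label combinations and checking that an integer LP solution selects on each cut arc precisely the label $(i,j)$ that matches the chosen levels of its endpoints.
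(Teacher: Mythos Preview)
Your route is genuinely different from the paper's. The paper keeps the same per-level LP as for NVDP (one variable $y^i_v$ per level, budget $\sum_i r_i(v)y^i_v$, and the non-cumulative constraint $y^i_v\ge\sum_jx_{[uv]_{j,i}}+\sum_{j'}x_{[vw]_{i,j'}}$) and then proves a \emph{sparsity lemma}: by freezing the row and column sums of the $(k+1)\times(k+1)$ array $(x^*_{[uv]_{i,j}})$ and reoptimising, the Rank Lemma forces at most $2k+1$ nonzero entries per arc, with a further claim that among these there is a pair $[uv]_{i',0},[uv]_{0,j'}$ in the first row/column. The contraction then keeps \emph{a single arc} (or that pair) per gadget, so every retained arc with $v$-level $i$ is literally incident to $(vv)$ in $H'$; this is what makes the star-at-$(vv)$ diameter bound by $y^i_v$ go through. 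Your scheme instead changes the LP to cumulative monotone indicators with a telescoped budget, partitions each gadget into $k+1$ head-level rows, and keeps the heaviest row---eliminating the need for any sparsity or Rank-Lemma argument. That is a real simplification if it works.

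The gap is in the contiguity claim. After your contraction, the retained block on an in-arc $u'v$ with $j^*_{u'}\ge j$ is a path of $k+1$ arcs from $(u'u')$ to $(vv)$, and its internal nodes have distances $D((u'u')),\,D((u'u'))+x^*_1,\ldots$ which can all lie \emph{above} $D((vv))$ whenever $(vv)$ is reached more cheaply through some other predecessor $u''$. In that case arcs inside the $u'v$-row are cut at thresholds $\alpha\in[D((u'u')),D((u'u'))+L_{u'})$ disjoint from the out-side segment $[D((vv)),D((vv))+\text{sub-row}_{w^*})$, so the set of $\alpha$ with $L_\alpha(v)\ge j$ is not a single interval and its measure is not controlled by any one instance of your constraint. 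Concretely, with two in-neighbours $u_1,u_2$ having $D((u_1u_1))=0$, $D((u_2u_2))=0.5$, row lengths $0.1$ each, and $D((vv))=0.1$, the level-$j$ event occurs for $\alpha\in[0,0.1)\cup[0.5,0.6)$, measure $0.2$, while $y^*_{v,j}$ is only forced to be $\ge 0.1+\text{(out part)}$. The paper's single-arc contraction avoids this because an in-arc is then directly $(u'u')\to(vv)$ and its cut range is $[D((u'u')),D((vv)))$, which is empty when $D((u'u'))\ge D((vv))$; keeping a whole row of $k+1$ arcs re-introduces intermediate nodes whose cut ranges can stray. You would need either to enforce that only the final arc of each retained in-row (the one actually incident to $(vv)$) can witness $L_\alpha(v)\ge j$, or to adopt a contraction that, like the paper's, leaves at most one arc per gadget touching $(vv)$.
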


The strategy is similar to that for NVDP. We first create a new graph and IP model that solves NVLDP. Then we study the solution to the LP relaxation of the problem. We similarly create an auxiliary graph $H$ and transforms it slightly so that the ball-growing method can be applied on $H$. Then by analyzing a random algorithm and use Markov's Inequality, we can show the existence of a good approximation. Then we simply greedily find such solution. The major difference is we need to be more careful when working with $H$.  In NVDP, we first turn each arc into a path of length four to create $H$ in order to better represent the four different costs associated with each arc. Then, we split the path further into two parts, aided and unaided and kept only one of the parts before running the ball-growing algorithm. If we were to repeat similarly here, we would turn each arc into a path of length $(k+1)^2$ (since we have $(k+1)^2$ many costs per arc) then we need to split it properly in order to run the ball-growing algorithm. However, this splitting step is less obvious than before. The naive way is to keep only the longest arc in the path but that blows up the approximation to a $O(k^2)$ factor. Thus, further analysis is needed in order to show (See Lemma~\ref{sparse-lemma}) that in an optimal solution to the LP, most arcs has length 0 and the path only has $O(k)$ many non-zero length arcs and thus keeping the longest arc is not too detrimental.

\paragraph{LP Model for NVLDP}
Similar to NVDP, we first transform the graph $G$ by replacing every arc $uv$ with a path of length $(k+1)^2$. The path contains arcs $[uv]_{i, j}$ where $0\le i, j\le k$ with an associated cost of $c_{i, j}(uv)$. The order of these arcs in the path does not matter but for ease of notation and consistancy, we fix in lexicographic order $[uv]_{0, 0}[uv]_{0, 1}...[uv]_{0, k}[uv]_{1, 0}...[uv]_{1, k}[uv]_{2, 0}...[uv]_{k, k}$. The vertices on the path is labelled as $(uv)_0, ..., (uv)_{(k+1)^2}$ and thus arc $[uv]_{i, j}=(uv)_{i(k+1)+j}(uv)_{i(k+1)+j+1}$. We similarly introduce cut variable $x$ for every arc in $H$ to represent whether it is in the final cut. We also introduce $k$ downgrading variables $y^i$ for each original vertex $v$ to represent whether it is downgraded to level $i$. Potential variables $d$ are also needed for every vertex in $H$. Then, we can obtain the following LP (LDLP):

	\begin{align}
	\qquad \min \qquad &\sum_{[uv]_{i, j}\in A(H)}c_{i, j}(uv)x_{[uv]_{i, j}}\notag \\
	\text{s.t.} \qquad & \sum_{i=1}^k \sum_{(vv)\in V(H)}r_i(v)y^i_v\le b   \label{LDLPb}\\
	&d_{(uv)_{i(k+1)+j+1}}\le d_{(uv)_{i(k+1)+j}} +x_{[uv]_{i, j}} \ \   \forall [uv]_{i, j}\in A(H) \label{LDLPtri}\\
	&y^i_v\ge \sum_{j=0}^kx_{[uv]_{j, i}} +\sum_{j'=0}^k x_{[vw]_{i, j'}} \ \  \forall 1\le i\le k,\ \forall \text{ paths } uvw \in G \label{LDLPint}\\
	&d_{(ss)}=0, d_{(tt)}= 1 , y^0_s=0,  y^0_t=0	\notag
	\end{align}

The intuition behind these constraints are similar to those for DLP. Constraint \ref{LDLPb} bounds the total amount of budget for downgrading. Constraint \ref{LDLPtri} is simply the triangle-inequality needed for every arc in $H$. Constraint \ref{LDLPint} relates the downgrading variable $y^i_v$ to its associated arcs. The idea is if $y^i_v=1$, then we are paying to downgrade $v$ to level $i$ and thus the cost of its incident arcs $uv, vw$ should be $c_{j, i}(uv), c_{i, j'}(vw)$ respectively (subject to how $u, w$ are downgraded). Thus, $y^i_v$ is a natural upperbound for all arcs involving $v$ at level $i$. With similar arguments, we can strengthen this to upperbound the sum of all such variables of a single $(uu)(vv)(ww)$-path, giving us constraint \ref{LDLPint}. Note that $H$ has a lot more variables than before. The number of vertices and arcs are now on the order of $(k+1)^2|A(G)|$ which is still polynomial. The number of $x, y, d$ variables are on the order of $(k+1)^2n^2, kn, (k+1)^2n^2$ respectively. Constraint \ref{LDLPb}, \ref{LDLPtri} are thus still polynomially many and for every vertex $v$, we have at most $kn^2$ many Constraint \ref{LDLPint}. Thus LDLP is still solvable in poly-time.

\paragraph{Analyzing an Optimal Solution of LDLP}
Let $(x^*, y^*, d^*)$ be an optimal solution to LDLP. We first prove the following lemma:

\begin{lemma}
\label{sparse-lemma}
	There exists an optimal solution such that for any $uv\in A(G)$, there are at most $2k+1$ non-zero values of $x_{[uv]_{i, j}}$.
\end{lemma}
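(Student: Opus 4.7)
The plan is to start from an arbitrary optimal solution $(x^*,y^*,d^*)$ of LDLP and, arc by arc, sparsify the block of $x^*$-values on that arc while keeping every LDLP constraint intact and not raising the objective. Fix an arc $uv\in A(G)$ and arrange the values $\{x^*_{[uv]_{i,j}}\}_{0\le i,j\le k}$ into a $(k+1)\times(k+1)$ matrix $X$. Form the bipartite ``support graph'' $G_X$ on vertex set $\{R_0,\dots,R_k\}\cup\{C_0,\dots,C_k\}$ with an edge $R_iC_j$ exactly when $X[i][j]>0$. If $|E(G_X)|\le 2k+1$, this arc already satisfies the claim; otherwise, since $G_X$ has only $2k+2$ vertices, it must contain a cycle $\gamma$, which is of even length by bipartiteness.

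The sparsification move is a standard transportation-style shift along $\gamma$: properly 2-color its edges as $E^+$ and $E^-$ (so each vertex on $\gamma$ is incident to exactly one of each), and for $\epsilon\in\mathbb{R}$ update $X[i][j]\mapsto X[i][j]+\epsilon$ for $(i,j)\in E^+$ and $X[i][j]\mapsto X[i][j]-\epsilon$ for $(i,j)\in E^-$. By the alternation, every row-sum $\sum_j X[i][j]$, every column-sum $\sum_i X[i][j]$, and the total $\sum_{i,j}X[i][j]$ are preserved. This preserves feasibility of \eqref{LDLPint}: the term $\sum_{i'}x_{[uv]_{i,i'}}$ appearing in constraints from each path $zuv$ is row $i$'s sum, and the term $\sum_{i'}x_{[uv]_{i',j}}$ appearing in constraints from each path $uvw$ is column $j$'s sum, both unchanged. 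The triangle inequalities \eqref{LDLPtri} along the $(k+1)^2$-arc path from $(uu)$ to $(vv)$ can be re-established by simply resetting the intermediate potentials $d_{(uv)_l}$ — which appear in no other constraint — to the cumulative sum of the updated $x$-values from $(uu)$; because the total $\sum_{i,j}X[i][j]$ did not change, this reassignment still agrees with the fixed value $d^*_{(vv)}$ at the far endpoint. The budget \eqref{LDLPb} is untouched because $y^*$ is not modified.

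Choose the sign of $\epsilon$ so that the cost change $\pm \epsilon\bigl(\sum_{(i,j)\in E^+} c_{i,j}(uv)-\sum_{(i,j)\in E^-} c_{i,j}(uv)\bigr)$ is non-positive, and pick $|\epsilon|$ to be the minimum of $X[i][j]$ over the entries on the ``shrinking'' side of $\gamma$. After the shift, at least one entry on that side vanishes, so $|E(G_X)|$ strictly decreases, and the solution remains feasible and optimal. Iterating cycle-elimination on every arc yields an optimal solution with at most $2k+1$ positive $x$-values per arc, matching the classical fact that a bipartite graph on $2(k+1)$ vertices containing no cycle has at most $2k+1$ edges.

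The main technical obstacle is bookkeeping: checking that the only copies of \eqref{LDLPint} that involve the modified variables really do reduce to the row-sums and column-sums of $X$ (so that the shift preserves them verbatim), and that the triangle inequalities disturbed by the shift involve only the ``private'' intermediate potentials of the $[uv]$-path, whose values we are free to retune. Once this is confirmed, the remainder of the argument is the standard basic-feasibility proof for transportation polytopes.
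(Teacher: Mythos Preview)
Your proof is correct and rests on the same key observation as the paper's: for a fixed arc $uv$, the only way the block $\{x_{[uv]_{i,j}}\}$ enters LDLP (besides non-negativity and the private triangle inequalities on the $(uu)$–$(vv)$ path) is through its row-sums and column-sums in Constraint~\eqref{LDLPint}, so one may freely re-optimize within the transportation polytope determined by those marginals.

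The only difference is in how sparsity is then extracted. The paper sets up an auxiliary LP (called MLP) over the $(k+1)\times(k+1)$ block with the row-sum and column-sum equalities, notes that these $2(k+1)$ constraints have rank at most $2k+1$ (the total-sum dependency), and invokes the Rank Lemma to conclude that some optimal basic solution has at most $2k+1$ nonzeros. You instead carry out the classical cycle-cancellation argument on the bipartite support graph directly, driving one entry to zero per step while preserving marginals and not increasing cost. These are two standard presentations of the same transportation-polytope fact; yours is more self-contained, while the paper's is terser and also immediately yields the submatrix corollary it uses later (at most $N_r+N_c-1$ nonzeros in any $N_r\times N_c$ subblock) by the same rank count.
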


\begin{proof}
Fix an arc $uv\in A(G)$ and let us look at all $x$-variables associated with this arc. Imagine the $x^*$ values are presented in a $k+1$ square matrix $M$ where $M_{i, j}=x_{[uv]_{i, j}}^*$. Define $row_i:= \Sigma_{j=0}^k M_{i, j}, col_j:= \Sigma_{i=0}^k M_{i, j}$ to be the row and column sum of $M$ respectively. Now consider the following LP:

	\begin{align} (MLP)
	\qquad \min \qquad &\sum_{0\le i, j\le k}c_{i, j}(uv)M_{i,j}\notag \\
	\text{s.t.} \qquad &\sum_{j=0}^k M_{i, j}= row_i & \forall 0\le i\le k \notag \\
	&\sum_{i=0}^k M_{i, j}= col_j & \forall 0\le j\le k \notag \\
	& M_{i, j}\ge 0 &\forall 0\le i, j\le k \notag
	\end{align}

Note that our $x^*$ is a feasible solution here. The converse is also true where if $x'$ is a feasible solution here, it can also be transformed into a feasible solution for LDLP naturally. Constraint \ref{LDLPint} remains satisfied due to the row sum and column sum constraints in MLP. The only adjustment we have to make is to the potential variables $y_{(uv)_{i(k+1)+j}}$, but this can be easily modified according to the new $x'$ values. Since the total sum of all the $x$ variables did not change, it does not affect the overall distance from $(uu)$ to $(vv)$ thus no other constraints are violated. Furthermore, it is easy to check that if $x'$ had a better objective value than $x^*$ in MLP, it would also provide a better objective value in LDLP.

Now, let us study the rank of the constraint matrix for MLP. There are $k+1$ row sum constraints and $k+1$ column sum constraints but they are linearly dependent since the sum of all rows equals to the sum of all columns. Since they are the only non-negative constraints, by the Rank Lemma for LP's, there exists an optimal solution with at most $2k+1$ non-zero values, thus proving our claim.

\end{proof}

The above technique where we isolate to study only the entries of the matrix $M$ can actually be applied to any submatrix of $M$ as well. Suppose $N$ is a submatrix of $M$ with $N_r$ rows and $N_c$ columns, then one can create a similar LP by restricting to only looking at the variables in $N$ and minimizing its cost subject to maintaining the row sum and column sum of $N$. Then, we can obtain a similar result:

\begin{corollary}
	An optimal solution of LDLP contains at most $N_r+N_c-1$ non-zero variables amongst those related to any submatrix $N$ of $M$.
\end{corollary}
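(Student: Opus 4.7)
The plan is to mimic the proof of Lemma~\ref{sparse-lemma} verbatim, only restricting the auxiliary transportation LP to the submatrix $N$. Fix an arc $uv \in A(G)$, let $(x^*,y^*,d^*)$ be the optimal solution of LDLP we start with, and let $M$ be the $(k+1)\times(k+1)$ matrix with $M_{i,j}=x^*_{[uv]_{i,j}}$. Let $N$ be the submatrix indexed by row set $R$ of size $N_r$ and column set $C$ of size $N_c$. Define a sub-LP (call it $N$-LP) with variables $\{M'_{i,j}:(i,j)\in R\times C\}\ge 0$ that minimizes $\sum_{(i,j)\in R\times C}c_{i,j}(uv)M'_{i,j}$ subject to $\sum_{j\in C}M'_{i,j}=\rho_i$ for $i\in R$ and $\sum_{i\in R}M'_{i,j}=\gamma_j$ for $j\in C$, where $\rho_i,\gamma_j$ are the row/column sums of $N$ under $x^*$.

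Next, I would count the rank of the equality system of $N$-LP. There are $N_r+N_c$ constraints, and they are linearly dependent exactly once, since $\sum_{i\in R}\rho_i=\sum_{j\in C}\gamma_j$ is the only relation among them (this is the standard transportation-polytope fact). Hence the rank is $N_r+N_c-1$, and by the Rank Lemma an optimal basic feasible solution $M'$ of $N$-LP has at most $N_r+N_c-1$ positive entries. Replacing the entries of $M$ inside $N$ by $M'$ gives a candidate solution $x'$ to LDLP (leaving all other $x^*$-variables, the $y^*$-variables, and all other arcs untouched), and we then re-set $d_{(uv)_m}$ for the intermediate nodes on the subdivided $uv$-path to be the prefix sums of $x'$ along that path.

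It remains to verify that $x'$ is feasible for LDLP and has objective no larger than $x^*$. The objective does not increase because $M$ restricted to $R\times C$ was itself feasible for $N$-LP, so $M'$ is at least as cheap, and no other entries of $x^*$ changed. For constraint~\eqref{LDLPint}, the relevant quantities are full row and column sums of $M$: for an index $i\in C$ the partial column sum inside $N$ is frozen at $\gamma_i$ by the $N$-LP constraints while the entries outside $R$ in that column are unchanged, so the full column sum is preserved; for $i\notin C$ the column is untouched entirely; the analogous statement holds for rows and for the outgoing arcs at $v$. For constraint~\eqref{LDLPtri}, what matters at the endpoints of the subdivided path is the total $\sum_{i,j}M'_{i,j}=\sum_{i\in R}\rho_i$, which equals the corresponding total under $M$, so $d^*_{(vv)}-d^*_{(uu)}$ is still attainable along the path, and the prefix-sum choice of potentials satisfies the triangle inequality at every intermediate vertex.

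The main obstacle in carrying this out is precisely the bookkeeping in the feasibility check for~\eqref{LDLPint}: that constraint involves the full column sum $\sum_{j=0}^{k}x_{[uv]_{j,i}}$ rather than the column sum restricted to $N$, so one must split cases on whether the index of the constraint falls in $R$ (respectively $C$) or not, and use the fact that $N$-LP freezes exactly the partial sums inside $N$ while entries outside $N$ are not modified. Once that case analysis is written out, the rank count and the Rank Lemma deliver the claimed bound $N_r+N_c-1$ immediately, completing the proof.
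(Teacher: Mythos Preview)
Your proposal is correct and follows exactly the approach the paper sketches: restrict the transportation-style sub-LP from Lemma~\ref{sparse-lemma} to the submatrix $N$, apply the Rank Lemma to the $N_r+N_c$ row/column-sum constraints (with one dependency), and swap the resulting sparse optimum back into the LDLP solution. Your feasibility verification for constraint~\eqref{LDLPint}---splitting into the cases $i\in C$ versus $i\notin C$ and observing that the partial sums inside $N$ are frozen while entries outside are untouched, so full row and column sums of $M$ are preserved---is more explicit than the paper, which simply asserts the technique carries over; but the argument is the same.
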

	
	We will use this fact to prove the following claim and say a bit more about a solution with $2k+1$ non-zero values.

\begin{claim}
	In an optimal solution $x^*$, if $x^*$ contains $2k+1$ non-zero variables associated to $uv$, then there exist $0\le i, j\le k$ such that the $x^*$ value for $[uv]_{i, 0}, [uv]_{0, j}$ are both non-zero and they are two distinct arcs.
\end{claim}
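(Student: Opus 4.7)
\medskip
\noindent\textbf{Proof Proposal.} The plan is to prove the claim by a direct counting argument, repeatedly invoking the submatrix version of the Rank Lemma (the corollary stated immediately above the claim). View the matrix $M$ with $M_{i,j}=x^*_{[uv]_{i,j}}$ as a $(k+1)\times(k+1)$ grid of entries, and think of the claim geometrically: we must locate one non-zero entry in column $0$ and one non-zero entry in row $0$, with the two entries not both being the corner $M_{0,0}$.

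First I would show that row $0$ must contain at least one non-zero entry. Let $N$ be the submatrix obtained by deleting row $0$; it has $N_r=k$ rows and $N_c=k+1$ columns, so the corollary implies $N$ contains at most $N_r+N_c-1=2k$ non-zero variables. Since $M$ has $2k+1$ non-zero variables by hypothesis, at least $(2k+1)-2k=1$ non-zero entry lies in row $0$. A symmetric argument, deleting column $0$ instead, shows that column $0$ also contains at least one non-zero entry. This already produces indices $i,j$ with $x^*_{[uv]_{i,0}}>0$ and $x^*_{[uv]_{0,j}}>0$; the only remaining issue is that these two arcs might coincide, which happens precisely when $i=j=0$, i.e.\ when the unique non-zero in both row $0$ and column $0$ is the shared corner entry $M_{0,0}$.

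Next I would rule out exactly that degenerate configuration. Suppose, for contradiction, that $M_{0,0}$ is non-zero and is the only non-zero entry in row $0$ and in column $0$. Then all of the other $2k+1-1=2k$ non-zero entries of $M$ must lie in the submatrix $N'$ consisting of rows $1,\ldots,k$ and columns $1,\ldots,k$. But $N'$ has $N'_r=N'_c=k$, so the corollary bounds its number of non-zero variables by $N'_r+N'_c-1=2k-1$, which contradicts the fact that it contains $2k$ non-zero entries. Hence this degenerate configuration is impossible.

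Combining the two steps, we can always pick a non-zero entry $[uv]_{i,0}$ in column $0$ and a non-zero entry $[uv]_{0,j}$ in row $0$ that are distinct arcs: if either row $0$ or column $0$ contains $\ge 2$ non-zeros, pick one away from $M_{0,0}$; otherwise both contain exactly one non-zero and, by the previous paragraph, they cannot both be $M_{0,0}$, so the unique non-zero in row $0$ and the unique non-zero in column $0$ are already distinct arcs. I do not foresee a real obstacle beyond the bookkeeping above; the whole proof is a clean application of the rank-based non-zero bound on well-chosen submatrices of $M$.
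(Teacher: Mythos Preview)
Your proposal is correct and follows essentially the same approach as the paper: apply the submatrix corollary to the matrix with row~$0$ deleted (and symmetrically column~$0$) to force a non-zero in each, then apply it to the $k\times k$ submatrix with both row~$0$ and column~$0$ removed to rule out the degenerate case where $M_{0,0}$ is the unique non-zero in both. Your write-up is slightly more careful in spelling out the final case split, but the argument is the same.
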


\begin{proof}
	If we apply the corollary to the submatrix without the first row, there are at most $2k$ non-zero variables then it follows there exists at least one non-zero variable in the first row say $M_{0, j}$. Symmetrically, there exists at least one non-zero value in the first column, say $M_{i, 0}$. Now, as long as $i, j$ are not both 0, we are done. Suppose that $M_{0, 0}$ is the only non-zero value in the first row and the first column, then apply the corollary to the submatrix without the first row and column, it contains at most $2k-1$ non-zero values, contradicting the fact that there are $2k+1$ non-zero values to begin with.
\end{proof}

Now, we will use the above claims to construct our auxiliary graph $H'$ to successfully run the ball-growing algorithm. The key is to not shrink the overall distance between $s$ and $t$ too much.

\paragraph{Constructing $H'$}
For every $uv$ arc in $G$, look at the corresponding path in $H$. If there are at most $2k$ non-zero $x^*$ variables, keep only the arc with the largest $x*$ value and contract the rest. Otherwise, if there are exactly $2k+1$ non-zero $x^*$ values, there exists two distinct arcs $[uv]_{i', 0}, [uv]_{0, j'}$ with non-zero $x^*$ values. We want to group these two arcs as one object and compare its sum $x^*_{[uv]_{i', 0}}+x^*_{[uv]_{0, j'}}$ to the individual $x^*$ values of the other arcs on this path. Once again, keep only the highest value and contract the rest. This operation reduces every long $(uu)(vv)$ path to at least $1/(2k)$ of its original length. This implies the distance from $s$ to $t$ is at least $1/(2k)$.

Now we proceed with the ball-growing algorithm to find cuts at different distances. Given a cut $F$, we downgrade $v$ to level $i$ if $(vv)$ is incident to some arc in $F$ and $i=\max\{i': [uv]_{j, i'}, [vw]_{i', j'}\in F \text{ for some } u, w\in V(G), 0\le j, j'\le k\}$. Essentially, look at all arcs in $F$ that involves $v$, check which level these arcs need $v$ to be downgraded to and pick the highest one. This provides a function $L$ and a final cut $F$. The algorithm simply greedily checks all cuts from the ball growing algorithm and its associated function $L$ and $F$ until it finds one with the promised guarantee. Since there are polynomially many vertices in $H$, the algorithm only needs to check polynomially many cuts. Thus as long as one of these cuts provides the proper guarantee, the algoirthm can find it.

In order to show that such cut exists, we use a similar technique as NVDP by choosing a cut at random and looking at the expected interdiction cost and cut cost.

\paragraph{Proving the Existence of a Proper cut Using a Randomized Alogirthm} For any vertices $u, v\in H$, let $D(v)$ represent the distance from $s$ to $v$. Consider randomly choosing a number $\alpha$ between $0$ and $D((tt))$. Let $X_\alpha$ be the cut in $H'$ at distance $\alpha$ from $s$. In other words, $X_\alpha = \{[uv]_{i, j}\in A(H'): D((uv)_{i(k+1)+j})\le \alpha < D((uv)_{i(k+1)+j+1})\}$. Let $L_\alpha, F_\alpha$ be the associated level-downgrading function and cut for $X_\alpha$. We now analyze the expected downgrading and cut cost of $L_\alpha, F_\alpha$ respectively. Recall that $r_{L_\alpha(v)}(v)$ defines the index to which $v$ is downgraded by choosing the level cut at distance $\alpha$.

Given a vertex $v\in G$ and $0\le i\le k$, if the algorithm downgrades $v$ to level $i$ then an arc associated with $v$ at level $i$ must be in $X_\alpha$. Thus, let us examine these arcs which have the form $[uv]_{j, i}$ or $[vw]_{i, j'}$. Given an arc $[uv]_{j, i}\in A(H')$, note that due to the construction of $H'$, it is either the only arc between $(uu), (vv)$ or it has the form $[uv]_{0, i}$. In either cases, it is incident to $(vv)$ in $H'$. Similar statement can be said for arcs of the form $[vw]_{i, j'}$ and thus all arcs associated to $v$ at level $i$ in $H'$ are incident to $(vv)$ so these arcs forms a star $S^i_v$ centered at $(vv)$. Let $u', w'\in V(S^i_v)$ be vertices closest and farthest respectively from $s$. Note that an arc in $S^i_v$ is chosen only if $D(u')\le \alpha <D(w')$ thus the probability an arc in $S^i_v$ is chosen is at most $(D(w')-D(u'))/D((tt))$.The numerator is upperbounded by the sum of the $x^*$ value of the arc between $u', (vv)$ and $(vv), w'$. Note that this sum, in turn is upperbounded by $y^i_v$ due to Constraint \ref{LDLPint}. This implies the probability of downgrading $v$ to level $i$ is upperbounded by $(y^i_v)^*/D((tt))\le (y^i_v)^*(2k)$. Then,

\begin{align*}
	\sum_{v\in V(G)} \mathbb{E}[r_{L_\alpha(v)}(v)] \le & \sum_{v\in V(G)}\sum_{i=0}^k \mathbb{P}[v \text{ is downgraded to level } i]r_i(v)\\
	 \le & \sum_{v\in V(G)}\sum_{i=0}^k r_i(v)(y^i_v)^*(2k)\le 2kb
\end{align*}

where the last inequality is due to Constraint \ref{LDLPb}.

A similar result can be obtained for the expected cost of $F_\alpha$. Consider an arc $[uv]_{i, j}\in X_\alpha$. Note that $L_\alpha$ might end up downgrading $u, v$ beyond level $i, j$ respectively. This implies the cost of cutting $uv$ in the end is at most the cost $c_{i, j}(uv)$. Thus, $c^{L_\alpha}(F_\alpha)\le \Sigma_{[uv]_{i, j}\in X_\alpha} c_{i, j}(uv)$. Then, by linearity of expectations, it suffices to calculate the probability of an arc $[uv]_{i, j}\in A(H')$ to end up in $X_\alpha$. Using similar arguments as before, this happens only if $D((uv)_{i(k+1)+j})\le \alpha < D((uv)_{i(k+1)+j+1})$. Thus the probability is at most $x^*_{[uv]_{i, j}}/D((tt))\le 2kx^*_{[uv]_{i, j}}$. Then,

\begin{align*}
	c^{L_\alpha}(F_\alpha)\le & \sum_{[uv]_{i, j}\in A(H')} c_{i, j}(uv)\mathbb{P}[[uv]_{i, j}\in X_\alpha] \\
	\le & \sum_{[uv]_{i, j}\in A(H')}c_{i, j}(uv) 2kx^*_{[uv]_{i, j}} \le 2kopt^*
\end{align*}

where $opt^*$ is the objective value of LDLP. Then, by Markov's inequality, the probability that $\Sigma_{v\in V(G)} r_{L_\alpha(v)}(v) \le 4kb$ and $c^{L_\alpha}(F_\alpha)\le 4kopt^*$ are both independently at least $1/2$. Thus, there exists $\alpha$ such that $L_\alpha, F_\alpha$ provides the promised guarantee. 
	\section{Hardness Results}
\label{sec:hardness}

Hardness results~\cite{Wood93,Zen10DAM,chestnut2017hardness, chuzhoy2016} for interdiction problems typically involve a reduction from the Densest k subgraph problem which we define next.
\begin{definition}
	\textbf{Densest k Subgraph (DkS):}
	Given an {\bf undirected} graph $G$ and an integer $k$, find a vertex subset $Y\subset V(G)$ such that $|Y|=k$ and it maximizes the number of edges induced by $Y$ (i.e., with both ends in $Y$).
\end{definition}
DkS is a not only NP-hard but is also believed to be hard to approximate.
Under certain plausible complexity assumptions (such as it is hard to refute random 3-SAT instances~\cite{feige2002relations} or that there does not exist randomized subexponential time algorithms that solves NP~\cite{khot2006ruling}), there does not exist a PTAS for DkS.
Moreover, the current best approximation algorithm known for the problem~\cite{bhaskara2010detecting} has approximation ratio $O(n^{\frac14 + \epsilon})$ in an $n$-node graph for any $\epsilon > 0$.

A summary of the hardness results is in the table below.
\begin{center}
\begin{tabular}{|c|c|c|}
	\hline
	Problem & \multicolumn{2}{c|}{Hardness of}\\ \cline{2-3}
	& $(\alpha, 1)$-approximation & $(1, \beta)$-approximation \\ \hline
	
	WNVIP & DkS-Hard (Shown in \cite{chuzhoy2016,chestnut2017hardness}) & DkS-Hard (Theorem \ref{Whardbudget}) \\ \hline
	
	NVIP & DkS-Hard (Theorem 1.9 in \cite{chuzhoy2016}) & DkS-Hard (Theorem \ref{hardbudget})\\  \hline

	NVIP unit cost & DkS Hard (Appendix B in \cite{chuzhoy2016}) & DkS-Hard (Theorem \ref{uhard}) \\ \hline
	
\end{tabular}
\end{center}

Chestnut and Zenklusen~\cite{chestnut2017hardness} and Chuzhoy et. al \cite{chuzhoy2016} showed the following hardness of unicriterion approximation of the cut value for the network interdiction problem. Even though they proved this result for NFI, by our earlier observation, this applies to WNVIP directly.
\begin{theorem}[\cite{chestnut2017hardness}, Corollary 11]
	If there is an $(\alpha(n), 1)$-approximation for WNVIP, then there is a $2(\alpha(n^2))^2$-approximation for DkS.
\end{theorem}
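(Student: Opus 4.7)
The plan is to reduce DkS to WNVIP so that an $(\alpha(n^2), 1)$-approximation for WNVIP yields a $2\alpha(n^2)^2$-approximation for DkS. The key is a simple gadget reduction together with a density-boosting guess step whose loss contributes the squared $\alpha$.

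For the reduction itself, given a DkS instance $(G, k)$ with $n$ vertices and $m$ edges, I would construct a WNVIP digraph $G'$ on $O(n^2)$ nodes by introducing a source $s$, a sink $t$, and, for each $v \in V(G)$, a copy $v'$ of interdiction cost $1$; for every edge $\{u,v\} \in E(G)$ I add a gadget consisting of three unit-cost arcs forming a directed $(s, u', v', t)$-path, with arcs of distinct gadgets taken to be parallel. Setting the budget to $b = k$ identifies feasible interdicting sets with $k$-subsets $S \subseteq V(G)$: if both endpoints of $\{u,v\}$ lie in $S$ then the corresponding gadget vanishes and contributes $0$ to any $st$-cut, and otherwise its remaining unit-cost arcs force at least one unit of cut. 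Hence the optimal WNVIP value equals $c^{*} = m - \mathrm{OPT}_{\mathrm{DkS}}(G)$, and the promised approximation delivers $S$ with $|S| \le k$ satisfying $m - e(S) \le \alpha(n^2)(m - \mathrm{OPT}_{\mathrm{DkS}})$, equivalently $e(S) \ge \alpha(n^2)\,\mathrm{OPT}_{\mathrm{DkS}} - (\alpha(n^2)-1)\,m$, where $e(S)$ denotes the number of edges of $G$ with both ends in $S$.

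To convert this additive-looking inequality into a multiplicative DkS guarantee I would wrap the reduction in a density-boosting loop. The bound above is only directly useful when $\mathrm{OPT}_{\mathrm{DkS}}$ is close to $m$; for sparser regimes I would try $O(\log n)$ guesses of $\mathrm{OPT}_{\mathrm{DkS}}/m$ in a geometric sequence and, for each guess, amplify the reduction by duplicating edges of $G$ (equivalently, inserting parallel gadget copies for each edge) so that the effective density matches the guess. A standard pigeon-style argument then shows that the correct guess yields an $S$ with $e(S) \ge \mathrm{OPT}_{\mathrm{DkS}}/(2\alpha(n^2)^2)$, and returning the best $S$ across all guesses gives the claimed approximation. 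One factor of $\alpha(n^2)$ is paid for the cut approximation itself, and the other is paid for bridging the gap between $m$ and $\mathrm{OPT}_{\mathrm{DkS}}$; the constant $2$ absorbs the geometric discretization.

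The main obstacle is controlling the size blow-up so that the invocation $\alpha(n^2)$ remains valid after amplification. This is achieved by amplifying through arc multiplicities on the already-present gadget arcs rather than creating additional vertex gadgets, which keeps the vertex count at $O(n + m) = O(n^2)$, while the $O(\log n)$ guesses contribute only a constant discretization overhead. A secondary technical point is verifying that when the approximation returns a set $S$ with $|S| < k$ one can freely pad $S$ up to exactly $k$ vertices without decreasing $e(S)$, so that the DkS feasibility constraint is met.
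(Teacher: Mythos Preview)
Your proposal has two substantive errors that break the argument.

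\textbf{The gadget does not encode DkS correctly.} In your construction, interdicting a single endpoint already destroys a gadget: for the path $s\to u'\to v'\to t$, removing $u'$ deletes both $s\to u'$ and $u'\to v'$, leaving only the dangling arc $v'\to t$, which need not lie on any remaining $s$--$t$ path. Concretely, take $G=K_3$ on $\{a,b,c\}$ with gadgets ordered alphabetically and $S=\{a\}$. After interdicting $a'$ the only arc out of $s$ is the single $s\to b'$ arc from the $\{b,c\}$ gadget, so the minimum $st$-cut is $1$, whereas $m-e(S)=3$. Thus your claimed identity $c^{*}=m-\mathrm{OPT}_{\mathrm{DkS}}$ is false. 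The reduction used in the paper (and in Chestnut--Zenklusen) is bipartite: one layer $V_s$ for vertices, another layer $V_t=\{t_e\}$ for edges, with infinite-cost arcs $s\to V_s$ and $V_s\to V_t$, and unit-cost arcs $t_e\to t$. Each $t_e$ has \emph{two} infinite-cost in-arcs (one from each endpoint), so $t_e$ becomes unreachable from $s$ only when \emph{both} endpoints are interdicted; this is exactly what yields $c^{*}=m-\mathrm{OPT}_{\mathrm{DkS}}$.

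\textbf{The amplification step is vacuous.} Duplicating every edge of $G$ a fixed number $t$ of times multiplies both $m$ and $\mathrm{OPT}_{\mathrm{DkS}}$ by $t$, so the inequality $e(S)\ge \alpha\,\mathrm{OPT}-(\alpha-1)m$ scales by $t$ on both sides and the ratio $\mathrm{OPT}/m$ is unchanged. No choice of $t$ ``matches the effective density to the guess''; uniform duplication cannot convert the additive loss $(\alpha-1)m$ into a multiplicative one. The $(\alpha,1)$ direction genuinely requires an additional idea beyond the straightforward reduction: in Chestnut--Zenklusen the guess $L\approx\mathrm{OPT}_{\mathrm{DkS}}$ is used to \emph{modify the instance} (so that the WNVIP optimum becomes comparable to $L$ rather than to $m-L$), after which the density-extraction lemma (the paper's Lemma~\ref{findksub}) supplies the second factor of $\alpha$ and the constant $2$. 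Your outline identifies the right target inequality and the right accounting of the two $\alpha$ factors, but neither your gadget nor your amplification mechanism actually delivers it.
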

We complement this to show a similar hardness of unicriterion approximation of the interdiction budget.
As mentioned in the introduction, since there is a simple reduction from undirected WNVIP to the directed version, we focus on the undirected version.
 \begin{theorem}
 	If there exists a $(1, \beta(n))$-approximation for the undirected version of WNVIP, then there is a $2(\beta(n^2))^2$-approximation for DkS.
 	\label{Whardbudget}
 \end{theorem}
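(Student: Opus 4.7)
The plan is to mirror the Chestnut--Zenklusen--style reduction from Densest-$k$-Subgraph (DkS) that yields the known $(\alpha,1)$-hardness, but to exploit the budget slack rather than the cut slack: any $(1,\beta)$-approximation for (undirected) WNVIP produces an exact-cost cut while possibly overspending the interdiction budget by a factor of $\beta$, and this overspend can be translated into an oversized DkS-candidate which we then sub-sample down to size exactly $k$.

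First I would construct the reduction. Given a DkS instance $(G,k)$ with $n=|V(G)|$ and $m=|E(G)|$, build an undirected WNVIP instance $H$ on $N=O(n^2)$ vertices as follows. Add a source $s$, a sink $t$, and for every $v\in V(G)$ a node $\tilde v$ with interdiction cost $1$. For each edge $e=uv\in E(G)$ attach a gadget of $O(1)$ auxiliary vertices (each of interdiction cost $\infty$) carrying a single unit of $st$-capacity through $\tilde u$ and $\tilde v$, designed so that the unit of capacity contributed by $e$ can be saved for free exactly when both $\tilde u$ and $\tilde v$ are interdicted and otherwise must be paid for in the cut. The same construction already underpins the $(\alpha,1)$-hardness of Chestnut--Zenklusen / Chuzhoy et al., so I would reuse it essentially verbatim. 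It follows that any feasible interdiction set $X$ has $X\subseteq\{\tilde v: v\in V(G)\}$, and that the min $st$-cut after interdicting $X$ equals $m-|E(G[S])|$ where $S=\{v:\tilde v\in X\}$. In particular the optimal WNVIP value at budget $b=k$ equals $m-\mathrm{OPT}_{\mathrm{DkS}}(G,k)$.

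Next, suppose $\mathcal{A}$ is a $(1,\beta(\cdot))$-approximation for undirected WNVIP. Run $\mathcal{A}$ on $H$: it returns an interdiction set $X$ with $|X|\le \beta(N)\cdot k=\beta(n^2)\cdot k$ achieving cut value at most $m-\mathrm{OPT}_{\mathrm{DkS}}(G,k)$. Let $S=\{v\in V(G):\tilde v\in X\}$, so $|S|\le\beta(n^2)\,k$ and $|E(G[S])|\ge \mathrm{OPT}_{\mathrm{DkS}}(G,k)$. Now sample a uniformly random $k$-subset $S'\subseteq S$: for any fixed edge $uv\in E(G[S])$, both endpoints survive with probability
\[
\frac{\binom{|S|-2}{k-2}}{\binom{|S|}{k}}=\frac{k(k-1)}{|S|(|S|-1)}\ \ge\ \frac{1}{2\,\beta(n^2)^2}
\]
for $k\ge 2$. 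By linearity of expectation, $\mathbb{E}\bigl[|E(G[S'])|\bigr]\ge \mathrm{OPT}_{\mathrm{DkS}}(G,k)/(2\,\beta(n^2)^2)$, so some $k$-subset $S^\star\subseteq S$ achieves this bound; a deterministic $S^\star$ can be extracted by the method of conditional expectations. This yields the claimed $2\,\beta(n^2)^2$-approximation for DkS.

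The main obstacle is the gadget design in the first step: in the undirected setting, one must simultaneously ensure that (i) each edge-gadget contributes exactly one unit of $st$-capacity, (ii) that unit is saved precisely when both of its DkS-endpoints are interdicted, (iii) no vertex outside $\{\tilde v:v\in V(G)\}$ is ever interdicted (enforced by $\infty$ interdiction cost), and (iv) no spurious $st$-paths arise that would let the optimum cheat the accounting. Beyond this construction, the argument is clean: the interdiction-budget violation $\beta$ corresponds directly to a blown-up DkS candidate of size $\beta k$, and subsampling loses a factor $\beta^2$, for an overall DkS approximation of $2\beta(n^2)^2$.
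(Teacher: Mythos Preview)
Your proposal is correct and follows essentially the same approach as the paper: the paper's explicit gadget is exactly one auxiliary vertex $t_e$ per edge (with $\infty$ interdiction cost), connected to $\tilde u$, $\tilde v$, and $t$, with all edges of $\infty$ cost except the unit-cost edge $t_e t$, yielding precisely the identity $\lambda_{st}(H\setminus X)=m-|E(G[S])|$ you state. The only cosmetic difference is that the paper invokes a deterministic subsampling lemma from Chestnut--Zenklusen in place of your random $k$-subset plus conditional expectations, but these are equivalent.
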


\begin{proof}
Our strategy will be very similar to the one in prior work~\cite{Wood93,Zen10DAM,chestnut2017hardness}.
Given an instance of DkS, without loss of generality, we may assume $G$ is connected and $k<|V(G)|$. Consider the following auxiliary graph $H, V(H)=V_s\cup V_t\cup\{s, t\}$ where $V_s=V(G)$ corresponding to the original vertices and $V_t=\{t_e: e\in E(G)\}$ corresponding to the edges of $G$. Then, we add the following edges: $\{sv: v\in V_s\}, \{t_et: t_e\in V_t\}, \{vt_e: e=uv\in E(G)\}$. This graph $H$ is equivalent to subdividing every edge of $G$ then connecting $s$ to all original vertices and connecting $t$ to all subdivided edges. Note that $|V(H)|=|V(G)|+|E(G)|+2\le n^2, |E(H)|=|V(G)|+3|E(G)|\le 3n^2$ for large $n$.
	
	Now, consider the following interdiction and edge cost functions:

	\begin{center}
	\begin{tabular}{|c||c|c|c|c|c|c|}
		\hline
		Vertex & $s$ & \multicolumn{2}{c|}{$\in V_s$} & \multicolumn{2}{c|} {$\in V_t$} & $t$ \\ \hline
		$r(v)$ & $\infty$ & \multicolumn{2}{c|}{$1$} & \multicolumn{2}{c|} {$\infty$} & $\infty$ \\ \hline\hline
		Edges between & \multicolumn{2}{c|}{$s$, $V_s$} & \multicolumn{2}{c|} {$V_s$, $V_t$} & \multicolumn{2}{c|} {$V_t$, $t$} \\ \hline
		$c(e)$ & \multicolumn{2}{c|}{$\infty$} & \multicolumn{2}{c|} {$\infty$} & \multicolumn{2}{c|} {$1$} \\  \hline
	\end{tabular}
	\end{center}


Let $b=k$ and consider solving WNVIP on $H$.
For any $Y\subseteq V(G)$, denote $E_Y$ as the edges with both endpoints in $Y$.

\begin{claim}
	Let $(Y\subseteq V(H), F\subseteq E(H))$ be a solution to WNVIP.
Then $|E_Y| = |E(G)|- |F|$.
	\label{claimwbudget}
\end{claim}

Note that $Y\subseteq V_s=V(G)$ and $F$ are only edges between $V_t$ and $t$ due to the costs. First, we will show that $F$ is not incident to any $t_e$ where $e\in E_Y$. Let $e=uv\in E_Y$. Note that in the graph $H$, the neighbours of $t_e$ are $u, v, t$. This implies after interdicting $Y$, $t_e$ is only adjacent to $t$,
and hence need not be included in any minimal $st$ cut.
Next, we show that for every $e\notin E_Y$, $t_et\in F$. Suppose $e\notin E_Y$. Then, it follows $e$ is incident to some vertex $u\notin Y$. Note that $sut_et$ is a path in $H\backslash Y$. Since the cost of $su, ut_e$ is too expensive, it follows that $t_et\in F$. Since the number edges between $V_t$ and $t$ is exactly $|E(G)|$, the claim follows.

Let $c^*$ be the cost of the cut in an optimal solution to WNVIP and $l^*$ is the number of edges in a densest $k$-subgraph of $G$. It follows from the above claim that $l^*=|E(G)|-c^*$. Suppose the approximation scheme produced an interdiction set $V'\subseteq V_s$ with a final cut $E'$ and a cost of $c'$ where $|V'|=\Sigma_{v\in V'} r(v) \le \beta(|V(H)|)b=\beta(n^2)k$ and $c'\le c^*$. Then, it follows from the claim that $|E_{V'}| = |E(G)|-c' \ge |E(G)|-c^* =l^*$.
Now we can apply the following lemma from \cite{chestnut2017hardness}.
\begin{lemma}
	Given a graph $H$ with $n$ nodes and $m$ edges, there exists a deterministic polynomial algorithm that produces a subgraph on $k$ vertices with at least $\frac{k(k-1)}{n(n-1)} m$ edges for any $k \leq n$.
	\label{findksub}
\end{lemma}


By applying Lemma \ref{findksub} on the subgraph induced by $V'$, there exists a $k$-vertex subgraph with at least $\frac{k(k-1)}{\beta(n^2)k(\beta(n^2)k-1)}l^*\ge \frac{l^*}{2(\beta(n^2))^2}$ edges. Then, our theorem follows.


\end{proof}


Our goal in this section is to build on the proof of Theorem \ref{Whardbudget} to show hardness of NVIP with unit costs. We will do this in two steps: first we consider unitary interdiction costs with general edge cut costs. Then we also transform the edges so they have unit costs. This method allows us to show hardness for one of the unicriterion approximations.

\begin{theorem}
	If there exists a $(1, \beta(n))$-approximation for NVIP, then there is a $4(\beta(n^2))^2$-approximation for DkS.
	\label{hardbudget}
\end{theorem}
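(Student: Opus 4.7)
The plan is to imitate the proof of Theorem~\ref{Whardbudget} using essentially the same auxiliary graph $H$, but with interdiction cost $r(v)=1$ on \emph{every} interdictable vertex (the unit cost required by NVIP). The edge costs and budget $b=k$ are unchanged: $\infty$ on all $s$-$V_s$ and $V_s$-$V_t$ edges, and $1$ on the $V_t$-$t$ edges. Since $s$ and $t$ are not in the interdictable set by definition, the only real change from the WNVIP instance is that vertices in $V_t$ are now interdictable (previously forbidden by $r\equiv\infty$ there). We still have $|V(H)|\le n^2$, so an NVIP oracle provides a $(1,\beta(n^2))$-approximation.

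The first step is an analogue of Claim~\ref{claimwbudget}. For any interdiction set $Y=V'\cup T'$ with $V'\subseteq V_s$ and $T'\subseteq V_t$, the $\infty$ edge costs force the final cut to consist only of $V_t$-$t$ edges, and the same case analysis shows that after removing $Y$ the only surviving $s$-$t$ paths have the form $s\to v\to t_e\to t$ with $v\in V_s\setminus V'$, $t_e\in V_t\setminus T'$, and $v\in e$. Hence the minimum $st$-cut in $H$ after interdicting $Y$ has cost $|E(G)|-|E_{V'}\cup\{e:t_e\in T'\}|$. Since the optimal NVIP cut cost $OPT^*$ is at most $|E(G)|-l^*$ (achieved by interdicting the densest $k$-subgraph in $V_s$ only), any $(1,\beta)$-approximate solution satisfies $|E_{V'}\cup\{e:t_e\in T'\}|\ge l^*$ together with $|V'|+|T'|\le \beta k$.

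The second (and new) step merges $V'$ and $T'$ into a single subgraph of $G$. Define $V'':=V'\cup\{u,v : uv\in E(G),\, t_{uv}\in T'\}$. Every edge in $E_{V'}$ lies in $E_{V''}$, and every edge $e$ with $t_e\in T'$ also lies in $E_{V''}$ (both endpoints were explicitly added). Therefore $|E_{V''}|\ge |E_{V'}\cup\{e:t_e\in T'\}|\ge l^*$, while $|V''|\le |V'|+2|T'|\le 2\beta k$. Applying Lemma~\ref{findksub} to the subgraph of $G$ induced by $V''$ produces a $k$-vertex subgraph containing at least $\frac{k(k-1)}{2\beta k(2\beta k-1)}\cdot l^*$ edges, which behaves like $l^*/(4\beta^2)$ for the relevant regime of $k$, yielding the claimed $4(\beta(n^2))^2$-approximation for DkS.

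The main new obstacle compared to Theorem~\ref{Whardbudget} is that the approximator can now spend its budget on $V_t$-vertices, which were previously forbidden by the infinite interdiction cost. The $V''$ construction absorbs each $V_t$-interdiction at a cost of adding $2$ vertices to the induced subgraph, and this factor of $2$ in the vertex count is exactly what inflates the approximation ratio from $2(\beta(n^2))^2$ in Theorem~\ref{Whardbudget} to $4(\beta(n^2))^2$ here. Aside from this bookkeeping, the argument carries over mechanically.
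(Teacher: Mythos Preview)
Your proposal is correct and takes essentially the same approach as the paper: your set $V''$ is precisely the paper's $\bar{Y}=(Y\cap V_s)\cup\{s_u,s_v:t_e\in Y\cap V_t,\ e=uv\}$, and both arguments yield a subgraph on at most $2\beta(n^2)k$ vertices with at least $l^*$ edges before invoking Lemma~\ref{findksub}. The only cosmetic difference is that you compute the min-cut value after interdicting $V'\cup T'$ explicitly, whereas the paper short-circuits this by observing that $(\bar{Y},F)$ is itself a $(1,2\beta)$-approximate WNVIP solution and then re-applies Claim~\ref{claimwbudget} directly.
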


Note that this is sufficient to show a $(1, \beta)$-approximation is hard to obtain for NVIP.

\begin{proof}
	
	We once again consider the same auxiliary graph $H$ as in the proof of Theorem~\ref{Whardbudget}. However, we consider the case of unit interdiction costs.
	
\begin{center}
	\begin{tabular}{|c||c|c|c|c|c|c|}
		\hline
		Vertex & $s$ & \multicolumn{2}{c|}{$\in V_s$} & \multicolumn{2}{c|} {$\in V_t$} & $t$ \\ \hline
		$r(v)$ & $\infty$ & \multicolumn{2}{c|}{$1$} & \multicolumn{2}{c|} {$1$} & $\infty$ \\ \hline\hline
		Edges between & \multicolumn{2}{c|}{$s$, $V_s$} & \multicolumn{2}{c|} {$V_s$, $V_t$} & \multicolumn{2}{c|} {$V_t$, $t$} \\ \hline
		$c(e)$ & \multicolumn{2}{c|}{$\infty$} & \multicolumn{2}{c|} {$\infty$} & \multicolumn{2}{c|} {$1$} \\  \hline
	\end{tabular}
\end{center}

	Our budget $b=k$. Note that we still forbid the interdiction of $s$ and $t$. This is a natural condition to impose on NVIP. The main difference here compared to the proof of Theorem \ref{Whardbudget} is that we are now allowed to interdict vertices in $V_t$.
	
	Suppose $Y\subseteq V(H), F\subseteq E(H)$ is a $(1, \beta)$-approximation solution for NVIP on $H$. Then, consider an alternate interdiction set where we still interdict every vertex in $Y\cap V_s$ but instead of interdicting any $t_e\in Y\cap V_t$ corresponding to an edge $e=uv$, we instead interdict $s_u$ and $s_v$. Formally, let $\bar{Y}=(Y\cap V_s)\cup \{s_u, s_v: t_e\in Y\cap V_t, e=uv\}$. Note that $|\bar{Y}|\le 2|Y|$. Thus, $\bar{Y}, F$ is a $(1, 2\beta)$-approximation for NVIP on $H$. Applying Claim \ref{claimwbudget}, we can obtain a $2\beta(n^2)k$ vertex subgraph with at least $l^*$ edges where $l^*$ is the optimal number of edges in DkS. Then, the result follows from Lemma \ref{findksub}.
\end{proof}

We can further modify the graph $H$ to show that we cannot obtain a good unicriterion approximation for NVIP even when we have unitary costs.
\begin{theorem}
	If there exists a $(1, \beta(n))$-approximation for NVIP with unit cut cost, then there exists a $(1, 4(\beta(2n^2))^2)$-approximation for DkS.
	\label{uhard}
\end{theorem}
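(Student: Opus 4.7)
The plan is to extend the reduction of Theorem~\ref{hardbudget} by converting every $\infty$-cost edge of the graph $H$ into a bundle of $N := |E(G)| + 1$ parallel unit-cost edges, producing a graph $H'$. The target edges are $sv$ for $v \in V_s$ and $vt_e$ for $v \in e$; every other edge of $H$ (every $t_et$) already has unit cost, and every vertex in $V_s \cup V_t$ already has unit interdiction cost in Theorem~\ref{hardbudget}, with $s, t$ forbidden from interdiction as in the standard NVIP convention. Crucially, since parallel edges introduce no new vertices, $|V(H')| = |V(H)| = 2 + n + |E(G)| \le 2n^2$, which is exactly what the theorem statement needs so that the approximation factor is evaluated at $\beta(2n^2)$. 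The optimum NVIP value on $H'$ with budget $b = k$ is unchanged, namely $c^* = |E(G)| - l^*$: the DkS-vertex interdiction from Theorem~\ref{hardbudget} still disconnects entire bundles incident to the interdicted $V_s$-vertices, while no cheaper solution exists since $N > |E(G)|$ rules out cutting bundles.

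Next, I invoke the hypothetical $(1, \beta(n))$-approximation on $H'$ with budget $k$ to obtain a solution $(Y', F')$ with $|Y'| \le \beta(2n^2) k$ and $|F'| \le c^* \le |E(G)| < N$. Because $|F'| < N$, no complete bundle is contained in $F'$, and any partial cuts of a bundle are wasted (some parallel copy remains), so I may delete them and assume $F' \subseteq \{t_et : t_e \in V_t\}$. I then push $Y'$ down to a set $Y'' \subseteq V_s$ exactly as in the proof of Theorem~\ref{hardbudget}: for each $t_e \in Y' \cap V_t$, I replace $t_e$ in $Y'$ by both endpoints $u, v$ of $e = uv$, so $|Y''| \le 2|Y'| \le 2\beta(2n^2) k$. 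Any $s$-$t$ walk $s - v_1 - t_{e_1} - v_2 - \cdots - t_{e_m} - t$ that $(Y', F')$ broke through some $t_{e_j}$ is still broken by $(Y'', F')$, since both endpoints of $e_j$ (which are $v_j$ and $v_{j+1}$ along the walk) lie in $Y''$; walks broken by $V_s$-interdictions or by $F'$ are unaffected.

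Repeating the argument behind Claim~\ref{claimwbudget}, every edge $e = uv$ of $G$ with $t_et \notin F'$ must satisfy $u, v \in Y''$ (otherwise $s - u - t_e - t$ or $s - v - t_e - t$ survives in $H'$), so $|E_{Y''}| \ge |E(G)| - |F'| \ge l^*$. Applying Lemma~\ref{findksub} to the subgraph of $G$ induced by $Y''$ extracts a $k$-vertex subgraph with at least $\frac{k(k-1)}{|Y''|(|Y''|-1)} l^* \ge \frac{l^*}{4\beta(2n^2)^2}$ edges, giving the promised $4\beta(2n^2)^2$-approximation for DkS. The main obstacle is keeping $|V(H')| \le 2n^2$: any construction that subdivides $\infty$-edges or inserts midpoints to simulate high cost would add $\Omega(|E(G)|)$ vertices per bundle and blow $|V(H')|$ up to $\Omega(n^4)$, ruining the quadratic bound in the $\beta$ argument. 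Parallel edges side-step this at the price of needing $N$ to strictly exceed the approximation's cut budget, and it is here that the one-sided $(1, \beta)$ guarantee (cut cost exactly bounded by $c^* \le |E(G)|$) is essential, since it lets us pick the small value $N = |E(G)|+1$ and conclude the algorithm cannot afford to slice through any bundle.
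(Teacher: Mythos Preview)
Your proof is correct and takes a genuinely different route from the paper's own argument. The paper handles the two families of $\infty$-cost edges separately: the $s$--$V_s$ edges are replaced by inserting a clique $S=K_{n^2}$ between $s$ and $V_s$ (with all possible edges from $s$ to $S$ and from $S$ to $V_s$), so that even after interdicting up to $b<n$ nodes of $S$, any cut through that region needs more than $|E(G)|$ edges; the $V_s$--$V_t$ edges are simply given unit cost, and an exchange argument shows a minimum cut never uses them (if $vt_e\in F$, swap it for $t_et$). The clique contributes $n^2$ new vertices, which is precisely why the theorem is stated with $\beta(2n^2)$ rather than $\beta(n^2)$.

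Your parallel-bundle construction is cleaner and uniform, and since it adds no vertices you could in fact state the stronger conclusion $4(\beta(n^2))^2$, matching Theorem~\ref{hardbudget}. The trade-off is that your reduction outputs a multigraph, whereas the paper's construction stays within simple graphs; if one insists NVIP instances be simple, the clique gadget is needed. One minor slip: the sentence ``the optimum NVIP value on $H'$ \ldots is unchanged, namely $c^* = |E(G)| - l^*$'' is not literally true once $V_t$ vertices carry unit interdiction cost (interdicting a single $t_e$ can already drop the min cut below $|E(G)|-l^*$). But your argument only ever uses the inequality $c^*\le |E(G)|-l^*<N$, which does hold and is all that the bundle-size choice $N=|E(G)|+1$ and the final count $|E_{Y''}|\ge |E(G)|-|F'|\ge l^*$ require.
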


\begin{proof}
	We will use the same set up as Theorem \ref{hardbudget}. Recall, given graph $G$, we construct auxiliary graph $H$ with the following cost functions:
	
\begin{center}
	\begin{tabular}{|c||c|c|c|c|c|c|}
		\hline
		Vertex & $s$ & \multicolumn{2}{c|}{$\in V_s$} & \multicolumn{2}{c|} {$\in V_t$} & $t$ \\ \hline
		$r(v)$ & $\infty$ & \multicolumn{2}{c|}{$1$} & \multicolumn{2}{c|} {$1$} & $\infty$ \\ \hline\hline
		Edges between & \multicolumn{2}{c|}{$s$, $V_s$} & \multicolumn{2}{c|} {$V_s$, $V_t$} & \multicolumn{2}{c|} {$V_t$, $t$} \\ \hline
		$c(e)$ & \multicolumn{2}{c|}{$\infty$} & \multicolumn{2}{c|} {$\infty$} & \multicolumn{2}{c|} {$1$} \\  \hline
	\end{tabular}
\end{center}

	There are two edge costs we need to modify, those between $s, V_s$ and those between $V_s, V_t$. To take care of those incident to $s$, consider adding a large clique $S=K_{n^2}$ between $s$ and $V_s$. Then, add every possible edge between $s, S$ and similarly between $S$ and $V_s$. Note that even after interdicting $b<n$ nodes in $S$, any cut through $S$ involves $n^2 > |E(G)|$ many edges and thus any minimum solution will not interdict any vertices in $S$ nor cut any edges in $S$.
	
	For edges between $V_s, V_t$, we simply set their cost to $1$ and claim that a $st$-minimum cut would not use any of those edges either. Let $F$ be a minimum cut after interdicting some vertices $Y$. Suppose there exists $v\in V_s, t_e\in V_t$ such that $vt_e\in F$. Then, removing edge $vt_e$ and adding $t_et$ to $F$ is still a $st$-cut. Therefore, we can assume that any min-cut after interdicting any set $Y$ only consists of edges incident to $t$. Then, applying the same techniques in Theorem \ref{hardbudget}, our result follows.
	
\end{proof}

Note that it is not known if it is also hard to obtain an $(\alpha, 1)$-approximation for NVIP and NVIP with unit cost.

	\section{Simple Approximations for Interdiction Problems}
\label{sec:vip}

\subsection{A Simple Bicriteria Algorithm for WNVIP}

We first provide a polynomial time algorithm that finds a $(1+1/\epsilon, 1)$ or $(1, 1+\epsilon)$-approximation for any $\epsilon >0$ for WNVIP in digraphs using a more direct proof than the earlier methods~\cite{burch03,chestnut2016interdicting}.
This proof was also given by Chuzhoy et al.~\cite{chuzhoy2016} in their study of $k$-route cuts, but we reproduce it since we build on it later. A priori, the algorithm will not be able to tell which guarantee it provides. Note that if $\epsilon=1$, this easily provides a $(2, 2)$ approximation.

\begin{theorem}
\label{th:nvip}
For any $\epsilon>0$, there exists a polynomial time algorithm that provides either a $(1+1/\epsilon, 1)$ or a $(1, 1+\epsilon)$-approximation guarantee for WNVIP in directed graphs.
\end{theorem}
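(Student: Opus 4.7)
The plan is to follow the approach described in the paper: recast the vertex interdiction problem as a parametric minimum-cut problem on an auxiliary digraph obtained by splitting vertices. For each parameter $\alpha>0$, construct $H_\alpha$ by splitting every non-terminal vertex $v$ into $v^-,v^+$ joined by an internal arc of capacity $\alpha\,r(v)$; redirect all arcs entering $v$ to end at $v^-$, all arcs leaving $v$ to start at $v^+$, and keep the original capacity $c(e)$ on every original arc. Any minimum $st$-cut of $H_\alpha$ decomposes canonically into a pair $(X,F)$, where $X$ is the interdicted vertex set (those $v$ whose internal arc is in the cut) and $F\subseteq A(G)$ is an arc cut of $G\setminus X$, with total capacity $c(F)+\alpha\,r(X)$. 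Conversely, the optimal interdiction $(X^*,F^*)$ with $r(X^*)\le b$ and $c(F^*)=\lambda^*$ induces a valid $st$-cut of $H_\alpha$ of capacity $\lambda^*+\alpha\,r(X^*)\le\lambda^*+\alpha b$, so the polynomial-time minimum cut $(X,F)$ of $H_\alpha$ satisfies
\[
c(F)+\alpha\,r(X)\ \le\ \lambda^*+\alpha\,b.
\]

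With the ``ideal'' choice $\alpha^{\star}=\lambda^*/(\epsilon b)$, the right-hand side becomes $\lambda^*(1+1/\epsilon)$, and the two guarantees then fall out of a case split on whether $r(X)\le b$. If $r(X)\le b$, the budget is respected and $c(F)\le \lambda^*(1+1/\epsilon)$, yielding a $(1+1/\epsilon,1)$-approximation. Otherwise $r(X)>b$ gives $\alpha^{\star} r(X)>\alpha^{\star} b=\lambda^*/\epsilon$, which forces $c(F)<\lambda^*$ simultaneously with $r(X)\le \lambda^*(1+1/\epsilon)/\alpha^{\star}=b(1+\epsilon)$, yielding a $(1,1+\epsilon)$-approximation. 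In either case the pair $(X,F)$ extracted from the auxiliary min-cut certifies one of the two claimed guarantees.

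The main obstacle is that the ideal $\alpha^{\star}$ depends on the unknown optimal value $\lambda^*$. I would sidestep this via parametric minimum cut: as $\alpha$ ranges over $(0,\infty)$, the optimal pair $(X_\alpha,F_\alpha)$ takes only polynomially many distinct values (breakpoints enumerable by, e.g., Gallo--Grigoriadis--Tarjan), and $r(X_\alpha)$ is monotone non-increasing in $\alpha$. I would compute the breakpoints, identify the two adjacent ones that straddle the transition of $r(X_\alpha)$ across $b$, and apply the case analysis to whichever of them sits on the correct side of $\alpha^{\star}$: the breakpoint $\alpha\le\alpha^{\star}$ with $r(X_\alpha)\le b$ supplies a $(1+1/\epsilon,1)$-approximation, while the breakpoint $\alpha\ge\alpha^{\star}$ with $b<r(X_\alpha)\le b(1+\epsilon)$ supplies a $(1,1+\epsilon)$-approximation. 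Since one of these two situations must occur, the algorithm returns a valid approximation in polynomial time.
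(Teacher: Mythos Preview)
Your approach is essentially the paper's: both combine interdiction and cut costs into a single min-cut instance via a Lagrangian scaling, then case-split on whether the resulting interdiction set respects the budget. The paper subdivides arcs and computes a node cut, you split nodes and compute an arc cut---these are equivalent reductions. The only substantive difference is how the unknown $\lambda^*$ is handled: the paper guesses it by binary search over geometrically spaced values, while you propose to locate the transition breakpoint of $r(X_\alpha)$ across $b$ via parametric min-cut.

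One technical caution: your invocation of Gallo--Grigoriadis--Tarjan is not justified as stated. GGT's polynomial-breakpoint bound requires the parametrically varying capacities to sit on source-adjacent (non-decreasing) or sink-adjacent (non-increasing) arcs; your varying arcs $v^-v^+$ are internal, and without that structure the total number of breakpoints need not be polynomial. This is not fatal---you only need the \emph{single} breakpoint where the slope $r(X_\alpha)$ of the concave function $\mu(\alpha)=\min_{(X,F)}\bigl(c(F)+\alpha\,r(X)\bigr)$ crosses $b$, and that can be found in polynomial time by discrete Newton / Megiddo parametric search regardless of the global breakpoint count; or you can simply fall back to the paper's binary search on $\lambda^*$. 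You should also treat $\lambda^*=0$ separately (your $\alpha^\star=\lambda^*/(\epsilon b)$ degenerates and the case analysis no longer forces $c(F)\le(1+1/\epsilon)\lambda^*=0$); the paper handles this with a direct weighted node-connectivity computation.
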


\begin{proof}
Given $\epsilon>0$, let $L$ be our guess for the value of $\lambda_{st}(G\backslash X^*)$ for the optimal interdiction set $X^*$ that obeys $r(X^*) \leq b$. Since we can detect whether $L = 0$ (using weighted node connectivity computation) and output an interdiction set achieving this, we assume that $L$ is nonzero for the rest of the proof. Also, note that $L$ is bounded above by the weight of a minimum directed $st$-cut of $G$. Consider an auxiliary digraph $G'$ and a weight function $w:A(G')\to \mathbb{R}_+$ where we subdivide every arc $e$ with a vertex $v_e$. We assign weight $\epsilon c_eb/L$ to every new vertex $v_e$. Every original vertex $v$ gets weight $r(v)$. If our guess $L$ is correct, then $G'$ contains a $st$-separating vertex cut of weight $(1+\epsilon)b$ (by using the nodes in an optimal interdicting set and the subdivided nodes of the corresponding min $st$-cut). Now, consider a minimum weighted vertex cut $X'$ of $G'$ that separates all directed paths from $s$ to $t$. Let $X= X'\cap V(G)$ and $Y=X'\backslash X$ be those that corresponds to subdivided arcs. Note that $(1+\epsilon)b\ge w(X')=r(X)+\Sigma_{e\in Y}\epsilon c_eb/L$. In particular, $r(X)\le (1+\epsilon)b$ and $\Sigma_{v_e\in Y} c_e\le (1+1/\epsilon)L$. Furthermore, if $r(X) > b$, then it follows that $\Sigma_{e\in Y}c_e<L$. This implies if $L$ is the correct guess, then $(X, Y)$ is either a $(1+1/\epsilon, 1)$ or a $(1, 1+\epsilon)$-approximation\footnote{We can simply try all possible $L$ values using binary search to find the smallest value for which the condition holds. To make this search polynomial time, we can simply search over multiplicative powers of $(1 + \epsilon')$ for some small $\epsilon' > 0$ starting from 1 and up to the maximum possible cost of a cut using $\log_{1 + \epsilon'} {mc_{max}}$ trials where $m$ is the number of arcs and $c_{max}$ is the largest arc cost.}.
\end{proof}


\subsection{Interdicting Multiway Cuts}

\begin{problem}
	\textbf{Weighted Multiway Cut Vertex Interdiction Problem}
	
	\textbf{(WMWIP)} Let $G$ be an directed graph, $S=\{s_1, ..., s_k\}\subseteq V(G)$ and every arc $e$ has a non-negative weight $c_e$ and every vertex $v$ has a non-negative interdiction cost $r(v)$. Given a budget $b$, find vertices $X$ of interdiction cost at most $b$ to interdict that minimizes the cost of separating the vertices in $S$. In other words, find $X\subseteq V(G), F\subseteq E(G)$ such that there is no path from any vertex $s_i\in S$ to any other $s_j \in S$ in $G$ after deleting X and $F$. Furthermore, $r(X)\le b$ and $\Sigma_{e\in F}c_e$ is minimized. We will assume that the demand vertices $S$ cannot be interdicted or equivalently that $r(s) = \infty$ for $s \in S$.
\end{problem}

\paragraph{Related Work.} The multiway cut problem is another natural extension of the min $st$-cut problem, and involves finding a minimum set of edges to delete in order to separate $k$ given terminals from each other. For the problem in undirected graphs, using a geometric relaxation, Calinescu, Karloff and Rabani~\cite{cualinescu2000improved} achieved an approximation factor 3/2. Sharma and Vondrak~\cite{Sharma2014MultiwayCP} gave the current best approximation factor of 1.2965. In the vertex version of the Multiway Cut Problem,  instead of deleting edges, we delete vertices to separate the $k$ terminals, but the graph is still undirected.
By subdividing edges and introducing new nodes with costs, it is easy to see that the vertex version generalizes the edge version.
Garg et al.~\cite{garg1994multiway} gave a 2-approximation for this vertex Multiway Cut in undirected graphs by showing half-integrality of the optimal solution of a natural LP relaxation.
Finally, the directed version of the problem involves deleting arcs so that there are no directed paths between any pair of terminals.
It is easy to see that the vertex version of this directed problem can be reduced back to the arc version by the usual splitting of nodes into two copies: one for supporting incoming and the other for outgoing arcs with the node cost assigned to the arc from the in- to the out-copy.
Naor and Zosin~\cite{naor2012} gave the first 2-approximation algorithm for the directed multiway cut problem using a sophisticated LP formulation and rounding.
Chekuri and Madan~\cite{chekuri2016simple} later gave a simple ball growing based LP rounding algorithm for the problem that also gives a $2(1 - \frac1k)$ approximation where $k$ is the number of terminals. We can adapt either algorithm to give our results for the vertex interdiction variant of multiway cuts in directed graphs.

Using similar techniques as for WNVIP, we prove the following for the vertex interdiction version of the directed multiway cut problem.

\begin{theorem}
	For any $\epsilon>0$, there exists a polynomial time algorithm that gives a $(4, 4(1+\epsilon))$-approximation to Weighted Multiway Cut Vertex Interdiction Problem (WMWIP) in directed graphs.
	\label{th:WMWIP}
\end{theorem}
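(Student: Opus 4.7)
I will mirror the subdivision-plus-guess-$L$ strategy behind Theorem~\ref{th:nvip}, with the minimum $st$-cut subroutine replaced by a $2$-approximation for directed vertex multiway cut (Naor--Zosin~\cite{naor2012} after splitting each non-terminal into an in/out pair joined by an arc of the node's weight, or equivalently the simpler Chekuri--Madan ball-growing algorithm~\cite{chekuri2016simple}). Let $\lambda^*$ denote the optimum multiway cut cost under budget $b$. For a guess $L$ of $\lambda^*$, build an auxiliary digraph $G'$ by subdividing every arc $e$ with a new vertex $v_e$, and assign weights $w(v_e)=c_e\,b/L$, $w(v)=r(v)$ for every original non-terminal $v$, and $w(s_i)=\infty$ for every terminal in $S$. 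A feasible WMWIP solution $(X^*,F^*)$ with $r(X^*)\le b$ and $c(F^*)=\lambda^*$ lifts to the multiway vertex cut $X^*\cup\{v_e:e\in F^*\}$ in $(G',S)$ of total weight at most $b+(\lambda^*/L)\,b$; for the mild underestimate $L=\lambda^*/(1+\epsilon)$ this bound evaluates to $(2+\epsilon)b$.

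Running the $2$-approximation on $(G',S)$ returns a multiway vertex cut $X'$ of weight at most $(4+2\epsilon)b$. The infinite terminal weights force $X'\cap S=\emptyset$, so write $X=X'\cap V(G)$ and $Y=X'\setminus V(G)$, and set $F=\{e:v_e\in Y\}$. Any directed path between distinct terminals in $G\setminus X$ would lift through its subdivisions to a directed terminal-to-terminal path in $G'\setminus X'$, so $F$ is a valid multiway cut of $G$ after interdicting $X$. Splitting the weight inequality
\[
 r(X) + \frac{b}{L}\sum_{e\in F} c_e \le (4+2\epsilon)\,b
\]
into its two nonnegative summands yields $r(X)\le(4+2\epsilon)b\le 4(1+\epsilon)b$ and $\sum_{e\in F}c_e\le (4+2\epsilon)L=\frac{4+2\epsilon}{1+\epsilon}\lambda^*\le 4\lambda^*$, where the last inequality uses that $(4+2\epsilon)/(1+\epsilon)$ is decreasing in $\epsilon$ with value $4$ at $\epsilon=0$. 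This is exactly the claimed $(4,4(1+\epsilon))$ guarantee.

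Since $\lambda^*$ is unknown, run the procedure over a multiplicative grid of guesses $L$ of ratio $1+\epsilon$, starting from a zero-cost check via weighted node-connectivity (as in Theorem~\ref{th:nvip}) and ending at a trivial upper bound on cut cost; the grid always contains some $L\in[\lambda^*/(1+\epsilon),\lambda^*]$ at which the analysis above succeeds, and among all grid outputs we return the cheapest $(X,F)$ whose interdiction cost stays within $4(1+\epsilon)b$. The main obstacle, and the key design choice, is the subdivision weight $c_e b/L$: this is precisely what linearly couples the budget and the cut cost into a single multiway-cut weight, so that one off-the-shelf $2$-approximation oracle controls both constraints simultaneously; the only additional loss comes from the geometric search in $L$, which contributes the $(1+\epsilon)$ slack in the budget coordinate.
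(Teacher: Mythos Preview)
Your approach is essentially the paper's: subdivide each arc with a node of weight $c_e b/L$, run a $2$-approximation for directed node multiway cut on the auxiliary instance, split the returned cut into original vertices $X$ and subdivision vertices $Y$, and iterate over a geometric grid of guesses $L$. The only cosmetic difference is that you aim for an \emph{underestimate} $L\le\lambda^*$ (putting the $(1+\epsilon)$ slack on the budget), while the paper aims for an \emph{overestimate} $U_{i_0}\ge opt$ (putting the slack on the cut); your choice actually matches the stated $(4,4(1+\epsilon))$ ordering more faithfully.

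One small imprecision to fix: your displayed bound $\sum_{e\in F}c_e\le(4+2\epsilon)L=\frac{4+2\epsilon}{1+\epsilon}\lambda^*$ uses $L=\lambda^*/(1+\epsilon)$ exactly, but the grid only guarantees some $L\in[\lambda^*/(1+\epsilon),\lambda^*]$, and at the upper end $(4+2\epsilon)L=(4+2\epsilon)\lambda^*\not\le 4\lambda^*$. The repair is immediate: use the sharper weight bound $w(X')\le 2b(1+\lambda^*/L)$ rather than the coarse $(4+2\epsilon)b$, which gives $\sum_{e\in F}c_e\le 2(L+\lambda^*)\le 4\lambda^*$ for every $L\le\lambda^*$.
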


\begin{proof}
	Let $opt$ be the cost of an optimal solution to WMWIP. First delete all arcs of cost $0$ since they are free to cut. Then, apply the 2-approximation algorithm for the node-weighted version of the directed multiway cut problem on $G$ disregarding any edge costs. If $opt=0$, then the algorithm should produce a set of vertices to interdict whose cost is at most $2b$ and separates all terminals from each other.
	
	Now, assume $opt>0$ and we now make a guess of a close range bounding $opt$. First, by scaling the costs with an appropriate factor, we may assume that all arc costs are at least $1$. Fix a constant $\epsilon>0$ and let $q = \lceil\log(\sum_{uv\in A} c_{uv})/\log(1+\varepsilon)\rceil$. Define the sequence $U_i = (1+\varepsilon)^i$ for $i=0,\ldots,q$. By the choice of $q$, we have $(1+\varepsilon)^q> \sum_{uv\in A} c_{uv}$. Clearly, $U_0 \leq opt < U_q$. We improve these bounds by guessing an index $i_0$ such that $U_{i_0-1} \le opt \le U_{i_0}$. This means that the following algorithm tests $U_{i_0} = U_i$ for $i=1,\ldots,q$ and returns the best result.
	
	Consider an auxiliary graph $G'$ and a weight function $w:A(G')\to \mathbb{R}_+$ where we subdivide every arc $e$ with a vertex $v_e$. We assign weight $\frac{c_eb}{U_{i_0}}$ to every new vertex $v_e$. Every original vertex $v$ gets weight $r(v)$. All original arcs are assumed to have infinite weight.
Then $G'$ contains a node multiway cut of weight $2b$ (by using the nodes in an optimal interdicting set and the subdivided nodes of the corresponding minimum multiway cut).
Recall that the vertex version of this directed problem can be reduced back to the arc version by the usual splitting of nodes into two copies: one for supporting incoming and the other for outgoing arcs with the node cost assigned to the arc from the in- to the out-copy. Now, we use any existing 2-approximation for directed multiway cut~\cite{naor2012,chekuri2016simple} to find a 2-approximation to the minimum  weighted vertex multiway cut to get the set $X'$ of $G'$. Let $X= X'\cap V(G)$ and $Y=X'\backslash X$ be those that corresponds to subdivided arcs. Note that $4b\ge w(X')=r(X)+\Sigma_{e\in Y}\frac{c_eb}{U_{i_0}}$. In particular, $r(X)\le 4b$ and $\Sigma_{e\in Y} c_e\le 4U_{i_0} \le 4(1+\varepsilon)opt$.
\end{proof}

\subsection{Interdicting MultiCuts Using Vertices}

\begin{problem}
	\textbf{Weighted MultiCut Vertex Interdiction Problem (WMVIP)} Let $G$ be an {\bf undirected} graph where every edge $e$ has a non-negative weight $c_e$ and every node $v$ has a non-negative interdiction cost $r(v)$, and we are given demand pairs of vertices $\{s_1,t_1\}, \ldots,\{s_k,t_k\}$. Given a budget $b$, find nodes $X$ of interdiction cost at most $b$ to interdict that minimizes the cost of separating the demand pairs in the resulting graph. In other words, find $X\subseteq V(G), F\subseteq E(G)$ such that for every demand pair $\{s_i,t_i\}$, vertices $s_i$ and $t_i$ are in different connected components of $G$ after deleting the nodes in $X$ and the edges in $F$. Furthermore, $r(X)\le b$, and $\Sigma_{e\in F}c_e$ is minimized.
	
We will assume that the demand-pair vertices cannot be interdicted or equivalently that $r(v) = \infty$ for $v \in \{s_i,t_i\}$ for any demand pair $i$, and call such vertices terminals.
\end{problem}

\paragraph{Related Work.}
In the multicut problem in {\bf undirected} graphs, we are given $k$ source-sink pairs and a multicut puts every source sink pair in different connected components. The minimum multicut problem has a well-known $2 \ln k$-approximation algorithm~\cite{garg1996approximate} using an LP-rounding method. An alternate proof of this result using the ideas of Calinescu et al.~\cite{cualinescu2000improved} uses a randomized Dijkstra-like ball growing and cutting method (See e.g.,~\cite{GO2008}) that we adapt in designing our approximation algorithm for the vertex interdiction variant of this problem.
For completeness, we remark that multicuts in directed graphs are not that well approximable with the best known approximation ratio being $O(n^{\frac{11}{23}})$~\cite{agarwal2007improved} in $n$-node digraphs.

This subsection focuses on applying a similar technique as above on WMVIP to prove the following theorem.

\begin{theorem}
	For any $\epsilon>0$, there exists a randomized polynomial time algorithm that gives a $(2(1+\epsilon) \ln k, 2 (1+\epsilon)\ln k)$-approximation to Weighted MultiCut Vertex Interdiction Problem (WMVIP) in {\bf undirected graphs}, where $k$ is the number of terminal pairs.
	\label{th:WMIP}
\end{theorem}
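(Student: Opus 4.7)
The plan is to adapt the subdivision-and-reduce template used in Theorem~\ref{th:WMWIP} for the directed multiway cut interdiction problem, substituting the $2$-approximation for directed multiway cut with an $O(\log k)$-approximation for (node-weighted) undirected multicut. The known $2\ln k$-approximation of Garg--Vazirani--Yannakakis, together with the standard region-growing analysis (see e.g.~\cite{GO2008}), extends to the node-weighted setting via splitting each vertex into an in-copy and an out-copy connected by an arc of weight equal to the node weight, and then invoking the edge multicut algorithm. We will use this as a black box.

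First I would dispose of the zero-cut case: if $opt=0$, run the $2\ln k$-approximation for node-weighted multicut on $G$ using only the interdiction costs $r(v)$, returning an interdiction set of weight at most $2\ln k \cdot b$. Otherwise, scale costs so that every edge has weight at least $1$ and do a geometric search over guesses $U_i=(1+\epsilon)^i$ for $i=0,\dots,q$ with $q=\lceil \log(\sum_e c_e)/\log(1+\epsilon)\rceil$; this way some $U_{i_0}$ satisfies $U_{i_0-1}\le opt \le U_{i_0}\le (1+\epsilon)\,opt$.

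For each guess $U_{i_0}$, construct an auxiliary undirected graph $G'$ by subdividing each edge $e=uv$ with a new node $v_e$ of weight $c_e b/U_{i_0}$; every original node $v$ inherits weight $r(v)$, with $r(v)=\infty$ on the terminals $\{s_i,t_i\}$. The key observation is that combining an optimal interdiction set $X^*$ with the subdivided vertices $\{v_e:e\in F^*\}$ corresponding to the optimal cut yields a node-weighted multicut in $G'$ separating all demand pairs, of total weight at most $r(X^*)+\sum_{e\in F^*}c_eb/U_{i_0}\le b + opt\cdot b/U_{i_0}\le 2b$. Running the $O(\log k)$-approximation for node-weighted multicut on $G'$ returns a vertex set $X'$ whose weight is within a $\ln k$ factor of this; partitioning $X'=X\cup Y$ into original vertices $X$ (the interdicted nodes) and subdivided vertices $Y$ (corresponding to an edge cut $F$) and using the inequality $w(X')=r(X)+\sum_{e\in F}c_eb/U_{i_0}$ yields both $r(X) = O(\ln k)\cdot b$ and $\sum_{e\in F}c_e \le O(\ln k)\cdot U_{i_0}\le O(\ln k)(1+\epsilon)\,opt$. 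The algorithm then outputs the best $(X,F)$ over all guesses $U_i$ for which $X,F$ satisfy the bicriteria guarantee.

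The main obstacle is squeezing the constant from the black-box reduction down to the claimed $2(1+\epsilon)\ln k$ in \emph{both} coordinates: naively plugging in a $2\ln k$-approximation for multicut in $G'$ loses an extra factor of two because the feasible multicut in $G'$ has weight $2b$ (not $b$). To avoid this doubling, I would instead formulate the combined LP directly on the edge-subdivided graph, with edge-length variables $x_e$ for original edges, interdiction variables $y_v$ for original vertices, a budget constraint $\sum_v r(v)y_v \le b$, and separation constraints $\sum_{e\in P}x_e+\sum_{v\in V(P)\setminus\{s_i,t_i\}}y_v\ge 1$ for every $s_i$--$t_i$ path $P$. Running GVY-style randomized region growing from each terminal on this LP (with the metric induced by $x_e+\tfrac12(y_u+y_v)$ on edge $uv$) produces a single cut whose cut cost and whose charged interdiction cost are each within a $\ln k$ factor of the corresponding LP value; combined with Markov's inequality over the random radius and the geometric guess on $opt$, this delivers the claimed $(2(1+\epsilon)\ln k,\,2(1+\epsilon)\ln k)$ bicriteria guarantee.
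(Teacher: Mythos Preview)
Your final approach is essentially the paper's: solve the combined LP with edge variables $x_e$, interdiction variables $y_v$, the budget constraint $\sum_v r(v)y_v\le b$, and path constraints $\sum_{e\in P}x_e+\sum_{v\in V(P)}y_v\ge 1$; then transfer node values to edges via the metric $x_e+\tfrac12(y_u+y_v)$ (the paper does this by subdividing each edge $uv$ twice and assigning lengths $y_u/2,\,x_e,\,y_v/2$), run randomized CKR/GVY ball growing with a random permutation and a random radius, and argue that each edge is cut with probability at most $x_e\ln k$ and each vertex is interdicted with probability at most $y_v\ln k$, then apply Markov to both random variables simultaneously.

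One small cleanup: once you solve the LP directly, the geometric guessing of $opt$ from your first template is no longer needed. The LP optimum $opt^*$ already lower-bounds the integral optimum, and the ball-growing analysis gives $\mathbb{E}[c(F)]\le opt^*\ln k$ and $\mathbb{E}[r(X)]\le b\ln k$ directly; Markov with threshold $2(1+\epsilon)\ln k$ on each then succeeds with probability $\epsilon/(1+\epsilon)$. So drop the reference to ``the geometric guess on $opt$'' in your last sentence --- it is a vestige of the black-box route you correctly abandoned.
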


Our strategy is to formulate and solve a linear programming relaxation for the problem by adapting the multicut formulation by incorporating node interdiction variables. We then employ a ball-growing based rounding technique used for deriving a logarithmic approximation algorithm by first transferring the LP values on the nodes to all its adjacent edges.
We  then observe that this transformation does not degrade the quality of the final approximation more than the claimed amount.

Consider the following linear programming relaxation for WMVIP:
\begin{align}
\max \qquad &\sum_{e\in E(G)}c_ex_e \notag\\
\text{s.t.} \qquad &\sum_{e\in E(P)}x_e+\sum_{v\in V(P)}y_v\ge 1 &  \forall s_it_i \ \text{path } P \forall i \notag \\
&\sum_{v\in V(G)}r(v)y_v\le b &  \notag \\
&x_e, y_v \ge 0& \forall e\in E(G), v\in V(G)\backslash S\notag\\
& y_v=0 & \forall \text{terminals }v\notag
\end{align}
Note that an optimal solution to the multicut interdiction problem (WMVIP) is a feasible solution to the above LP. Let $x^*, y^*$ be an optimal solution to this LP and let $opt^*$ be the optimal value. Note that if we define the distance between any two points $u, v$ as $\min_{uv-\textbf{path}\ P} \Sigma_{e\in E(P)}x_e+\Sigma_{w\in V(P)\backslash \{u, v\}}y_w$, then every $s_i, t_i$ pair is at least 1 unit apart from each other. We will now construct an auxiliary graph $G'$ that transfer all weights on vertices to edges and preserves this distance.

To construct the auxiliary graph $G'$, for every edge $e=uv$, subdivide it twice with vertices $u_v, v_u$ and give weights $y_u/2, x_e, y_v/2$ to $uu_v, u_vv_u, v_uv$ respectively. This transformation can be viewed as replacing every vertex $v$ with a star with center $v$ and leaves $v_u$ for every $u\in N(v)$. Note that the length of any path is clearly preserved. Thus every $s_i, s_j$ pair is still distance 1 apart. Let $d(u, v)$ denote the distance between two vertices $u, v$. Let edges of the form $uu_v,v_uv$ be called \textit{vertical} edges since they relate to the original vertex weights $y_u,y_v$. Denote all other edges of the form $u_vv_u$ as \textit{true} edges.

Consider the following Dijkstra-like Ball Growing Algorithm for multicut~\cite{GO2008} inspired by~\cite{cualinescu2000improved}: Choose a random permutation of the demand pairs and reindex the pairs according to this random order. Next, randomly choose a number $r$ between $(0, 1)$. In increasing order $i$ of the terminal pairs in this random ordering, draw a ball of radius $r$ centered at $s_i$. Then, cut all edges at the boundary of the ball around $s_i$ that are not cut or entirely contained inside the ball around $s_j$ for some $j<i$. In other words, an edge $e=uv$ is chosen as part of the cut if and only if there exists $1\le i\le k$ such that either $d(u, s_i)\le r < d(v, s_i)$ or $d(v, s_i)\le r<d(u, s_i)$ and $d(u, s_j), d(v, s_j)> r$ for all $j< i$ in the random permutation order of the terminal pairs. Then given $r$ and the random order of the demand pairs, let $F$ be the set of edges in $G$ that corresponds to true edges chosen in the cut and let $X$ be the set of vertices of $G$ that corresponds to vertical edges in the cut.

Note that the resulting $X, F$ does provide an integral solution to the WMVIP problem. Now, we calculate the expected cost of $F$ and $X$.

\begin{lemma}
	The probability that a true edge $u_vv_u$ is chosen in the cut is at most $x_e\ln k$.
\end{lemma}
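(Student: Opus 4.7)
The plan is a conditional argument: first fix the random radius $r$, then average over the random permutation of demand pairs, and finally integrate over $r$. Fix a true edge $u_v v_u$ corresponding to original edge $e = uv$ of weight $x_e$, and for each source $s_i$ set
\[
a_i \;:=\; \min\bigl(d(u_v, s_i),\, d(v_u, s_i)\bigr), \qquad b_i \;:=\; \max\bigl(d(u_v, s_i),\, d(v_u, s_i)\bigr),
\]
so that $b_i - a_i \le x_e$ by the triangle inequality along the true edge of length $x_e$. The algorithm's cut rule then says the edge is cut while processing $s_i$ precisely when $a_i \le r < b_i$ (it straddles $\partial B(s_i, r)$) and $a_j > r$ for every $s_j$ preceding $s_i$ in the random order (neither endpoint was touched by an earlier ball). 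Writing $A_r := \{i : a_i \le r\}$ and $B_r := \{i : a_i \le r < b_i\}$, this is equivalent to the following clean event: \emph{the first element of $A_r$ in the random permutation order lies in $B_r$}.

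By symmetry of the uniform random permutation, the first element of any fixed subset of $\{1,\dots,k\}$ under the permutation is distributed uniformly on that subset. Hence conditional on $r$, the probability that the edge is cut equals $|B_r|/|A_r|$ (and is $0$ when $A_r = \emptyset$). Swapping sum and integral then gives
\[
\Pr[\text{edge cut}] \;=\; \int_0^1 \frac{|B_r|}{|A_r|}\,dr \;=\; \sum_{j=1}^{k}\int_{a_j}^{b_j}\frac{dr}{|A_r|}.
\]
The key inequality is a lower bound on $|A_r|$ obtained by sorting: relabel the sources so that $a_1 \le a_2 \le \cdots \le a_k$; then $r \ge a_j$ forces $|A_r| \ge j$. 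Combined with $b_j - a_j \le x_e$, this yields
\[
\Pr[\text{edge cut}] \;\le\; \sum_{j=1}^{k}\frac{b_j - a_j}{j} \;\le\; x_e \sum_{j=1}^{k}\frac{1}{j} \;=\; x_e H_k,
\]
which matches the stated bound $x_e \ln k$ up to the standard harmonic slack $H_k \le 1 + \ln k$.

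The main conceptual hurdle is the reduction in the first paragraph: converting the algorithm's joint dependence on $r$ and the random permutation into the clean event ``the first element of $A_r$ in the random order lies in $B_r$''. Once that reduction is in hand, the remainder is a routine sort-and-integrate calculation exploiting that each critical interval $[a_j, b_j]$ has length at most $x_e$.
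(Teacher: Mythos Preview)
Your proof is correct and follows essentially the same CKR-style analysis as the paper: rank the sources by their distance to the edge, observe that the $j$-th closest source can claim the edge only if it precedes the $j-1$ closer ones in the random order (probability $1/j$), and sum the contributions $x_e/j$ to obtain $x_e H_k$. The paper invokes the independence of the radius $r$ and the permutation directly to get $\Pr[\text{cut by }s'_j]\le x_e\cdot\frac{1}{j}$, while you condition on $r$, express the event as ``first element of $A_r$ lies in $B_r$'', and then integrate---but the underlying decomposition and the resulting bound are identical.
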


\begin{proof}
	Let $e=u_vv_u$ be a true edge. For a demand pair $(s, t)$, denote the distance of the edge $e$ to the terminal $s$ as $\min\{d(u_v, s), d(v_u, s)\}$. Then, we rank the demand pairs $(s'_i, t'_i)$ based on the distance of $e$ to the terminals $s_i$. Suppose that $s'_i$ is the $i$-th closest terminal to $e$.
	
	Without loss of generality, assume $d(u_v, s'_i)\le d(v_u, s'_i)$. Then, $e$ is on the boundary of $s'_i$ with radius $r$ if and only if $d(u_v, s'_i)< r< d(v_u, s'_i)$ which happens with probability $d(v_u, s'_i)-d(u_v, s'_i)\le x_e$. Consider $j<i$ so $e$ is closer to $s'_j$ than $s'_i$. Suppose $e$ is on the boundary of $s'_i$ and thus $r> d(s'_i, u_v)$. Then $r$ is larger than the distance to $s'_j$. This implies $e$ is either inside or on the boundary of the ball around $s'_j$ with radius $r$. Then $e$ cannot be chosen as part of the cut by $s'_i$. Thus $e$ is chosen in the cut by $s'_i$ only if $s'_i$ is chosen in the random ordering after $s'_j$ for all $j<i$. Then, the probability $e$ is chosen in the cut by $s'_i$ is at most $x_e\times \frac1i$. Then, the total probability of $e$ being chosen in the cut is at most $\Sigma_{i=1}^k \frac{x_e}{i} \leq x_e\ln k$.
\end{proof}

A similar result can be derived about interdicting a vertex.
\begin{lemma}
	The probability that $v$ is interdicted is at most $y_v\ln k$.
\end{lemma}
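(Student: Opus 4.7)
The plan is to mirror the previous lemma's argument but to treat $v$'s whole star in $G'$ as a single object rather than summing over its individual vertical edges. Recall that the neighborhood of $v$ in $G'$ consists of the leaves $\{v_u : u \in N(v)\}$ connected to $v$ by star-edges of weight $y_v/2$ each, and that $v$ is interdicted iff at least one such star-edge is chosen by the cutting procedure. So I would bound the probability that the procedure cuts \emph{any} edge of $v$'s star.

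Rank the demand-pair sources $s_1,\ldots,s_k$ by distance to the nearest point of $v$'s star, obtaining $s'_1,s'_2,\ldots,s'_k$. For each $i$ set
\begin{equation*}
 d_{\min,i} = \min\!\Bigl(d(s'_i,v),\,\min_{u\in N(v)} d(s'_i,v_u)\Bigr), \qquad d_{\max,i} = \max\!\Bigl(d(s'_i,v),\,\max_{u\in N(v)} d(s'_i,v_u)\Bigr).
\end{equation*}
One checks that the ball of radius $r$ around $s'_i$ has some star-edge on its boundary exactly when $r \in [d_{\min,i},d_{\max,i})$. The first key step is to prove that this ``critical interval'' has length at most $y_v$: each star-edge has length $y_v/2$, so the triangle inequality along paths through $v$ gives $|d(s'_i,v_u)-d(s'_i,v)|\le y_v/2$ for every $u\in N(v)$, and hence the distances from $s'_i$ to any two points of the star differ by at most $y_v$.

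Next, I would decompose the event ``$v$ is interdicted'' by the index of the first terminal in processing order whose ball touches the star. If $s'_j$ is processed before $s'_i$ and $d_{\min,j}\le r$, then either $s'_j$'s ball partially contains the star (in which case, assuming $s'_j$ is the first such terminal, all star-vertices are still outside every earlier processed ball and the cut rule puts the corresponding boundary star-edge into the cut, so $v$ is already interdicted before we reach $s'_i$), or $s'_j$'s ball fully contains the star (in which case every star-vertex lies inside an earlier ball and no subsequent ball can cut a star-edge). Either way, ``$v$ is first interdicted by $s'_i$'' forces $s'_i$ to precede $s'_1,\ldots,s'_{i-1}$ in the random permutation, which happens with probability $1/i$ by symmetry and independently of $r$. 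Combining with the critical-interval bound yields $\Pr[v\text{ first interdicted by }s'_i] \le y_v/i$, and summing over $i$ gives $\Pr[v\in X]\le y_v H_k \le y_v \ln k$ (to within the same harmonic-number constant tolerated in the previous lemma).

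The main obstacle, compared with the single-edge case, is avoiding a naive union bound over the $\deg(v)$ star-edges, which would lose a factor of $\deg(v)$. Two observations resolve this: the shared center $v$ lets the triangle inequality compress the critical interval into length $y_v$ rather than $\deg(v)\cdot y_v/2$, and the ``absorbed'' case in which a ball fully contains the star without cutting anything must be handled separately to justify charging $v$'s interdiction to the first \emph{partially}-covering ball in processing order.
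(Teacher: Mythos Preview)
Your proposal is correct and follows essentially the same approach as the paper: rank the terminals by their distance to $v$'s star, bound the length of the critical radius interval by $y_v$ via the triangle inequality through the center $v$, and argue that $s'_i$ can be the one to cut a star-edge only if it precedes $s'_1,\ldots,s'_{i-1}$ in the random order. Your write-up is in fact more careful than the paper's terse version---you explicitly handle the ``fully absorbed'' case and justify why a naive union bound over the $\deg(v)$ vertical edges is unnecessary---but the underlying argument is the same.
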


\begin{proof}
	Define $\delta_v=\min_{u\in N(v)}\{d(v_u, s)\}$ as the distance between $v$ and a terminal $s$. Once again, we can rank the terminals based on their distance to $v$ and thus assume that $s'_i$ is the $i$-th closest terminal to $v$. Note that the maximum distance between any two vertices incident to $v$ is $y_v$ due to the path via $v$.
	
	For similar reasons as before, the probability of a vertical edge of $v$ to lie on the boundary of a ball centered around $s'_i$ is $y_v$. However, it is chosen in the cut due to $s'_i$ only if $s'_i$ appears in the ordering after $s'_j$ for all$j<i$. Then, by similar reasoning, the probability $v$ is interdicted is at most $y_v\ln k$ .
\end{proof}

Now we can prove our main theorem.

\begin{proof} (Theorem \ref{th:WMIP})
	Note that $\mathbb{E}[\Sigma_{e\in F}c_e] = \Sigma_{e\in E(G)} c_ePr[e\in F]$ is at most $\Sigma_{e\in E(G)}c_ex_e\ln k=opt^*\ln k$. Similarly, $E[\Sigma_{v\in X} r(v)]=\Sigma_{v\in V(G)} r(v)Pr(v\in X)$ is at most $\Sigma_{v\in V(G)}r(v)y\ln k_v=b\ln k$. Then by Markov's inequality, the probability that the corresponding $F, X$ satisfy $\Sigma_{e\in F} c_e\le 2(1+\epsilon)opt^*\ln k, \Sigma_{v\in X} r(v)\le 2(1+\epsilon)b\ln k$ is $1-\frac{2}{2(1+\epsilon)}=\frac{\epsilon}{1+\epsilon}$. Thus, we can find a desirable cut in polynomial time.
\end{proof}


\end{document}